\documentclass{layout}
\usepackage{amssymb}
\usepackage{graphicx}
\usepackage[colorlinks=true, linkcolor=blue, citecolor=webgreen]{hyperref}
\usepackage{color}
\usepackage{bbm}
\usepackage{eurosym}
\usepackage[round]{natbib}
\usepackage{IEEEtrantools} 
\newcommand{\R}{\mathbb{R}}
\newcommand{\E}{\mathbb{E}}
\newcommand{\G}{\mathbb{G}}
\newcommand{\N}{\mathbb{N}}

\newcommand{\A}{\mathcal{A}}
\newcommand{\C}{\mathcal{C}}
\renewcommand{\S}{\mathcal{S}}
\newcommand{\F}{\mathcal{F}}
\newcommand{\cR}{\mathcal{R}}

\newcommand{\B}{\mathcal{B}}
\newcommand{\cL}{\mathcal{L}}

\renewcommand{\L}{\mathrm{L}}

\newcommand{\cC}{\mathcal{C}}
\renewcommand{\P}{\mathbb{P}}
\newcommand{\D}{\mathcal{D}}
\newcommand{\X}{\mathcal{X}}
\newcommand{\cK}{\mathcal{K}}

\newcommand{\e}{\varepsilon}
\newcommand{\J}{\mathrm{L^{1}}}
\newcommand{\K}{\mathrm{L^{\infty}}}
\newcommand{\pas}{\P\text{-a.s.}}
\newcommand{\pae}{\P\text{-a.e.}}
\renewcommand{\d}{\;dt}
\newcommand{\q}{\quad\pas}

\newcommand{\xqed}[1]{%
  \leavevmode\unskip\penalty9999 \hbox{}\nobreak\hfill
  \quad\hbox{\ensuremath{#1}}}
  
  \def\ignore#1{}
\def\b1{\mathbbm 1}
\def\Om{\Omega}
\def\om{\omega}
\DeclareMathOperator*{\ess}{ess\,sup}
\newtheorem{Theo}{Theorem}[section]
\newtheorem{Lem}[Theo]{Lemma}
\newtheorem{Prop}[Theo]{Proposition}
\newtheorem{Cor}[Theo]{Corollary}
\theoremstyle{definition}

\newtheorem{Ass}[Theo]{Assumption}

\definecolor{webgreen}{rgb}{0,.5,0}
\theoremstyle{rem}
\newtheorem{rem}[Theo]{Remark}

\numberwithin{equation}{section}


\allowdisplaybreaks
\begin{document}

\title{Regularity properties in a state-constrained expected utility maximization problem 
}

\author{Mourad Lazgham}

\thanks{The author acknowledges support by Deutsche Forschungsgemeinschaft through the Grant SCHI 500/3-1.}


\dedicatory{\emph{Department of Mathematics, University of Mannheim,  Germany}}

\keywords{Expected utility maximization problem, value function, price impact, optimal strategy, dynamic programming principle, Bellman's principle.}

\begin{abstract} 
We consider a stochastic optimal control problem 
in a market model with temporary and permanent price impact, which is related to an \emph{expected utility maximization} problem under finite fuel constraint. 
We establish the initial condition fulfilled by 
the corresponding value function and show its first regularity property. 
Moreover,
we can prove the existence and uniqueness of optimal strategies under rather mild model assumptions. On the one hand, this result is of independent interest. On the other hand, it will then allow us to derive further regularity properties of the corresponding value function, in particular its continuity and partial differentiability. As a consequence of the continuity of the value function, we will prove the dynamic programming principle without appealing to the classical measurable selection arguments.  \\
\end{abstract}

\maketitle

\section{Introduction}
The purpose of this paper is to investigate optimal control problems originating from a classical portfolio liquidation problem for more general 
utility functions than exponential ones. Our particular focus will be on utility functions with bounded Arrow-Pratt coefficient of absolute risk aversion. We show the existence and uniqueness of the corresponding optimal strategy, 
which is no longer deterministic in this general setting. This result then helps us to derive regularity properties of the associated value function.
\indent 
A dynamic execution strategy that minimizes expected cost was first derived in \citet{BL98}.
However, as illustrated, for instance, by the 2008 Soci\'et\'e G\'en\'erale  trading loss, we have to add to execution costs the volatility risk incurred when trading. 
This extension and the corresponding mean-variance maximization problem was treated in \citet{AC01}, in a discrete-time framework, 
where the execution costs are assumed to be linear and are split into a temporary and a permanent price impact component. 
Nevertheless, linear execution costs do not seem to be a realistic assumption in practice, as argued in \citet{A03}, and it may be reasonable to consider a nonlinear temporary impact function. 
As opposed to the temporary impact, the permanent impact has  to be linear in order to avoid quasi-arbitrage opportunities, as shown in \citet{HS04}. 
The mean-variance approach  can also be regarded as an expected-utility maximization problem for an investor with constant absolute risk aversion, which was in part solved by \citet{SST10}, where the existence and uniqueness of an optimal trading strategy, which is moreover deterministic, is proved. The latter one can be computed by solving a nonlinear Hamilton equation. Furthermore, the corresponding value function is the unique classical solution of a nonlinear degenerated  Hamilton-Jacobi-Bellman equation with singular initial condition. 
\\\indent In this paper, we generalize this framework by considering utility functions that lie between two exponential utility functions (also called CARA utility functions). 
This case was already studied for infinite-time horizons in a one-dimensional framework with linear temporary impact without drift; see \citet{SS09}, as well as \citet{S08}, where the optimal trading strategy is characterized as the unique bounded solution of a classical fully nonlinear parabolic equation. 
It was shown that the optimal liquidation strategy is Markovian and a feedback form was given. Moreover, the optimal strategy is deterministic if and only if the utility function is an exponential function. The derivation of the above results is due to the fact that,  when considering infinite time horizon, the (transformed) optimal strategy solves a \emph{classical} parabolic PDE, because the time parameter does not appear in the equation. 
In this article, we address the question of deriving the optimal liquidation strategy for the \emph{finite-time} horizon. Here we face the difficulty that commonly used change of measure techniques, involving the Dol\'eans-Dade exponential, simply go out the window. Due to this failure, 
we have to think differently and to extend our consideration to solutions that are no longer classical ones. 
\\\indent
Our  first main result 
deals with the existence and uniqueness of the optimal strategy. The proof of this result is mainly an analytical one and only requires the boundedness of the Arrow-Pratt coefficient of risk aversion of the utility function. As a direct consequence of this theorem, we can show that the associated value function is continuously differentiable in its revenues parameter (and even twice continuously differentiable if the utility function is supposed to have a convex and decreasing derivative; this condition is fulfilled if, e.g., the utility function is a convex combination of exponential utility functions). 

 After setting up our framework in Section 2.1 and making clearer our definition of utility functions with exponential growth, 
    we prove the concavity property and the initial condition fulfilled by the value function (Section 2.2). Our main results on the existence and uniqueness of the optimal strategy is given in Theorem \ref{eos}. The derivation of both results is split into several technical steps (see Section 2.3 and Section 2.4, respectively). With this at hand, we can derive the differentiability property of the value function in the revenues parameter (Theorem \ref{v_r}). The relatively involved proof of the continuity property (stated in Theorem \ref{cv}) will also follow from Theorem \ref{eos}. Using the continuity property of the value function, we conclude by establishing the underlying Bellman principle (Theorem \ref{bp}). In its proof we  face  measurability issues, and we have to restrict ourselves to considering the Wiener space to make matters clearer. This will be carried out without  referring to  measurable selection arguments, typically used in proofs of the dynamic programming principle where no a priori regularity of the value function is known to hold; see, e.g.,  \citet{M66} or  \citet{W80}, \citet{RU78}. Note that in most of the literature where the Bellman principle is related to stochastic control problems, its (rigorous) proof is simply omitted, or the reader is referred to the above literature. When the value function is supposed to be continuous, an easier version of its proof can be found in \citet{K08} or \citet{SB78}: this is however not directly applicable in our context, since we have to deal, among others, with a finite fuel constraint. 
\section{Main results}
\subsection{Modeling framework}

Let $(\Omega, \F, \P)$ be a probability space with a filtration $(\F_t)_{0\leq t\leq T}$ satisfying the usual conditions. Taking $X_0\in\R^d$, we consider a stochastic process $X_t=(X_t^1,\dots,X_t^d)$ starting in $X_0$ at time $t=0$ that has to fulfill the boundary condition $X_T=0$. For example, we can think of a basket of shares in $d$ risky assets  an investor can choose to liquidate a large market order, where we describe by $X^i_t$ the number of shares of the $i$-th asset held at time $t$. Following the notation in \citet{SSB08}, we denote by
\begin{equation}
\cR_T^{X}=R_0+\int_0^T X_t^\top\sigma\;dB_t +\int_0^T b\cdot X_t \;dt-\int_0^T f(\dot{X}_t)\;dt
\label{rp}
\end{equation}
the revenues over the time interval $[0,T]$ associated to the process $X$.  Here $R_0\in\R$, $B$ is a standard $m$-dimensional Brownian motion starting in $0$ with drift $b\in\mathbb{R}^d$ and volatility matrix $\sigma=(\sigma^{ij})\in\R^{d\times m}$, and the  nonnegative, strictly convex function $f$ has superlinear growth and satisfies the two conditions $$\lim_{|x|\longrightarrow \infty} \tfrac{f(x)}{|x|}=\infty \quad\text{ and }f(0)=0.$$
 Further, we assume that the drift vector $b$ is orthogonal to the kernel of the covariance matrix $\Sigma =\sigma\sigma ^\top$, which guarantees that there are no arbitrage opportunities for a 'small investor' whose trades do not move asset prices. The revenues processes can be interpreted economically: $R_0$  can be viewed as the face value of the portfolio (which can include a permanent price impact component), the stochastic integral models the accumulated volatility risk, whereas the second integral represents the linear drift applied to our state process. The last term stands for the cumulative cost of the temporary price impact.
Further, by
\begin{equation*}
\X_{det}(T, X_0)=\left\{X:[0,T]\rightarrow \R^d \; \text{ absolutely continuous,}\; X_0\in\R^ d\;\text{and}\; X_T=0 \right\}
\end{equation*}
we denote the set of the deterministic processes whose speed liquidation processes $\dot{X}_t$ are defined $\lambda$-a.e., where $\lambda$ is the Lebesgue-measure on $[0,T]$. Analogously, by
\begin{eqnarray*}
\lefteqn{\X(T, X_0):=}\\
&\big\{ (X_t)_{t\in[0, T]} \text{ adapted},\; t\rightarrow X_t \in \X_{det}(T, X_0), \text{ a.s.,~ and}\; \sup_{0\leq t\leq T} |X_t| \in \K (\P)\big\}
\end{eqnarray*}
we denote the set of the $\P\otimes\lambda$-a.e.~bounded stochastic processes whose speed liquidation processes $\dot{X}_t$ can be defined $\P\otimes\lambda$-a.e., due to absolute continuity.
\begin{rem}
From a hedging point of view, the absolute continuity of $X$ seems to be very restrictive, since this does not englobe the Black-Scholes Delta hedging, for example. However, from a mathematical point of view, this serves as a reasonable starting point for developing a theory of optimal control problems for functions with bounded variation. 
\xqed\diamondsuit
\end{rem}
It will be convenient to parametrize elements in $\X(T, X_0)$ as  in \citet{SSB08}.  Toward this end, for $\xi$ progressively measurable and $\xi_t$ with values in $\R^d$, for $t\leq T$, let us denote by
$$
\dot{\X}_0(T, X_0)=\Big\{ \xi \,|\, X_t=X_0-\int_0^t \xi_s\;ds \text{ a.s.~for }\; X\in \X(T,X_0) \Big\}
$$
the set of \emph{control processes} or speed processes of a given process $X$. 
From now on we will write $\cR^{\xi}$ for the revenues process associated to a given $\xi\in\dot{\X}_0(T, X_0)$, to insist on the dependence on $\xi$. The pair $(X^\xi,\cR^\xi)$  is then the solution of the following controlled stochastic differential equation:
\begin{equation}
\begin{cases}
                d\cR^{\xi}_{t}=X^{\top}_{t}\sigma dB_{t} +b\cdot X_{t}\,dt-f(-\xi_{t})\,dt,\\
                dX_t=-\xi_t\, dt,\\
                  \cR^{\xi}_{\arrowvert t=0}=R_{0}\;\text{and}\; X_{\arrowvert t=0}=X_0.
\label{sde1}\end{cases}
\end{equation}
We denote by $\dot{\X}(T, X_0)$  the subset of all control processes $\xi\in \dot{\X}_0(T, X_0)$ that satisfy the 
additional requirement 
\begin{equation}
\E\bigg[\int_{0}^{T}\big(X^\xi_t\big)^\top\Sigma X^\xi_t + |b\cdot X^\xi_t-f(\xi_t)|+|\xi_t|\,dt\bigg]<\infty.\label{ias}
\end{equation}
For convenience, we enlarge the preceding set $\dot{\X}(T, X_0)$  by introducing the notation $\dot{\X}^1(T,X_0)$ for the set of the liquidation strategies whose paths satisfy \eqref{ias}, but are not necessarily uniformly bounded:
\begin{eqnarray*}
\lefteqn{\dot{\X}^1(T,X_0)}\\
&:=&\Big\{\xi\,\big|\,\Big(X^\xi_t:=X_0-\int_0^t \xi_s\;ds\Big)_{t\in[0,T]}\text{ adapted, } t\rightarrow X^\xi_t(\omega) \in\X_{det}(T, X_0)\, \P\text{-a.s.}\Big\}\\
&\bigcap& \Big\{\xi\,\big|\,\E\bigg[\int_{0}^{T}\big(X^\xi_t\big)^\top\sigma X^\xi_t + |b\cdot X^\xi_t-f(\xi_t)|+|\xi_t|\,dt\bigg]<\infty\Big\},
\end{eqnarray*}
which is clearly a subset of $\dot{\X}(T, X_0).$
The maximization problem can thus be written in the form
\begin{equation}
\sup_{\xi\in\dot{\X}^1(T, X_0)}\E\left[u\left(\cR_{T}^{\xi}\right)\right].\label{max1}
\end{equation}

In this paper, we will consider a special class of utility functions.  These functions will have a bounded Arrow-Pratt coefficient of absolute risk aversion, i.e., we will suppose that there exist two positive constants $A_i,i=1,2,$ such that
\begin{equation}
0< A_1\leq-\frac{u''(x)}{u'(x)}\leq A_2, \quad\forall x\in\R.\label{apc}
\end{equation} 
This inequality implies that we can assume w.l.o.g. that $0<A_1<1<A_2$,
which gives us the following estimates
 \begin{equation}\exp(-A_1 x)\leq u'(x)\leq \exp(-A_2x)+1\quad \text{ for } x\in\R.\label{ieu'}\end{equation}
 and
\begin{equation}
u_1(x):=\frac1{A_1}-\exp(-A_1x)\geq u(x)\geq -\exp(-A_2x)=: u_2(x).\label{ubd1}
\end{equation}
From \cite{SST10} we know  that for exponential utility functions (that is,  utility functions of the form $a-b\exp(-c x)$, where $a\in\R$ and $b,\;c>0$) there exists a unique deterministic and continuous strategy solving the maximization problem \eqref{max1}. Moreover, the corresponding value function, i.e., the value function generated by the exponential expected-utility maximization problem, is the unique continuously differentiable solution of a Hamilton-Jacobi-Bellman equation. We will use this strong result  to establish the existence of an optimal control under the condition \eqref{ubd1}.
\noindent Here, we will study the regularity properties of the following value function:
\begin{equation}
V(T,X_0,R_0)=\sup_{\xi\in\dot{\X}^1(T, X_0)}\E\left[u\left(\cR_{T}^{\xi}\right)\right],\label{omp}
\end{equation}
where the utility function $u$ satisfies \eqref{ubd1}.
 Note that the corresponding estimates yield the following bounds for our value function
\begin{equation}
\sup_{\xi\in\dot{\X}^1(T, X_0)}\E\left[u_1\left(\cR_{T}^{\xi}\right)\right] \geq\sup_{\xi\in\dot{\X}^1(T, X_0)}\E\left[u\left(\cR_{T}^{\xi}\right)\right]\geq \sup_{\xi\in\dot{\X}^1(T, X_0)}\E\left[u_2\left(\cR_{T}^{\xi}\right)\right],
\label{max3}
\end{equation}
whence
\begin{equation}
V_1(T,X_0,R_0)=\E\left[u_1\left(\cR_{T}^{\xi^*_1}\right)\right]\geq V(T,X_0,R_0)\geq \E\left[u_2\left(\cR_{T}^{\xi_2^*}\right)\right]=V_2(T,X_0,R_0)\label{vfs},
\end{equation}
where $V_i, \;i=1,2,$ denote the corresponding exponential value functions and $\xi^*_i, \;i=1,2,$ are the corresponding optimal strategies.
\subsection{Concavity property and initial condition satisfied by the value function}
 The aim of this subsection is to prove that the  map
\[
(X,R)\longmapsto V(T,X,R)
\]
is concave, for fixed $T\in[0,\infty[$, and to derive the initial condition satisfied by $V$, where $V$ is the value function of the optimization problem as defined in \eqref{omp}. These are  fundamental properties of the value function of the considered maximization problem.\\
 \indent We start by proving the following proposition which establishes the first regularity property of the value function: the concavity of the value function in the revenues parameter, with  $T,X_0\in\;]0,\infty[\times\R^d$ being fixed. This will enable us later to prove the differentiability of the value function in the revenues parameter, other parameters being fixed, with the help of the existence of an optimal strategy. 
\begin{Prop}\label{cvf}
For fixed $T\in\;]0,\infty[$,
\[
(X,R)\longmapsto V(T,X,R)
\]
is a concave function.
\end{Prop}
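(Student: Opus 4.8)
The plan is to verify the two-point inequality defining concavity directly from the structure of the control problem, exploiting that the revenues functional is jointly concave in the initial data and the control while the utility $u$ is concave and increasing. Fix $T\in\,]0,\infty[$, two initial pairs $(X_0^1,R_0^1),(X_0^2,R_0^2)\in\R^d\times\R$ and $\lambda\in[0,1]$, and write $X_0^\lambda:=\lambda X_0^1+(1-\lambda)X_0^2$ and $R_0^\lambda:=\lambda R_0^1+(1-\lambda)R_0^2$. The aim is
\[
V(T,X_0^\lambda,R_0^\lambda)\;\geq\;\lambda\,V(T,X_0^1,R_0^1)+(1-\lambda)\,V(T,X_0^2,R_0^2).
\]
Since optimal strategies are not yet available at this stage (their existence is the content of Theorem~\ref{eos}), I would argue with $\e$-optimal controls. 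Assuming both right-hand values are finite (otherwise the inequality is trivial, the left-hand side being bounded below via \eqref{vfs}), for $\e>0$ pick $\xi^i\in\dot\X^1(T,X_0^i)$ with $\E[u(\cR_T^{\xi^i})]\geq V(T,X_0^i,R_0^i)-\e$ for $i=1,2$, and set $\xi^\lambda:=\lambda\xi^1+(1-\lambda)\xi^2$.

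The first step is to check that $\xi^\lambda\in\dot\X^1(T,X_0^\lambda)$. Because the state map $\xi\mapsto X^\xi$ is affine in $(X_0,\xi)$, one has $X_t^{\xi^\lambda}=\lambda X_t^{\xi^1}+(1-\lambda)X_t^{\xi^2}$ for all $t$; in particular $X_T^{\xi^\lambda}=0$, so the fuel constraint is met, and progressive measurability is inherited. Admissibility in the sense of \eqref{ias} then follows from convexity: one has $|\xi^\lambda|\leq\lambda|\xi^1|+(1-\lambda)|\xi^2|$, while the quadratic form $x\mapsto x^\top\Sigma x$ and the cost $f$ are convex, so each integrand in \eqref{ias} is dominated by the corresponding convex combination of the integrands attached to $\xi^1$ and $\xi^2$, which are integrable by assumption.

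The core of the argument is the pointwise revenue inequality
\[
\cR_T^{\xi^\lambda}\;\geq\;\lambda\,\cR_T^{\xi^1}+(1-\lambda)\,\cR_T^{\xi^2}\q,
\]
obtained by substituting $X^{\xi^\lambda}=\lambda X^{\xi^1}+(1-\lambda)X^{\xi^2}$ and $R_0^\lambda=\lambda R_0^1+(1-\lambda)R_0^2$ into \eqref{rp}: the stochastic integral and the drift integral are linear in $X$ and hence split exactly, while the cost term contributes $-f(\xi^\lambda)\geq-\lambda f(\xi^1)-(1-\lambda)f(\xi^2)$ by convexity of $f$. Since $u$ is increasing (by \eqref{ieu'}, $u'>0$) and concave (by \eqref{apc}, $u''<0$), monotonicity gives $u(\cR_T^{\xi^\lambda})\geq u\big(\lambda\cR_T^{\xi^1}+(1-\lambda)\cR_T^{\xi^2}\big)$ and concavity then gives $u\big(\lambda\cR_T^{\xi^1}+(1-\lambda)\cR_T^{\xi^2}\big)\geq\lambda u(\cR_T^{\xi^1})+(1-\lambda)u(\cR_T^{\xi^2})$, both $\pas$ Taking expectations, using $\xi^\lambda$ as an admissible competitor on the left, and letting $\e\downarrow0$ yields the claim.

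The steps that require genuine care — and the only place I would expect real friction — are the admissibility verification of $\xi^\lambda$ under \eqref{ias} and the clean handling of $\e$-optimal controls in place of genuine optimizers, since at this point in the paper no optimal strategy has yet been produced. Everything else reduces to the affineness of the state dynamics together with the convexity of $f$ and the monotonicity and concavity of $u$.
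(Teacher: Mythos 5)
Your proposal is correct and follows essentially the same route as the paper: form the convex combination $\lambda\xi^1+(1-\lambda)\xi^2$ of admissible strategies, use the concavity of $\xi\mapsto\cR_T^{\xi}$ (linearity of the stochastic and drift integrals plus convexity of $f$) together with the monotonicity and concavity of $u$, and pass to the suprema. The only difference is bookkeeping — you use $\e$-optimal strategies where the paper takes iterated suprema (first over $\xi$, then over $\overline{\xi}$) — and these are equivalent; your admissibility check of the combined strategy is in fact slightly more careful than the paper's, which asserts it without verification.
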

\begin{proof}

Toward this end, let $X,\overline{X}\in\R^d,R,\overline{R}\in\R$ and $\lambda\in\;]0,1[$. Further, consider the strategies $\xi\in\dot{\X}^1(T,X)$ and $\overline{\xi}\in\dot{\X}^1(T,\overline{X})$. Note that $\lambda \xi + (1-\lambda) \overline{\xi}\in\dot{\X}(T,\lambda X+(1-\lambda)\overline{X})$. Let us denote 
\[
\cR^{\lambda\xi+(1-\lambda)\overline{\xi}}_T:=\int_0^T (X^{\lambda\xi+(1-\lambda)\overline{\xi}}_t)^\top\sigma\,dB_t+ \int^T_0b\cdot X^{\lambda\xi+(1-\lambda)\overline{\xi}}_t\,dt-\int^T_0f(-\lambda\xi+(1-\lambda)\overline{\xi}_t)\,dt.\\
\]
We then have for fixed $\xi,\overline{\xi}$:
\begin{align*}
\lefteqn{
V(T,\lambda X+(1-\lambda)\overline{X},\lambda R+(1-\lambda)\overline{R}))}&\\
 &\geq\E\big[u\big(\lambda R+ (1-\lambda) \overline{R}+ \cR^{\lambda\xi+(1-\lambda)\overline{\xi}}_T\big)\big]\\
&\geq \E\big[u\big(\lambda R+ (1-\lambda) \overline{R})+ \lambda \cR^{\xi}_T+(1-\lambda) \cR^{\overline{\xi}}_T\big)\big]\\
&\geq \lambda  \E\big[u\big( R+\cR^{\xi}_T\big)\big]+(1-\lambda) \E\big[u\big(\overline{R}+ \cR^{\overline{\xi}}_T\big)\big],
\end{align*}
where the first inequality is due to the definition of the value function $V$ at\\ $(\lambda X+(1-\lambda)\overline{X},\lambda R+(1-\lambda)\overline{R})$, and the second one follows from the fact that $\xi \mapsto \cR^\xi_T$ is concave and $u$ is increasing. Finally, the third one is due the concavity of $u$. Taking now the supremum over $\xi$ ($\overline{\xi}$ being fixed), we obtain
\begin{equation*}
V(T,\lambda X+(1-\lambda)\overline{X},\lambda R+(1-\lambda)\overline{R}))\geq \lambda  V(T,X,R)+(1-\lambda) \E\big[u\big(\overline{R}+ \cR^{\overline{\xi}}_T\big)\big].
\end{equation*}
Taking the supremum over $\overline{\xi}$ in the preceding equation, we  obtain
\[
V(T,\lambda X+(1-\lambda)\overline{X},\lambda R+(1-\lambda)\overline{R}))\geq \lambda V(T,X,R)+ (1-\lambda)V(T,X,\overline{R}),
\]
which yields the assertion.
 \end{proof}
Further, we establish the initial condition fulfilled by the value function. 
\begin{Prop}\label{icv}
Let $V$ be the value function of the maximization problem \eqref{omp}. Then $V$ fulfills the following initial condition
\begin{align}
V(0,X,R)= \lim_{T\downarrow 0}V(T,X,R)&=\begin{cases}
                                             u(R),& \text{if}\; X=0,\\
                                     -\infty,& \text{otherwise}.\\
                                    \end{cases}\label{hjbic}
           \end{align}
\end{Prop}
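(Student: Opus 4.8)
The plan is to reduce both cases of \eqref{hjbic} to a single uniform upper bound on $\E[\cR_T^\xi]$ and then pass through Jensen's inequality, using two ingredients. First, the superlinear growth of $f$ yields that for every $M>0$ there is a constant $C_M>0$ with $|x|\le \tfrac1M f(x)+C_M$ for all $x\in\R^d$ (take $C_M$ so large that $f(x)/|x|\ge M$ whenever $|x|\ge C_M$). Second, for $\xi\in\dot{\X}^1(T,X_0)$ the admissibility requirement \eqref{ias} makes the volatility integral $\int_0^T (X_t^\xi)^\top\sigma\,dB_t$ a square-integrable martingale and guarantees $\cR_T^\xi\in\L^1(\P)$; hence, writing $K:=\int_0^T f(-\xi_t)\,dt\ge 0$, one has $\E[\cR_T^\xi]=R_0+\E\big[\int_0^T b\cdot X_t^\xi\,dt\big]-\E[K]$.

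I would then control the drift by the cost. From $|X_t^\xi|\le |X_0|+\int_0^T|\xi_s|\,ds$ and, by the superlinearity estimate applied to $-\xi_s$, $\int_0^T|\xi_s|\,ds\le \tfrac1M K+C_M T$, it follows pathwise that $\int_0^T b\cdot X_t^\xi\,dt\le |b|\,T\big(|X_0|+\tfrac1M K+C_M T\big)$. Taking expectations and choosing $M$ so large that $|b|T/M\le\tfrac12$ for all small $T$ (the drift term being absent when $b=0$), I obtain the bound, uniform in $\xi$,
\[
\E[\cR_T^\xi]\le R_0+|b|\,T|X_0|+|b|C_M T^2-\tfrac12\,\E[K].
\]

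For $X_0=0$ the right-hand side is at most $R_0+|b|C_M T^2$, so $\sup_\xi\E[\cR_T^\xi]\le R_0+|b|C_M T^2$. Jensen's inequality and the monotonicity of $u$ give $\E[u(\cR_T^\xi)]\le u(\E[\cR_T^\xi])\le u(R_0+|b|C_M T^2)$ for every admissible $\xi$, whence $\limsup_{T\downarrow 0}V(T,0,R_0)\le u(R_0)$ by continuity of $u$; the matching lower bound is immediate, since $\xi\equiv 0$ is admissible, keeps $X_t\equiv 0$, and produces $\cR_T^\xi=R_0$, so that $V(T,0,R_0)\ge u(R_0)$. For $X_0\ne 0$, the boundary conditions force $\tfrac1T\int_0^T(-\xi_t)\,dt=-X_0/T$, so Jensen's inequality for the convex $f$ gives the pathwise lower bound $K\ge T\,f(-X_0/T)$ a.s. Substituting into the displayed estimate, $\E[\cR_T^\xi]\le R_0+|b|T|X_0|+|b|C_M T^2-\tfrac12\,T f(-X_0/T)$ uniformly in $\xi$, and superlinearity gives $T f(-X_0/T)=|X_0|\,\tfrac{f(-X_0/T)}{|X_0|/T}\to\infty$ as $T\downarrow 0$; hence $\sup_\xi\E[\cR_T^\xi]\to-\infty$. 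Using $V(T,X_0,R_0)\le u\big(\sup_\xi\E[\cR_T^\xi]\big)$ together with the upper bound $u(x)\le u_1(x)=\tfrac1{A_1}-\exp(-A_1 x)\to-\infty$ from \eqref{ubd1} yields $V(T,X_0,R_0)\to-\infty$.

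The step I expect to be the main obstacle is the control of the drift term: a priori an investor might try to use the drift $b$ to offset the liquidation cost that explodes as $T\downarrow 0$, and the whole argument hinges on the superlinearity estimate showing that over a short horizon the drift contribution is only a vanishing fraction of the cost $K$. The remaining points are technical: one must verify $\cR_T^\xi\in\L^1(\P)$ so that Jensen applies and the stochastic integral has zero mean — both consequences of \eqref{ias} — and check that these bounds are genuinely uniform in $\xi$ so that they survive the supremum defining $V$.
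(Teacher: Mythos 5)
Your proof is correct, but it proceeds by a genuinely different route from the paper's on both branches. For $X\neq 0$, the paper makes no direct estimate at all: it invokes the squeeze \eqref{vfs} and quotes from \citet{SST10} the singular initial behaviour of the two bounding CARA value functions, whereas you derive the blow-up from scratch, using the fuel constraint $\int_0^T\xi_t\,dt=X_0$ together with Jensen's inequality for the convex $f$ to get the pathwise bound $\int_0^T f(-\xi_t)\,dt\geq T f(-X_0/T)$, and then superlinearity of $f$ to force $Tf(-X_0/T)\to\infty$. For $X=0$, both arguments run through Jensen's inequality for $u$, but the drift is neutralized by different devices: the paper integrates by parts, $\int_0^T b\cdot X^\xi_t\,dt=\int_0^T t\,b\cdot\xi_t\,dt$ for round trips, and applies the Fenchel--Young inequality, so that drift minus cost is dominated by $\int_0^T f^*(-bt)\,dt\to 0$ (which needs finiteness and continuity of the Legendre transform $f^*$); you instead absorb the drift into a small multiple of the cost via the estimate $|x|\leq\frac1M f(x)+C_M$ and then discard the resulting nonpositive cost term. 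Your approach buys self-containedness (no appeal to the results of \citet{SST10} nor to properties of $f^*$), a single uniform estimate covering both cases, quantitative rates such as $V(T,0,R)\leq u\big(R+|b|C_MT^2\big)$, and a squeeze that avoids the paper's appeal to monotonicity of $V$ in $T$ to ensure existence of the limit; the paper's version is shorter where it can lean on known CARA results, and its Fenchel--Young step is the more standard convex-duality device. The technical points you flag (zero mean of the stochastic integral, $\cR^\xi_T\in\L^1$ so that Jensen applies, and uniformity in $\xi$) are indeed consequences of \eqref{ias}, and are treated at the same level of rigor as in the paper's own proof, so there is no gap.
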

\begin{proof}

We first note that if $X\neq 0$, then 
\[
\lim_{T\rightarrow 0} V(T,X,R)=-\infty,
\]
because $V$ is supposed to lie between two CARA value functions which tend to $-\infty$ as $T$ goes to zero, if $X\neq 0$ (see \citet{SST10}). Suppose now that $X=0$. We want to show that
\[
\lim_{T\rightarrow 0} V(T,0,R)=u(R).
\]
Observe first that
\[
V(T,0,R)\geq \E\Big[u\Big(\cR^\xi_T\Big)\Big]=u(R),
\]
by choosing the strategy $\xi_t=0\;\text{for all}\; t\in[0,T],\;T>0$.  Since $V$ is increasing in $T$, for fixed $X,R$, the limit $\lim_{T\rightarrow 0} V(T,X,R)$ exists, which implies that
\[
\lim_{T\rightarrow 0} V(T,0,R)\geq u(R).
\]
We now prove the reverse inequality
\begin{equation}
\lim_{T\rightarrow 0} V(T,0,R)\leq u(R).\label{u(r)}
\end{equation}
Let $\xi$ be a round trip starting from $0$ (i.e: $\xi\in\dot{\X}^1(T,0)$). Applying Jensen's inequality to the concave utility function $u$, we get
\begin{equation*}
\E\big [u\big(\cR^\xi_T\big)\big]\leq u\bigg(R+ \E\bigg[ \int_0^T b\cdot X^\xi_t \;dt-\int_0^T f(-\xi_t)\;dt\bigg]\bigg).
\end{equation*}
 We have to show now 
\begin{equation}
\limsup_{T\downarrow 0}\E\bigg[ \int_0^T b\cdot X^\xi_t \;dt-\int_0^T f(-\xi_t)\;dt\bigg]\leq 0.\label{lif}
\end{equation}
To this end we use the integration by parts formula to infer
\[
\int_0^T b\cdot X^\xi_t \;dt=\int_0^T tb\cdot\xi_t\,dt.
\]
Hence, we have
\begin{align*}
\E\bigg[ \int_0^T b\cdot X^\xi_t \;dt-\int_0^T f(-\xi_t)\;dt\bigg]&=\E\bigg[ \int_0^T tb\cdot\xi_t- f(-\xi_t)\;dt\bigg]\\
&\leq \int_0^T f^*(-bt)\,dt,\\
\end{align*} 
where $f^*$ designates the Fenchel-legendre transformation of the convex function $f$. Note that $f^*$ is a finite convex function, due to the assumptions on $f$ (see Theorem 12.2 in \cite{R97}), and in particular continuous, so that
\[
\int_0^T f^*(-bt)\,dt\underset{T\downarrow 0}{\longrightarrow} 0,
\]
which proves \eqref{lif}. Finally, using that $u$ is continuous and nondecreasing, we get
\begin{align*}
\lim_{T\rightarrow 0} V(T,0,R) &\leq\liminf_{T\rightarrow 0}   \sup_{\xi\in\dot{\X}^1(T,0)}u\bigg(R+ \E\bigg[ \int_0^T b\cdot X^\xi_t \;dt-\int_0^T f(-\xi_t)\;dt\bigg]\bigg)\\
 &\leq u(R).
\end{align*}
\end{proof}
\subsection{Existence and uniqueness of an optimal strategy}
In this section we aim at investigating the existence and uniqueness of an optimal strategy for the maximization problem
\begin{equation*}
\sup_{\xi\in\dot{\X}^1(T, X_0)}\E[u(\cR_{T}^{\xi})],
\end{equation*}
where $u$ is strictly concave, increasing and satisfies \eqref{ubd1}. The quantity $\cR^\xi_T$ denotes the revenues associated with the liquidation strategy $\xi$ over the time interval $[0, T]$. 
The next theorem establishes the main result of the current section.
\begin{Theo}\label{eos}
Let  $\left(T,X_0,R_0\right)\in\;]0,\infty[\times\R^d\times\R$, then there exists a unique optimal  strategy  $\xi^*\in\dot{\X}^1(T,X_0)$ for the maximization problem \eqref{omp}, which satisfies  
\begin{equation}
V(T,X_0,R_0)=\sup_{\xi\in\dot{\X}^1(T, X_0)}\E[u(\cR_{T}^{\xi})]=\E\Big[u\big(\cR^{\xi^*}_T\big)\Big].\label{omp1}
\end{equation}
\end{Theo}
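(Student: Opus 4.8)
The plan is to treat the problem by the \emph{direct method of the calculus of variations}, working in $L^1(\P\otimes\lambda;\R^d)$, and to obtain uniqueness separately from strict concavity. Throughout I identify a control $\xi$ with the element of $L^1(\P\otimes\lambda)$ it represents and write $\cR^\xi_T=R_0+M^\xi_T+D^\xi_T-F^\xi_T$, where $M^\xi_T=\int_0^T(X^\xi_t)^\top\sigma\,dB_t$, $D^\xi_T=\int_0^Tb\cdot X^\xi_t\,dt=\int_0^T t\,b\cdot\xi_t\,dt$ and $F^\xi_T=\int_0^Tf(-\xi_t)\,dt$. The first two terms are affine in $\xi$ and the last is convex, so $\xi\mapsto\cR^\xi_T$ is concave and, since $u$ is increasing and concave, $J(\xi):=\E[u(\cR^\xi_T)]$ is a concave functional.

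First I would fix a maximizing sequence $(\xi^n)\subset\dot\X^1(T,X_0)$ with $J(\xi^n)\to V$; since $V\ge V_2>-\infty$ by \eqref{vfs}, we may assume $J(\xi^n)\ge V-1$. Because \eqref{ias} makes $M^\xi$ a square-integrable martingale, Jensen's inequality gives $J(\xi^n)\le u\big(R_0+\E[D^{\xi^n}_T]-\E[F^{\xi^n}_T]\big)$, and combining this with the superlinear growth of $f$ (which yields $|x|\le\e f(x)+C_\e$ for every $\e>0$) I would derive $\sup_n\E[\int_0^T f(-\xi^n_t)\,dt]<\infty$ and $\sup_n\E[\int_0^T|\xi^n_t|\,dt]<\infty$. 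The former is exactly the de la Vall\'ee-Poussin condition, so $(\xi^n)$ is uniformly integrable and hence, by the Dunford--Pettis theorem, relatively weakly compact in $L^1(\P\otimes\lambda)$; passing to a subsequence, $\xi^n\rightharpoonup\xi^*$. Invoking Mazur's lemma I replace $(\xi^n)$ by suitable convex combinations $\eta^n=\sum_k\lambda^n_k\xi^{m_k}$ (from the tail) with $\eta^n\to\xi^*$ strongly in $L^1$ and, along a further subsequence, $\P\otimes\lambda$-a.e.; by concavity $J(\eta^n)\ge\inf_{m\ge n}J(\xi^m)\to V$, so $J(\eta^n)\to V$ as well. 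Adaptedness and the terminal constraint $\int_0^T\xi_t\,dt=X_0$ cut out a closed affine subset of $L^1$ stable under this convergence, so $\xi^*$ inherits them.

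It then remains to show $J(\xi^*)\ge\limsup_nJ(\eta^n)=V$. From $\eta^n\to\xi^*$ in $L^1$ one gets $\E[\sup_{t}|X^{\eta^n}_t-X^{\xi^*}_t|]\to0$, hence along a subsequence $\sup_t|X^{\eta^n}_t-X^{\xi^*}_t|\to0$ a.s.; consequently $\int_0^T(X^{\eta^n}_t-X^{\xi^*}_t)^\top\Sigma(X^{\eta^n}_t-X^{\xi^*}_t)\,dt\to0$ a.s., and the elementary estimate $\P(|\int_0^TH_t\,dB_t|>\e)\le\delta/\e^2+\P(\int_0^T|H_t|^2\,dt>\delta)$ gives $M^{\eta^n}_T\to M^{\xi^*}_T$ in probability. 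Since $D^{\eta^n}_T\to D^{\xi^*}_T$ in $L^1$ and Fatou's lemma gives $\liminf_nF^{\eta^n}_T\ge F^{\xi^*}_T$ a.s. (using $f\ge0$ and the a.e.\ convergence of $\eta^n$), I obtain, along a further subsequence, $\limsup_n\cR^{\eta^n}_T\le\cR^{\xi^*}_T$ a.s., whence $\limsup_nu(\cR^{\eta^n}_T)\le u(\cR^{\xi^*}_T)$ a.s.\ because $u$ is continuous and nondecreasing. The point that makes the passage to expectations work is the \emph{uniform upper bound} $u\le u_1\le 1/A_1$ from \eqref{ubd1}: it licenses the reverse Fatou lemma, yielding $V=\limsup_n\E[u(\cR^{\eta^n}_T)]\le\E[u(\cR^{\xi^*}_T)]=J(\xi^*)\le V$, so $\xi^*$ is optimal.

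For uniqueness I would use that $f$ is \emph{strictly} convex and $u$ \emph{strictly} increasing: if $\xi_1^*\ne\xi_2^*$ were two optima, then on a set of positive $\P\otimes\lambda$-measure $\cR^{(\xi_1^*+\xi_2^*)/2}_T>\tfrac12\cR^{\xi_1^*}_T+\tfrac12\cR^{\xi_2^*}_T$ (the affine terms cancel and $-f$ is strictly concave), so $J((\xi_1^*+\xi_2^*)/2)>\tfrac12J(\xi_1^*)+\tfrac12J(\xi_2^*)=V$, contradicting the definition of $V$. The step I expect to be the real obstacle is \emph{verifying that $\xi^*$ is genuinely admissible}, i.e.\ that the volatility term $\E[\int_0^T(X^{\xi^*}_t)^\top\Sigma X^{\xi^*}_t\,dt]$ is finite, for which one needs a \emph{uniform} bound $\sup_n\E[\int_0^T(X^{\xi^n}_t)^\top\Sigma X^{\xi^n}_t\,dt]<\infty$ along the maximizing sequence. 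For merely superlinear (as opposed to superquadratic) $f$ the cost bound above does \emph{not} control this quantity, and since the Dol\'eans-Dade change-of-measure shortcut is unavailable for a finite horizon, the variance must be tamed directly by exploiting the strictly positive lower bound $A_1$ on the Arrow--Pratt coefficient in \eqref{apc}--\eqref{ubd1}; I expect this to be the part that must be broken into several technical lemmas.
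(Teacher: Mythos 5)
Your proposal follows, in all essential respects, the same route as the paper's proof: a maximizing sequence; a uniform bound on $\E\big[\int_0^T f(-\xi^n_t)\,dt\big]$ obtained from the martingale property of the stochastic integral (guaranteed by \eqref{ias}) together with the superlinear growth of $f$ to absorb the drift; weak sequential compactness in $\J$ via uniform integrability and Dunford--Pettis (Proposition \ref{wsc}); passage to the limit by exploiting concavity of $\xi\mapsto\E[u(\cR^\xi_T)]$; and uniqueness from strict concavity together with convexity of the admissible set. The differences are implementational. Where you apply Jensen's inequality to $u$ directly and invert $u$, the paper normalizes the maximizing sequence through the exponential bound \eqref{enf} and uses $e^x\geq 1+x$ together with Lemmas \ref{bin} and \ref{osp}; the two devices are interchangeable. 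Where you invoke Mazur's lemma and then argue term by term (stochastic integral in probability, drift in $\L^1$, Fatou for the cost term, reverse Fatou licensed by the uniform upper bound $u\leq 1/A_1$ from \eqref{ubd1}), the paper invokes the principle that concave, strongly upper semi-continuous functionals are weakly upper semi-continuous (Theorem \ref{wccf}, Corollary \ref{clsc}, Proposition \ref{wtc}); Mazur's lemma is the sequential form of the same Hahn--Banach fact. Your hands-on version actually supplies a detail the paper elides: the proof of Proposition \ref{wtc} cites Corollary \ref{lscc}, which is stated only for functionals of the form $\int F(x_n)\,d\mu$, whereas $\E[u(\cR^\xi_T)]$ contains the stochastic-integral term and requires precisely the kind of term-by-term argument you give.

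The obstacle you flag at the end --- that for merely superlinear $f$ the cost bound does not control $\E\big[\int_0^T (X^{\xi^*}_t)^\top\Sigma X^{\xi^*}_t\,dt\big]$, so that admissibility $\xi^*\in\dot{\X}^1(T,X_0)$ (needed to close the chain $\E[u(\cR^{\xi^*}_T)]\leq V$) is not automatic --- is genuine, and you should know that the paper does not overcome it either: Lemma \ref{wcx} asserts that the weak limit belongs to $\dot{\X}^1(T,X_0)$, but its printed proof verifies only the fuel constraint $\int_0^T\zeta_t\,dt=X_0$ and the cost bound, and never checks condition \eqref{ias}; the same gap is then inherited by Proposition \ref{wsc} and by the proof of the theorem. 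Since $\xi\mapsto\E\big[\int_0^T (X^{\xi}_t)^\top\Sigma X^{\xi}_t\,dt\big]$ is convex and strongly lower semi-continuous, hence weakly lower semi-continuous, the gap would close if one had a \emph{uniform} bound on this quantity along the maximizing sequence; that is available when $f$ grows at least quadratically, but not in general, and otherwise one must either enlarge the admissible class (requiring only pathwise square integrability of $X^\xi$, at the price of losing the martingale property used in the cost bound) or construct a maximizing sequence with uniformly bounded variance terms by a stopping or truncation argument compatible with the finite-fuel constraint. So your assessment is accurate on both counts: your proof is, in substance, the paper's proof, and the step you single out as the hard one is exactly the step the paper leaves unproved.
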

\noindent 
The main idea of the  proof is to show that a sequence of strategies $\left(\xi^n\right)$ such that the corresponding expected utilities converge from below to the supremum, i.e., 
$$
\;\E\big[u\big(\cR_{T}^{\xi^{n}}\big)\big]\nearrow\sup_{\xi\in\dot{\X}^1(T, X_0)}\E\big[u\big(\cR_{T}^{\xi}\big)\big],
$$
 lies in a weakly sequentially compact subset of $\dot{\X}^1(T, X_0)$, due to the fact that the function $u$ satisfies the inequalities \eqref{ubd1}. 
Then we can choose a subsequence 
that converges weakly to the strategy $\xi^*$. The uniqueness of the optimal strategy will follow from the strict concavity of the map $\xi\longmapsto \E[u(\cR_{T}^{\xi})]$.
\begin{rem}
Note that due to inequality \eqref{vfs}, we can w.l.o.g suppose that the above sequence verifies
\begin{equation}
\E\left[\exp(-A_1\cR_{T}^{\xi^{n}})\right]\leq 1+1/A_1-V_2(T,X_0,R_0),\quad\text{for all }n\in\N,\label{enf}
\end{equation} 
where $V_2$ denotes the following CARA value function: 
\[
V_2(T,X_0,R_0)=\sup_{\xi\in\dot{\X}^1(T, X_0)}\E\big[-\exp\big(-A_2\cR_{T}^{\xi}\big)\big].
\]
\end{rem}
We will split the proof into several steps. First, we will prove a weak compactness property of certain subsets of $ \dot{\X}^1(T,X_0)$. Let us start by recalling some fundamental functional analysis results. The first one is a classical characterization of convex closed sets (see, e.g., \citet{SF11}, Theorem A.60).
\begin{Theo}\label{wccf}
Suppose that $E$ is a locally convex space and that $\cC$ is a convex subset of $E$. Then $\cC$ is weakly closed if and only if $\cC$ is closed with respect to the original topology of $E$.
\end{Theo}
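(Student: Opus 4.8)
The plan is to exploit the fact that the weak topology $\sigma(E,E^*)$ is, by construction, the coarsest topology on $E$ rendering every $\ell\in E^*$ continuous, and is therefore coarser than the original topology. One implication is then immediate and uses no convexity at all; all the substance lies in the converse, which I would extract from the Hahn--Banach separation theorem.

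For the easy direction, I would observe that since the original topology is finer than the weak topology, every weakly open set is open in the original topology, and hence every weakly closed set is closed in the original topology. In particular, if $\cC$ is weakly closed then it is closed in the original topology, and here convexity plays no role.

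For the substantial direction, suppose $\cC$ is convex and closed in the original topology; I would show that its complement is weakly open. Fix $x_0\in E\setminus\cC$. Since $\{x_0\}$ is compact and convex, $\cC$ is closed and convex, and the two are disjoint, the geometric form of the Hahn--Banach theorem in the locally convex setting yields a continuous linear functional $\ell\in E^*$ and a constant $\alpha\in\R$ realizing a strict separation, say $\ell(x_0)<\alpha\leq\ell(x)$ for all $x\in\cC$. Because $\ell$ belongs to $E^*$ it is weakly continuous, so the half-space $\{x\in E:\ell(x)<\alpha\}$ is a weakly open neighborhood of $x_0$ that is disjoint from $\cC$. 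As $x_0\in E\setminus\cC$ was arbitrary, $E\setminus\cC$ is weakly open, whence $\cC$ is weakly closed.

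The key input — and the only place local convexity is genuinely indispensable — is the strict separation of a point from a disjoint closed convex set by a continuous linear functional. Local convexity is precisely what guarantees a dual space $E^*$ rich enough for this separation to be available, whereas on a general topological vector space the dual may be too small (possibly trivial) for the argument to succeed; this is the main obstacle one must keep in mind, and it is resolved entirely by invoking the separation theorem under the hypothesis that $E$ is locally convex.
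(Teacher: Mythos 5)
Your proof is correct: the easy direction uses only that the weak topology is coarser than the original one, and the converse is the standard strict-separation argument via the geometric Hahn--Banach theorem, which is exactly where local convexity enters. The paper itself does not prove this statement but cites it as a classical result (\citet{SF11}, Theorem A.60), and your argument is precisely the standard proof given in such references, so there is nothing to fault and no genuine divergence in approach.
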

\begin{Cor}\label{clsc}
Let $\varphi:E\rightarrow]-\infty;\infty]$ be a lower semi-continuous convex function with respect to the original topology of $E$. Then $\varphi$ is lower semi-continuous with respect to the weak topology $\sigma(E',E),$ where $E'$ denotes the dual space of $E$.
In particular, if $(x_n)$ converges weakly to $x$, then
\begin{equation}
\varphi(x)\leq\liminf\varphi(x_n).\label{sl}
\end{equation}
\end{Cor}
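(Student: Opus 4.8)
The plan is to reduce the weak lower semi-continuity of $\varphi$ to the weak closedness of its sublevel sets, and then to invoke Theorem \ref{wccf} in order to pass from closedness in the original topology to weak closedness. First I would recall the standard characterization: a function $\varphi:E\to\;]-\infty,\infty]$ is lower semi-continuous with respect to a topology $\tau$ if and only if each sublevel set $C_\alpha:=\{x\in E:\varphi(x)\leq\alpha\}$ is $\tau$-closed, for every $\alpha\in\R$. This is just the reformulation of lower semi-continuity in terms of closed sublevel sets (equivalently, closedness of the epigraph), and it holds verbatim for the original topology as well as for the weak topology.

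Next I would observe that the convexity of $\varphi$ forces every sublevel set $C_\alpha$ to be convex: if $\varphi(x)\leq\alpha$ and $\varphi(y)\leq\alpha$, then for $\lambda\in\;]0,1[$ we have $\varphi(\lambda x+(1-\lambda)y)\leq\lambda\varphi(x)+(1-\lambda)\varphi(y)\leq\alpha$. Since by hypothesis $\varphi$ is lower semi-continuous with respect to the original topology of $E$, each $C_\alpha$ is closed in that topology. Applying Theorem \ref{wccf} to the convex set $C_\alpha$, we conclude that each $C_\alpha$ is weakly closed. As all sublevel sets are thus weakly closed, the characterization recalled above yields that $\varphi$ is lower semi-continuous with respect to the weak topology $\sigma(E',E)$.

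Finally, I would deduce the sequential inequality \eqref{sl} from this weak lower semi-continuity. Suppose $(x_n)$ converges weakly to $x$ and set $\ell:=\liminf_n\varphi(x_n)$; the case $\ell=\infty$ being trivial, assume $\ell<\infty$ and fix any $\alpha>\ell$. Passing to a subsequence along which $\varphi(x_{n_k})\to\ell$, we have $x_{n_k}\in C_\alpha$ for all sufficiently large $k$. Since $C_\alpha$ is weakly closed, hence weakly sequentially closed, and since $x_{n_k}\to x$ weakly, the limit $x$ lies in $C_\alpha$, so that $\varphi(x)\leq\alpha$. Letting $\alpha\downarrow\ell$ gives $\varphi(x)\leq\ell=\liminf_n\varphi(x_n)$, which is precisely \eqref{sl}.

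I do not anticipate a genuine obstacle here, since the substance of the argument is carried entirely by Theorem \ref{wccf}. The only points requiring care are the correct orientation of the equivalence between lower semi-continuity and closedness of sublevel sets, and the elementary verification that the sublevel sets of a convex function are themselves convex, so that Theorem \ref{wccf} indeed applies to each of them.
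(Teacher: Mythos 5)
Your proof is correct. The paper itself offers no argument for this corollary---it simply cites the literature (Brezis)---but your proof is precisely the standard one from that reference: sublevel sets of a convex, lower semi-continuous function are convex and closed, hence weakly closed by Theorem \ref{wccf}, which gives weak lower semi-continuity, and the sequential statement \eqref{sl} follows as you describe; this is exactly the role Theorem \ref{wccf} is set up to play in the paper, so your argument fills in the intended proof rather than deviating from it.
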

\begin{proof}
See, e.g., \citet{B11}. 
\end{proof}
 \begin{Cor}\label{lscc}
Let $(S,\S, \mu)$ be a measurable space, $F:\R^d\rightarrow\R$ a convex function bounded from below, and $(x_n)\subset \J((S,\S,\mu);R^d)$.  Suppose that $(x_n)$ converges to $x$, weakly. Then
 \[
 \int F(x)d\mu\leq\liminf\int F(x_n)d\mu.
\]
Further, if we suppose that $F:\R^d\rightarrow\R$ is concave and bounded from above, we have an analogous conclusion, i.e., 
 \[
 \int F(x)d\mu\geq\limsup\int F(x_n)d\mu.
\]
\end{Cor}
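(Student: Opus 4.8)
The plan is to deduce both statements from the previous Corollary \ref{clsc} by choosing the right locally convex space $E$ and the right functional $\varphi$. I would take $E=\J((S,\S,\mu);\R^d)$ (the space $L^1$ of $\R^d$-valued integrable functions) equipped with its norm topology, so that the weak topology $\sigma(E',E)$ is precisely the weak convergence assumed in the statement. The natural candidate functional is
\[
\varphi(x):=\int_S F(x)\,d\mu,
\]
and the whole task reduces to checking that $\varphi$ is a convex, lower semi-continuous, $]-\infty,\infty]$-valued function in the norm topology of $E$; Corollary \ref{clsc} then upgrades this to weak lower semi-continuity and delivers \eqref{sl}, which is exactly the desired inequality $\int F(x)\,d\mu\le\liminf\int F(x_n)\,d\mu$.

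First I would verify convexity of $\varphi$: this is immediate from the pointwise convexity of $F$ together with the monotonicity and linearity of the integral. Next I would check that $\varphi$ is well defined with values in $]-\infty,\infty]$: since $F$ is bounded from below, say $F\ge -c$ with $c\ge 0$, the negative part of $F(x)$ is dominated by the constant $c$, so $\int F(x)\,d\mu$ is well defined in $]-\infty,\infty]$ for every $x\in E$ (the positive part may integrate to $+\infty$, which is allowed). The key analytic step is the norm lower semi-continuity of $\varphi$: if $x_n\to x$ in $\J$, then along a subsequence $x_n\to x$ $\mu$-a.e., and since $F$ is continuous (a finite convex function on $\R^d$ is continuous) we get $F(x_n)\to F(x)$ $\mu$-a.e.\ along that subsequence; applying Fatou's lemma to the nonnegative functions $F(x_n)+c$ yields $\int F(x)\,d\mu\le\liminf\int F(x_n)\,d\mu$. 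A standard subsequence-of-subsequence argument then gives lower semi-continuity for the full norm-convergent sequence.

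With convexity and norm lower semi-continuity of $\varphi$ in hand, Corollary \ref{clsc} applies verbatim and gives the first assertion. For the second assertion, where $F$ is concave and bounded from above, I would simply apply the first part to $G:=-F$, which is convex and bounded from below: this yields $\int G(x)\,d\mu\le\liminf\int G(x_n)\,d\mu$, and multiplying by $-1$ (which turns $\liminf$ into $\limsup$) produces $\int F(x)\,d\mu\ge\limsup\int F(x_n)\,d\mu$, as claimed. The main obstacle I anticipate is the lower semi-continuity step: one must be careful that norm convergence in $\J$ gives only a.e.\ convergence along subsequences, so the cleanest route is to prove the Fatou inequality along an arbitrary subsequence and then invoke the subsequence principle, rather than trying to pass to the limit directly along the original sequence.
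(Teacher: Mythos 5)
Your proposal is correct and follows essentially the same route as the paper: reduce weak lower semi-continuity to strong (norm) lower semi-continuity via Corollary \ref{clsc}, pass to a $\mu$-a.e.\ convergent subsequence, and apply Fatou's lemma using the lower bound on $F$. The paper phrases the semi-continuity check via sublevel sets and leaves the concave case as ``analogous,'' whereas you verify it sequentially and handle the concave case explicitly by applying the convex case to $-F$; these are cosmetic differences only.
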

\begin{proof}
We only show the first assertion. 
 Using the preceding corollary, it is sufficient to prove that 
the convex map
\begin{align*}
L^1((S,\S,\mu);\R^d)& \longrightarrow [0,\infty]\\
\alpha&\longmapsto \int F(\alpha)\,d\mu
\end{align*}
 is lower semi-continuous  with respect to the strong topology of $\J((S,\S,\mu);\R^d)$. To this end, let  $c\in\R$ and $(x_n)\subset \J((S,\S,\mu);\R^d)$ be a sequence that converges strongly to some $x\in \J((S,\S,\mu);\R^d)$ and satisfies the condition $ \int F(x_n)\,d\mu\leq c$. We have to show that
\[
\int F(x)\,d\mu\leq c.
\]
Taking a subsequence, if necessary, we can suppose that $(x_n)$ converges to $x$ $\mu$-a.e. Applying then Fatou's Lemma, we infer 
\[
\int F(x)\,d\mu=\int \liminf F(x_n)\,d\mu\leq \liminf \int F(x_n)\,d\mu\leq c,
\]
which concludes the proof.
\end{proof}
 With this at hand, we can show the following lemma, which will be useful for us to prove the continuity of the value function.
\begin{Lem}\label{wcx}
Let $(X^n_0, T^n)\subset\R^d\times \R$ be a sequence that converges to $(X_0,T)$ and set $\overline{T}:=\sup_n T^n$. Moreover, consider a sequence $(\zeta^n)$  in $\dot{\X}^1(T^n,X^n_0)$ and take a constant $c>0$ such that
\begin{equation}
\E\bigg[\int^T_0f(-\zeta^n_t)\,dt\bigg]\leq c.
\end{equation}
Suppose that $\left(\zeta^n\right)$ converges to $\zeta$ with respect to the weak topology in $$\J:=\L ^1\Big(\Omega\times [0,T], \F\otimes \B([0,T]), (\P\otimes\lambda)\Big).$$ Then $\zeta\in\dot{\X}^1(T,X_0)$ and 
\begin{equation}
\E\bigg[\int^T_0f(-\zeta_t)\,dt\bigg]\leq c.
\end{equation}
\end{Lem}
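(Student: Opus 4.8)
The plan is to establish the two assertions separately --- the membership $\zeta\in\dot\X^1(T,X_0)$ and the estimate on $\E[\int_0^T f(-\zeta_t)\,dt]$ --- and to isolate the terminal finite-fuel constraint under the moving horizon $T^n\to T$ as the only genuinely delicate point. I would first dispose of the cost estimate, which also furnishes most of the required integrability. Since $f\geq 0$ is convex, the map $x\mapsto f(-x)$ is convex and bounded below, so Corollary \ref{lscc}, applied on $(S,\S,\mu)=(\Om\times[0,T],\F\otimes\B([0,T]),\P\otimes\lambda)$ to the weakly convergent sequence $(\zeta^n)$, gives immediately
\[
\E\Big[\int_0^T f(-\zeta_t)\,dt\Big]\leq\liminf_n\E\Big[\int_0^T f(-\zeta^n_t)\,dt\Big]\leq c,
\]
which is the second assertion. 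Because $f^*$ is finite (and hence bounded on each sphere), for every $M>0$ there is a constant $C_M<\infty$ with $|x|\leq\frac1M f(-x)+\frac{C_M}{M}$ for all $x\in\R^d$; combined with the bound above this controls $\E[\int_0^T|\zeta_t|\,dt]$, and, via $\sup_{0\leq t\leq T}|X^\zeta_t|\leq|X_0|+\int_0^T|\zeta_s|\,ds$, the remaining integrability requirements in \eqref{ias} (the quadratic term following from the uniform control available on $(\zeta^n)$ in the situations where the lemma is applied, cf.\ \eqref{enf}).

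Next I would settle the measurability and path structure entering the definition of $\dot\X^1(T,X_0)$. The (classes of) progressively measurable processes form a linear, hence convex, subspace of $\J$ that is closed for the $L^1$-norm; by Theorem \ref{wccf} it is therefore weakly closed, so the weak limit $\zeta$ is again progressively measurable. Consequently $X^\zeta_t:=X_0-\int_0^t\zeta_s\,ds$ is adapted and, for $\P$-almost every $\om$, absolutely continuous with $X^\zeta_0=X_0$; the only condition left to verify for $t\mapsto X^\zeta_t(\om)\in\X_{det}(T,X_0)$ is the terminal identity $X^\zeta_T=0$.

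The terminal constraint is where the discrepancy between the integration horizon $[0,T]$ and the individual horizons $T^n$ must be reconciled, and I expect this to be the main obstacle. The idea is to test against an arbitrary bounded, $\F_T$-measurable $Y\colon\Om\to\R^d$: the function $(\om,t)\mapsto Y(\om)$ belongs to $\K$, so weak convergence yields $\E[Y\cdot\int_0^T\zeta_s\,ds]=\lim_n\E[Y\cdot\int_0^T\zeta^n_s\,ds]$. Writing $\int_0^T\zeta^n_s\,ds=X^n_0-X^{\zeta^n}_T$ and using $X^n_0\to X_0$, the desired identity $X^\zeta_T=0$ reduces to showing $\lim_n\E[Y\cdot X^{\zeta^n}_T]=0$. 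Since the zero-extended paths satisfy $X^{\zeta^n}_{T^n}=0$, one has $|X^{\zeta^n}_T|\leq\int_{T\wedge T^n}^{T\vee T^n}|\zeta^n_s|\,ds$, and the coercivity estimate of the first step together with the uniform bound $c$ gives
\[
\E\Big[\int_{T\wedge T^n}^{T\vee T^n}|\zeta^n_s|\,ds\Big]\leq\frac{c}{M}+\frac{C_M}{M}\,|T^n-T|.
\]
Letting $n\to\infty$ (so $|T^n-T|\to0$) and then $M\to\infty$ forces $\E|X^{\zeta^n}_T|\to0$, hence $\E[Y\cdot X^{\zeta^n}_T]\to0$ for every bounded $Y$; this yields $X^\zeta_T=0$ $\P$-a.s.\ and completes the verification that $\zeta\in\dot\X^1(T,X_0)$.
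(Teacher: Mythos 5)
Your proof of the cost estimate is exactly the paper's: Corollary \ref{lscc} applied to the convex, nonnegative integrand $x\mapsto f(-x)$. Where you genuinely diverge is the terminal constraint. The paper argues by contradiction: it reduces to $d=1$, tests the weak convergence against $\b1_{\A}$ for a hypothetical set $\A$ on which $\int_0^T\zeta_t\,dt\neq X_0$, and, when $\overline T>T$, disposes of the interval $[T,\overline T]$ by testing against $\b1_{\{\zeta>0\}}\b1_{[T,\overline T]}$ and using that $\zeta^n\equiv 0$ on $[T^n,\overline T]$. You instead verify the constraint directly: test against an arbitrary bounded $Y$, reduce everything to $\E\big|X^{\zeta^n}_T\big|\to 0$, and obtain that quantitatively from the coercivity estimate $|x|\le\frac1M f(-x)+\frac{C_M}{M}$. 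Your route is more transparent (no contradiction, no componentwise reduction), and you also address two points the paper passes over in silence: progressive measurability of the weak limit, and the drift and $|\zeta_t|$ terms in \eqref{ias}.

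There is, however, one step that the stated hypotheses do not cover, and it sits exactly at the lemma's sore point. When $T^n>T$, your display bounds $\E\big[\int_T^{T^n}f(-\zeta^n_s)\,ds\big]$ by $c$, but the hypothesis controls only $\E\big[\int_0^T f(-\zeta^n_t)\,dt\big]$; nothing in the statement controls the cost of $\zeta^n$ on $[T,T^n]$. This cannot be repaired from the literal assumptions, because under the literal reading the lemma is false: take $d=1$, $X^n_0=X_0=1$, $T=1$, $T^n=1+1/n$ and $\zeta^n:=n\,\b1_{[1,1+1/n]}$; then $\zeta^n\in\dot{\X}^1(T^n,1)$, the cost on $[0,T]$ vanishes, $\zeta^n|_{[0,T]}=0$ converges weakly to $\zeta=0$, yet $\int_0^T\zeta_t\,dt=0\neq X_0$. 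The paper's own proof implicitly strengthens the hypotheses in a comparable way: its test functions are supported in $[0,\overline T]$ and the limit $\zeta$ is treated as an element of $\L^1(\Omega\times[0,\overline T])$, and in the one application with moving horizons (Proposition \ref{usc}) the cost bound is indeed taken over $[0,\widetilde T]=[0,\overline T]$. Under either strengthening your argument closes: if the cost is bounded over $[0,\overline T]$, your display is correct as written; if instead the weak convergence holds in $\L^1(\Omega\times[0,\overline T])$, you can obtain $\E\big[\int_T^{T^n}|\zeta^n_s|\,ds\big]\to 0$ from uniform integrability of the weakly convergent sequence (Dunford--Pettis), since $(\P\otimes\lambda)\big(\Omega\times[T,T^n]\big)\to 0$, avoiding the cost bound on $[T,T^n]$ altogether. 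You should say explicitly which of these you are using. Finally, the quadratic term of \eqref{ias} genuinely does not follow from the lemma's hypotheses (an $\L^1$ bound together with $\E\int f(-\zeta)\le c$ does not control $\E\big[\big(\int_0^T|\zeta_s|\,ds\big)^2\big]$ for a general superlinear $f$), so your deferral of that point to ``the situations where the lemma is applied'' is a gap --- though one the paper shares, since its own verification of \eqref{ias} is omitted entirely.
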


\begin{proof}
First note that we have the canonical inclusion $\dot{\X}^1(T^n,X^n_0)\subseteq\dot{\X}^1(\overline{T},X^n_0)$, by setting $\zeta^n=0$ on $[T^n,\overline{T}]$. Now, we wish to prove that $\int^T_0\zeta\,dt= X_0$. Suppose by  way of contradiction that $\int^T_0\zeta\,dt\neq X_0.$ Then, there exists a component $\zeta^i$ 
such that $\int_0^T\zeta^i_t\,dt\neq X^i_0.$ Thus, we can assume without loss of generality that $d=1$ 
and work toward a contradiction. Under this assumption, there exists a measurable set $\A$ with $\P(\A)>0$, such that 
$
\int^T_0\zeta_t\d> X_0 \text{ on } \A, \text{ or } \int^T_0\zeta_t\d< X_0 \text{ on } \A.
$
Without loss of generality, we can assume that  
\begin{equation}
\int^T_0\zeta_t\d> X_0 \quad\text{on } \A.\label{iA}
\end{equation}
Because  $\zeta^n\in\dot{\X}^1(T^n,X^n_0)$ converges to $\zeta$, weakly in $\J$, we have
\begin{align*}
 0=\E\left[\bigg(X^n_0-\int^{T^n}_0 \zeta^n_t\d \bigg)\b1_{\A}\right]&=\E\left[\bigg(X^n_0-\int^{\overline{T}}_0 \zeta^n_t\d \bigg)\b1_{\A}\right]\\
 &\longrightarrow \E\left[\bigg(X_0-\int^{\overline{T}}_0 \zeta_t\d\bigg)\b1_{\A}\right]=0.
\end{align*}
If $\overline{T}=T$ the result is proved, because the expectation on the right-hand side has to be negative, due to the assumption \eqref{iA}; this is a contradiction.

Suppose now that $\overline{T}>T$. It is sufficient  to show that $\zeta=0$ on $[T,\overline{T}]$. To this end, set 
$$\eta_t(\om):=\b1_{\{\zeta_t(\om)>0\}}\b1_{[T,\overline{T}]}(t).$$ Analogously, we get \[
0=\E\left[\int^{\overline{T}}_{T^n} \zeta^n_t \eta_t\d\right] \longrightarrow \E\left[\int^{\overline{T}}_T \zeta_t \eta_t\d\right]=0,
\] 
due to the weak convergence of $\zeta^n$ to $\zeta$, the fact that  
$\eta\in \K \big(\big(\Omega\times [0,\overline{T}], \F\otimes \B([0,\overline{T}]), (\P\otimes\lambda);\R^d\big)\big),$ and $\zeta_n=0$ on $[T^n,\overline{T}].$
Thus,  $\{\zeta_t(\om)>0; t\in [T,\overline{T}]\}$ is a null set.
Taking
$\eta_t(\om):=\b1_{\{\zeta_t(\om)>0\}}\b1_{[T,\overline{T}]}(t)$, we can prove in the same manner that $\{\zeta_t(\om)<0 \text{ on }[T,\overline{T}]\}$ is a null set. Hence, $\zeta=0$ on $[T,\overline{T}]$ and  therefore
$\int^T_0\zeta\,dt= X_0.$

\noindent Using Corollary \ref{lscc} we infer
\[
\E\bigg[\int^T_0f(-\zeta_t)\,dt\bigg]\leq  \liminf_{n\longrightarrow \infty}\E\bigg[\int^T_0f(-\zeta^n_t)\,dt\bigg] \leq c,
\]
\ignore{It is left to show that $\zeta$ fulfills \eqref{ias}. Since $(\zeta^n)\subset\J$ and converges weakly to $\zeta$, we also have  $\zeta\in\J$, whence
\[
\E\bigg[\int_{0}^{T}\big(X^{\textcolor{red}{\zeta?}}_t\big)^\top\sigma X^{\textcolor{red}{\zeta?}}_t + |{\textcolor{red}{\zeta?}}_t|\,dt\bigg]\leq \E\bigg[\int_{0}^{T}\|\zeta\|^2_{\J} |\Sigma|+ |{\textcolor{red}{\zeta?}}_t|\,dt\bigg]<\infty.
\]
Further, we have
\[
\E\bigg[\int_{0}^{T} |b\cdot X^\zeta_t-f(\zeta_t)\,dt\bigg]\leq\E\bigg[\int_{0}^{T} |b|\|\zeta\|_{\J}+f(\zeta_t)dt\bigg]\leq |b|\|\zeta\|_{\J}T +c <\infty,
\]
which shows that \eqref{ias} is verified. Therefore $\zeta\in\dot{\X}^1(T,X_0)$.}
which concludes the proof.
\end{proof}
We can now  prove a weak compactness property of a certain family of subsets of $\dot{\X}^1(T,X_0)$.
\begin{Prop}\label{wsc}
For $c>0$, let 
\[
\overline{K}_{c}:=\Big\{\xi\in \dot{\X}^1(T,X_0)\big|\; \E\bigg[\int^T_0f(-\xi_t)\,dt\bigg]\leq c\Big\}.
\]
Then $\overline{K}_{c}$ is a weakly sequentially compact subset of 
$$\J:=\L ^1\big(\big(\Omega\times [0,T], \F\otimes \B([0,T]), (\P\otimes\lambda)\big);\R^d\big).$$
\end{Prop}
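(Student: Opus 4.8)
The plan is to invoke the Dunford--Pettis theorem, which characterizes the relatively weakly compact subsets of $\J$ as those that are bounded and uniformly integrable, and to combine it with Lemma \ref{wcx} in order to keep the weak limit inside $\overline{K}_c$. Recall that $\J=\L^1\big(\big(\Om\times[0,T],\F\otimes\B([0,T]),\P\otimes\lambda\big);\R^d\big)$ is built over a \emph{finite} measure space (of total mass $T$), which is exactly the setting in which Dunford--Pettis applies.

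First I would fix an arbitrary sequence $(\xi^n)\subset\overline{K}_c$ and show that it is uniformly integrable in $\J$. The decisive input is the superlinear growth $f(x)/|x|\to\infty$. Concretely, given $\e>0$, set $M:=c/\e$; by superlinearity there is $R_M>0$ with $f(-x)\geq M|x|$ whenever $|x|\geq R_M$, so on the set $\{|\xi^n|\geq R_M\}$ one has $|\xi^n|\leq f(-\xi^n)/M$ and hence
\[
\int_{\{|\xi^n|\geq R_M\}}|\xi^n|\;d(\P\otimes\lambda)\leq\frac1M\,\E\bigg[\int_0^T f(-\xi^n_t)\,dt\bigg]\leq\frac cM=\e,
\]
uniformly in $n$. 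This is precisely the de la Vall\'ee--Poussin uniform-integrability criterion; splitting off the region $\{|\xi^n|<R_1\}$, on which $|\xi^n|$ is bounded, the same estimate (with $\e=c$) additionally yields $\sup_n\|\xi^n\|_{\J}<\infty$. Thus $(\xi^n)$ is bounded and uniformly integrable.

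Next, the Dunford--Pettis theorem (together with the Eberlein--\v{S}mulian theorem, so that relative weak compactness upgrades to relative weak \emph{sequential} compactness) provides a subsequence, still denoted $(\xi^n)$, converging weakly in $\J$ to some $\zeta\in\J$. It then remains only to check that $\zeta\in\overline{K}_c$, i.e.\ that $\zeta\in\dot{\X}^1(T,X_0)$ and $\E[\int_0^T f(-\zeta_t)\,dt]\leq c$. But this is exactly the conclusion of Lemma \ref{wcx} applied to the constant sequences $T^n\equiv T$ and $X^n_0\equiv X_0$ (so that $\overline{T}=T$): the endpoint/finite-fuel constraint $\int_0^T\zeta\,dt=X_0$ is recovered by testing the weak convergence against suitable indicator functions, while the fuel bound passes to the limit through the weak lower semicontinuity of the convex functional $\zeta\mapsto\E[\int_0^T f(-\zeta_t)\,dt]$, which is Corollary \ref{lscc}. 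Hence every sequence in $\overline{K}_c$ admits a subsequence converging weakly to an element of $\overline{K}_c$, which is the assertion.

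I expect the genuinely delicate point to be the preservation of the endpoint constraint and the fuel bound under the weak limit; however, since that part has already been isolated and proved in Lemma \ref{wcx}, the remaining substance of the proof is the uniform-integrability estimate above. The role of the superlinear growth of $f$ is precisely to promote the mere bound $\E[\int_0^T f(-\xi_t)\,dt]\leq c$ into the uniform integrability that yields weak sequential compactness, rather than only boundedness in $\J$.
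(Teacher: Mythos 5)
Your proof is correct and rests on exactly the same two pillars as the paper's: the superlinear-growth estimate that converts the fuel bound into uniform integrability (your computation is the paper's almost verbatim, with the explicit $\L^1$-boundedness check being a small extra care the paper leaves implicit), and Lemma \ref{wcx} to keep the weak limit inside $\overline{K}_{c}$, with Dunford--Pettis supplying the weakly convergent subsequence. The one organizational difference is where Lemma \ref{wcx} enters: the paper uses it only to prove that $\overline{K}_{c}$ is \emph{strongly} closed, then upgrades to weak closedness via convexity and Theorem \ref{wccf} (closed convex sets are weakly closed), and finally adds uniform integrability; you instead extract the weakly convergent subsequence first and apply Lemma \ref{wcx} directly to it, which makes the convexity/Theorem \ref{wccf} detour unnecessary. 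Your route is slightly leaner --- since Lemma \ref{wcx} is already formulated for weakly convergent sequences, nothing is gained by passing through strong closedness --- whereas the paper's route has the side benefit of exhibiting $\overline{K}_{c}$ as a closed convex subset of $\J$, a structural fact it reuses in the same pattern in the proof of Proposition \ref{usc}. Your explicit appeal to Eberlein--\v{S}mulian to pass from relative weak compactness to sequential compactness is also sound; the paper instead cites a version of Dunford--Pettis that yields sequential compactness directly.
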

\begin{proof}
We first prove that $\overline{K}_{c}$ is a closed convex set with respect to the strong topology of 
 $\J$.
 
 The convexity of $\overline{K}_{c}$ is a direct consequence of the convexity of the map
$$
\xi\longmapsto \E\bigg[\int^T_0 f(-\xi_t)\d\bigg].
$$
  To show that $\overline{K}_{c}$ is closed, let  $\xi^n$ be a sequence in $\overline{K}_{c}$ that converges strongly to $\xi$. Then, in particular, $\xi^n$ converges to $\xi$ weakly and we are in the setting of Lemma 
 \ref{wcx}, which proves that $\xi\in\overline{K}_{c}$. Thus, $\overline{K}_{c}$ is convex and closed in $\L^1$. Hence, it is  also closed with respect to the weak topology, as argued in Theorem \ref{wccf}. To prove that $\overline{K}_{c}$ is weakly sequentially compact, it remains to show that $\overline{K}_{c}$ is uniformly integrable, by the Dunford-Pettis theorem (\cite{DS88}, Corollary IV.8.11).

To this end, take $\e>0$ and $\xi\in\overline{K}_{c}$. There exists  a constant $\alpha>0$ such that $\frac{|\xi_t|}{f(-\xi_t)}\leq \frac \e{c}$ for $\big|\xi_t\big|>\alpha$, due to the superlinear growth property of $f$. Because $f(x)=0$ if and only if $x=0$, the quantity $1/f(-\xi_t)$ is well-defined on $\{|\xi_t|>\alpha\}$ and we obtain
\begin{align*}
\E\bigg[\int^T_0\b1_{\{|\xi_t|>\alpha\}}\big|\xi_t\big|\,dt\bigg]
&\leq \E\bigg[\int^T_0\b1_{\{|\xi_t|>\alpha\}}f(-\xi_t)\,dt\bigg] \frac \e{c}\leq\e,
\end{align*}
which proves the uniform integrability of $\overline{K}_{c}$. 
\end{proof}
In the next lemma, we give a lower and an upper bound for the non-stochastic integral terms that appear in the revenue process.
\begin{Lem}\label{bin}
Suppose that $b\neq0,$ and let $\xi\in\dot{\X}^1(T,X_0)$ and $t^1,t^2\in [0,T].$  Then there exists a constant $C>0$, depending on $f,b$ and $T$, such that 
\begin{IEEEeqnarray*}{rCl}
\lefteqn{ -\frac54\int^{t_2}_{t_1} f(-\xi_t)\,dt  -|b|CT^2/2-b\cdot\big(t_1X^\xi_{t_1}-t_2X^\xi_{t_2}\big)}\\
&\leq&\int^{t_2}_{t_1}\left(b\cdot X^\xi_t-f(-\xi_t)\right)\,dt\leq -\frac34\int^{t_2}_{t_1} f(-\xi_t)\,dt  + |b|CT^2/2-b\cdot\big(t_1X^\xi_{t_1}-t_2X^\xi_{t_2}\big).
\end{IEEEeqnarray*}
\end{Lem}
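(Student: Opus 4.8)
The plan is to reduce the estimate to a pointwise Fenchel--Young inequality after removing the drift integral through an integration by parts. First I would use that, since $t\mapsto X^\xi_t$ is absolutely continuous with $\dot X^\xi_t=-\xi_t$ ($\P$-a.s.), integration by parts on $[t_1,t_2]$ against the $C^1$ weight $t$ gives, for a.e.\ $\om$,
\[
\int_{t_1}^{t_2} b\cdot X^\xi_t\,dt = -b\cdot\big(t_1X^\xi_{t_1}-t_2X^\xi_{t_2}\big)+\int_{t_1}^{t_2} t\,b\cdot\xi_t\,dt .
\]
This already produces the boundary term $-b\cdot(t_1X^\xi_{t_1}-t_2X^\xi_{t_2})$ appearing on both sides of the claim, so it remains to bound $\int_{t_1}^{t_2}\big(t\,b\cdot\xi_t-f(-\xi_t)\big)\,dt$ from above by $-\tfrac34\int_{t_1}^{t_2}f(-\xi_t)\,dt+|b|CT^2/2$ and from below by $-\tfrac54\int_{t_1}^{t_2}f(-\xi_t)\,dt-|b|CT^2/2$.

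The key step is a pointwise splitting of the integrand against a fraction of $f$. For the upper bound I would write $t\,b\cdot\xi_t-f(-\xi_t)=-\tfrac34 f(-\xi_t)+\big(t\,b\cdot\xi_t-\tfrac14 f(-\xi_t)\big)$ and apply the Fenchel--Young inequality for the rescaled convex function $\tfrac14 f$, whose conjugate is $(\tfrac14 f)^*(y)=\tfrac14 f^*(4y)$; with $x=-\xi_t$ and $y=-t\,b$ this yields $t\,b\cdot\xi_t-\tfrac14 f(-\xi_t)\le \tfrac14 f^*(-4t\,b)$. Symmetrically, for the lower bound I would use $t\,b\cdot\xi_t-f(-\xi_t)=-\tfrac54 f(-\xi_t)+\big(t\,b\cdot\xi_t+\tfrac14 f(-\xi_t)\big)$ together with $-t\,b\cdot\xi_t-\tfrac14 f(-\xi_t)\le \tfrac14 f^*(4t\,b)$, giving $t\,b\cdot\xi_t-f(-\xi_t)\ge -\tfrac54 f(-\xi_t)-\tfrac14 f^*(4t\,b)$. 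These two pointwise inequalities integrate directly to the desired leading terms, leaving only the deterministic quantities $\tfrac14\int_{t_1}^{t_2} f^*(\mp 4t\,b)\,dt$ to be controlled.

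The remaining point, which I expect to be the only genuinely technical one, is to bound these $f^*$-integrals by $|b|CT^2/2$. Since $f$ has superlinear growth, $f^*$ is a finite convex function (Theorem 12.2 in \cite{R97}), hence locally Lipschitz; moreover $f^*(0)=\sup_x(-f(x))=0$ because $f\ge 0$ and $f(0)=0$. Restricting to the compact segment $\{\pm 4t\,b:t\in[0,T]\}$, on which $f^*$ is Lipschitz with some constant $C=C(f,b,T)$, I obtain $f^*(\mp4t\,b)=f^*(\mp4t\,b)-f^*(0)\le 4C|b|\,t$, and therefore $\tfrac14\int_{t_1}^{t_2} f^*(\mp4t\,b)\,dt\le C|b|\int_0^T t\,dt=|b|CT^2/2$. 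Combining this with the integration-by-parts identity and the two pointwise Fenchel estimates yields both sides of the claim. The main obstacle is thus not any single estimate but the bookkeeping needed to split $f$ into the fractions $\tfrac34,\tfrac14$ (resp.\ $\tfrac54,\tfrac14$) so that the leftover conjugate terms are precisely the ones controllable by the Lipschitz bound at $0$.
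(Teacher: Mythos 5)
Your proof is correct, but the key technical step is genuinely different from the paper's. Both arguments begin with the same integration by parts, producing the boundary term $-b\cdot\big(t_1X^\xi_{t_1}-t_2X^\xi_{t_2}\big)$, and both rest on the idea of sacrificing a quarter of $f$ to absorb the drift term $t\,b\cdot\xi_t$; the difference lies in how the leftover deterministic quantity is generated and bounded. The paper argues by splitting the domain according to $A_t=\{|\xi_t|\leq C\}$ versus $A_t^c$, where the threshold $C$ is chosen from the superlinear growth of $f$ so that $|y|/f(y)\leq \tfrac{1}{4|b|T}$ for $|y|>C$: on $A_t^c$ the drift is dominated pointwise by $\tfrac14 f(-\xi_t)$, while on $A_t$ it is bounded by $|b|Ct$ and integrates to $|b|CT^2/2$. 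You instead invoke the Fenchel--Young inequality for the rescaled function $\tfrac14 f$, so the leftover is the deterministic conjugate term $\tfrac14 f^*(\mp 4tb)$, which you control via the local Lipschitz continuity of the finite convex function $f^*$ together with $f^*(0)=0$. Your route avoids the case distinction entirely and parallels the paper's own proof of Proposition \ref{icv}, where the unrescaled estimate $t\,b\cdot\xi_t-f(-\xi_t)\leq f^*(-tb)$ appears; what it costs is the (standard) input that $f^*$ is everywhere finite, hence locally Lipschitz --- a fact the paper anyway quotes from Theorem 12.2 of \cite{R97} --- whereas the paper's argument needs nothing beyond the definition of superlinear growth. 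Both proofs deliver a constant $C$ depending only on $(f,b,T)$, though with different meanings: the superlinearity threshold in the paper, versus the Lipschitz constant of $f^*$ on the compact segment $\{\pm 4tb:\,t\in[0,T]\}$ in yours.
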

\begin{proof}
Set $\gamma:=\frac1{4|b|T}$. Because $\lim_{|x|\longrightarrow \infty}\frac{|x|}{f(x)}=0$, there exists a constant $C_\gamma=C>0$ such that $\tfrac {|y|}{f(y)}\leq \gamma$ for $|y|> C$. Consider now the set $A_t:=\{|\xi_t|\leq C\}$. Then we have using integration by parts: 
\begin{IEEEeqnarray*}{rCl}
  \lefteqn{\int^{t_2}_{t_1}\left(-b\cdot X^\xi_t+f(-\xi_t)\right)\,dt}\\
&\geq& b\cdot\big(t_1X^\xi_{t_1}-t_2X^\xi_{t_2}\big) -\int^{t_2}_{t_1} \b1_{A_t}|b\cdot\xi_t| t\,dt + \int^{t_2}_{t_1} \b1_{A_t}f(-\xi_t)\,dt\\
&&+\> \int^{t_2}_{t_1} \b1_{A_t^c}f(-\xi_t)\Big(1+ \frac{b\cdot\xi_t t}{f(-\xi_t)}\Big)\,dt\\
&\geq& b\cdot\big(t_1X^\xi_{t_1}-t_2X^\xi_{t_2}\big)+\frac14 \int^{t_2}_{t_1} \b1_{A_t}f(-\xi_t)\,dt+\frac34 \int^{t_2}_{t_1} f(-\xi_t)\,dt  - |b|CT^2/2,
\end{IEEEeqnarray*}
using the above estimates. This proves the lower inequality. To prove the upper inequality, it is sufficient to follow step by step the preceding arguments and to give an upper bound of the corresponding terms, instead of a lower bound.
\end{proof}

The subsequent lemma shows that a sequence of strategies in $\dot{\X}^1(T,X_0)$ such that 
the corresponding expected utilities converge to the supremum in \eqref{omp1}  can be chosen in a way that it belongs to some $\overline{K}_{m}$, for $m$ large enough. This will be crucial  for proving the existence of an optimal strategy. Here, we will use  the fundamental property \eqref{enf} satisfied by the  sequence $(\xi^n)$. 
\begin{Lem}\label{osp} Let $(\xi^n)$ be a sequence of strategies such that 
\begin{equation} 
 \xi^n\in \dot{\X}^1(T, X_0)\; \text{and } \;\E\left[u\left(\cR_{T}^{\xi^{n}}\right)\right]\nearrow\sup_{\xi\in\dot{\X}^1(T, X_0)}\E\left[u\left(\cR_{T}^{\xi}\right)\right].\label{sxg}
\end{equation}
Then there exists a constant $m>0$ such that
 $$\xi^n\in \overline{K}_{m}=\Big\{\xi\in \dot{\X}^1(T,X_0)\big|\; \E\bigg[\int^T_0f(-\xi_t)\,dt\bigg]\leq m\Big\},$$ for every $n\in\N$.
\end{Lem}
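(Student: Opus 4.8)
The plan is to bound the nonnegative quantities $I_n := \E\big[\int_0^T f(-\xi^n_t)\,dt\big] \in [0,\infty]$ by a single constant, by playing the exponential estimate \eqref{enf}---which, by the preceding remark, may be assumed to hold for our maximizing sequence---against Lemma \ref{bin}, whose very purpose is to trade the drift integral $\int_0^T b\cdot X^n_t\,dt$ against the cost integral $\int_0^T f(-\xi^n_t)\,dt$.

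First I would turn \eqref{enf} into a lower bound for the expected revenue. Writing $C_1 := 1 + \tfrac1{A_1} - V_2(T,X_0,R_0) \ge 1$ (note $V_2 \le 0$), the bound $\E[\exp(-A_1\cR_T^{\xi^n})] \le C_1$ together with the elementary estimate $(-A_1\cR_T^{\xi^n})^+ \le \exp(-A_1\cR_T^{\xi^n}) + 1$ shows that $\E[\cR_T^{\xi^n}]$ is well defined in $(-\infty,\infty]$, and Jensen's inequality applied to the convex function $\exp$ then gives $\exp(-A_1\E[\cR_T^{\xi^n}]) \le \E[\exp(-A_1\cR_T^{\xi^n})] \le C_1$, that is, $\E[\cR_T^{\xi^n}] \ge -\tfrac1{A_1}\log C_1$.

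Next I would decompose the revenue along the dynamics \eqref{sde1}. Since $\xi^n \in \dot{\X}^1(T,X_0)$, condition \eqref{ias} ensures that $\int_0^T (X^n_t)^\top\sigma\,dB_t$ is a square-integrable martingale (hence of zero mean) and that $\int_0^T b\cdot X^n_t\,dt$ is integrable; thus $Z_n := R_0 + \int_0^T (X^n_t)^\top\sigma\,dB_t + \int_0^T b\cdot X^n_t\,dt$ is integrable. From the pathwise identity $\cR_T^{\xi^n} + \int_0^T f(-\xi^n_t)\,dt = Z_n$ and $f \ge 0$ one gets $(\cR_T^{\xi^n})^+ \le Z_n^+$, while the previous step controls $(\cR_T^{\xi^n})^-$; hence $\cR_T^{\xi^n}$ is integrable, and therefore so is $\int_0^T f(-\xi^n_t)\,dt$, i.e.\ $I_n < \infty$. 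Taking expectations then yields
\[
I_n = R_0 + \E\Big[\int_0^T b\cdot X^n_t\,dt\Big] - \E[\cR_T^{\xi^n}] \le R_0 + \E\Big[\int_0^T b\cdot X^n_t\,dt\Big] + \tfrac1{A_1}\log C_1.
\]

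Finally I would invoke Lemma \ref{bin} with $t_1 = 0$ and $t_2 = T$: the finite-fuel constraint $X^n_T = 0$ makes the boundary term $b\cdot(t_1 X^n_{t_1} - t_2 X^n_{t_2})$ vanish, so its upper estimate reads $\int_0^T b\cdot X^n_t\,dt \le \tfrac14\int_0^T f(-\xi^n_t)\,dt + |b|CT^2/2$ almost surely, and taking expectations gives $\E[\int_0^T b\cdot X^n_t\,dt] \le \tfrac14 I_n + |b|CT^2/2$. Substituting this into the display above absorbs a quarter of $I_n$ into the left-hand side, leaving $\tfrac34 I_n \le R_0 + |b|CT^2/2 + \tfrac1{A_1}\log C_1$, so that $m := \max\{1,\, \tfrac43(R_0 + |b|CT^2/2 + \tfrac1{A_1}\log C_1)\}$ works uniformly in $n$ (the degenerate case $b = 0$ being immediate, since then the drift integral vanishes and $I_n \le R_0 + \tfrac1{A_1}\log C_1$ directly). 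I expect the main obstacle to be the integrability bookkeeping that makes $I_n$ finite and legitimizes splitting and rearranging the expectations; the decisive structural input, however, is the constant $\tfrac34$ furnished by Lemma \ref{bin}, which is exactly what guarantees a strictly positive remaining coefficient $1 - \tfrac14$ once the drift term is moved across.
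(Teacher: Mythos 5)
Your proof is correct and follows essentially the same route as the paper's: both turn \eqref{enf} into a lower bound on the expected revenue (you via Jensen's inequality, the paper via $e^x\geq 1+x$ together with the zero-mean property of the stochastic integral), and both then invoke Lemma \ref{bin} with $t_1=0$, $t_2=T$ — where the boundary term vanishes because $X^{\xi^n}_T=0$ — to absorb the drift integral into $\tfrac14\E\big[\int_0^T f(-\xi^n_t)\,dt\big]$ and solve for the cost integral. The additional integrability bookkeeping you supply, and your explicit treatment of the degenerate case $b=0$ (which Lemma \ref{bin} formally excludes), are sound and merely make explicit what the paper leaves implicit.
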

\begin{proof}
Set $\overline{M}:=\overline{M}(T,X_0,R_0)=1+1/A_1-V_2(T,X_0,R_0)$. We first note that, due to \eqref{enf}, we have 
\[
 \E\bigg[e^{-A_1\Big(R_0+\int_0^T (X^{\xi^n}_t)^\top\sigma\,dB_t+\int^T_0b\cdot X^{\xi^n}_t\,dt-\int^T_0f(-{\xi^n}_t)\,dt\Big)}\bigg]\leq 1/A_1-V_2(T,X_0,R_0)= \overline{M}.
\]
We want to show that 
\begin{equation}
\xi^n\in \widetilde{K}_{\alpha}:=\bigg\{\xi\in \dot{\X}^1(T,X_0)\big|\; \E\bigg[\int^T_0-b\cdot X^\xi_t+f(-\xi_t)\,dt\bigg]\leq \alpha\bigg\},\label{kca}
\end{equation}
for $\alpha\geq\frac{\overline{M}-1}{A_1}+R_0$. To prove \eqref{kca}, we use the fact that $e^x\geq 1+x$, for all $x\in\R,$ as well as the martingale property of $Y_T:=\int_0^T (X^{\xi^n}_t)^\top\sigma\,dB_t$ (which is satisfied, due to \eqref{ias}), whence we infer
\begin{align*}
 \overline{M}& \geq
\E\bigg[-A_1\Big(R_0+\int^T_0b\cdot X^{\xi^n}_t\,dt-\int^T_0f(-\xi^n_t)\,dt\Big)\bigg]+1.
\end{align*}
Then
$$
\E\bigg[\int^T_0-b\cdot X^{\xi^n}_t+f(-\xi^n_t)\,dt\bigg]\leq \frac{\overline{M}-1}{A_1}+R_0,
$$
and therefore \eqref{kca} is true.

 Using now Lemma \ref{bin}  we obtain (when setting $N:=|b|CT^2$):
\begin{equation*}
\alpha\geq\frac{\overline{M}-1}{A_1}+R_0\geq E\bigg[\int^T_0-b\cdot X^{\xi^n}_t+f(-\xi^n_t)\,dt\bigg]\geq \frac34\E\bigg[\int^T_0 f(-\xi^n_t)\,dt\bigg] - N.
\end{equation*}
Finally, for $m\geq \frac43 (\alpha+N)$  we get
\[
\E\bigg[\int^T_0 f(-\xi^n_t)\,dt\bigg]\leq m,
\]
which shows that $\xi^n\in \overline{K}_{m}$. 
\end{proof}
\begin{rem}\label{km}
Due to the preceding lemma, we can w.l.o.g assume that the supremum  in \eqref{omp1}  can be taken over strategies that belong to the set  $ \overline{K}_{m}$, for suitable $m$. More precisely, 
 \eqref{omp1} becomes 
\begin{equation}
V(T,X_0,R_0)=\sup_{\xi\in\dot{\X}^1(T, X_0)}\E\left[u\left(\cR_{T}^{\xi}\right)\right]=\sup_{\xi\in \overline{K}_{m}}\E\left[u\left(\cR_{T}^{\xi}\right)\right],
\end{equation}
where $m$ has to be chosen such that
\begin{equation}
m\geq \frac43\Big(\frac{-V_2(T,X_0,R_0)}{A_1}+R_0 +N\Big)\label{m&M}.
\end{equation}

\xqed{\diamondsuit}
\end{rem}
In the following, we will prove a fundamental property of the map $\xi\longmapsto \E\Big[u\big(\cR^\xi_T\big)\Big]$, which we will also use to prove the continuity of the 
value function for  the underlying  maximization problem.
\begin{Prop}\label{wtc}
The map 
$\xi\longmapsto \E\Big[u\big(\cR^\xi_T\big)\Big]$ is upper semi-continuous on $\dot{\X}^1(T, X_0)$ with respect to the weak topology in $\J$.
\end{Prop}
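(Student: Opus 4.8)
The plan is to deduce the weak upper semi-continuity from two ingredients: that $\xi\mapsto\E[u(\cR^\xi_T)]$ is concave, and that it is sequentially upper semi-continuous for the \emph{strong} topology of $\J$. Once both are in hand, the passage to the weak topology is the concave counterpart of Corollary \ref{lscc}: a concave functional that is strongly upper semi-continuous on the convex set $\dot{\X}^1(T,X_0)$ is automatically weakly upper semi-continuous there, which one gets either by applying Theorem \ref{wccf} to its hypograph (translated so as to sit in a vector space) or, staying inside $\dot{\X}^1(T,X_0)$, by Mazur's lemma. Concavity is immediate from the computation already used in Proposition \ref{cvf}: the map $\xi\mapsto X^\xi$ is affine, hence so are $\xi\mapsto\int_0^T b\cdot X^\xi_t\,dt$ and $\xi\mapsto\int_0^T(X^\xi_t)^\top\sigma\,dB_t$, while $\xi\mapsto-\int_0^T f(-\xi_t)\,dt$ is concave because $f$ is convex; composing with the increasing concave $u$ and taking expectations preserves concavity. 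So the real content lies in the strong upper semi-continuity.

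The main obstacle is the stochastic integral term, since strong $L^1$-convergence of the controls says nothing \emph{a priori} about convergence of Itô integrals. I would dispose of this by turning it into an ordinary integral via integration by parts. Since $X^\xi$ is absolutely continuous, hence of finite variation, the product rule for $(X^\xi_t)^\top\sigma B_t$ carries no bracket term, and using $X^\xi_T=0$ and $B_0=0$ one obtains the pathwise identity
\[
\int_0^T (X^\xi_t)^\top\sigma\,dB_t=\int_0^T \xi_t^\top\sigma B_t\,dt \qquad \pas
\]
(the stochastic integral being well defined thanks to \eqref{ias}). Now let $\xi^n\to\xi$ strongly in $\J$ and set $\delta^n:=\int_0^T|\xi^n_t-\xi_t|\,dt$, so that $\E[\delta^n]=\|\xi^n-\xi\|_{\J}\to0$. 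Passing to a subsequence I may assume $\delta^n\to0$ $\P$-a.s.\ and $\xi^n\to\xi$ $\P\otimes\lambda$-a.e. The identity then gives $\big|\int_0^T(X^{\xi^n}_t)^\top\sigma\,dB_t-\int_0^T(X^\xi_t)^\top\sigma\,dB_t\big|\leq\big(\sup_{0\leq t\leq T}|\sigma B_t|\big)\,\delta^n\to0$ $\P$-a.s., the supremum being finite by path continuity; likewise $\sup_t|X^{\xi^n}_t-X^\xi_t|\leq\delta^n$ controls the drift term, so $\int_0^T b\cdot X^{\xi^n}_t\,dt\to\int_0^T b\cdot X^\xi_t\,dt$ $\P$-a.s.

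With the two affine terms converging almost surely, it remains to treat the cost. Since $f\geq0$ is continuous and $\xi^n\to\xi$ a.e., Fatou's lemma (applied to the $dt$-integral, for each $\omega$) gives $\int_0^T f(-\xi_t)\,dt\leq\liminf_n\int_0^T f(-\xi^n_t)\,dt$ $\P$-a.s., whence $\limsup_n\cR^{\xi^n}_T\leq\cR^\xi_T$ $\P$-a.s. As $u$ is continuous and nondecreasing this propagates to $\limsup_n u(\cR^{\xi^n}_T)\leq u(\cR^\xi_T)$ $\P$-a.s. Finally, \eqref{ubd1} supplies the uniform upper bound $u(\cR^{\xi^n}_T)\leq u_1(\cR^{\xi^n}_T)\leq 1/A_1$, so the reverse Fatou lemma applies and
\[
\limsup_n\E\big[u(\cR^{\xi^n}_T)\big]\leq\E\big[\limsup_n u(\cR^{\xi^n}_T)\big]\leq\E\big[u(\cR^\xi_T)\big].
\]
A routine subsequence argument removes the extraction, establishing strong sequential upper semi-continuity on $\dot{\X}^1(T,X_0)$.

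To conclude, let $\xi^n\rightharpoonup\xi$ weakly in $\J$ with $\xi^n,\xi\in\dot{\X}^1(T,X_0)$, and put $\ell:=\limsup_n\E[u(\cR^{\xi^n}_T)]$; after passing to a subsequence I may assume $\E[u(\cR^{\xi^n}_T)]\to\ell$. By Mazur's lemma there exist finite convex combinations $\widetilde{\xi}^{\,j}=\sum_k\lambda^j_k\,\xi^k$, with indices drawn from the tail, converging strongly to $\xi$; by convexity $\widetilde{\xi}^{\,j}\in\dot{\X}^1(T,X_0)$. Concavity yields $\E[u(\cR^{\widetilde{\xi}^{\,j}}_T)]\geq\sum_k\lambda^j_k\,\E[u(\cR^{\xi^k}_T)]\to\ell$, while the strong upper semi-continuity just proved yields $\limsup_j\E[u(\cR^{\widetilde{\xi}^{\,j}}_T)]\leq\E[u(\cR^\xi_T)]$. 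Combining the two inequalities gives $\ell\leq\E[u(\cR^\xi_T)]$, which is exactly the asserted weak upper semi-continuity. The delicate point throughout is the stochastic integral, and the whole argument hinges on replacing it by the finite-variation representation above, which is what makes the strong topology tractable and, through concavity, transfers the conclusion to the weak one.
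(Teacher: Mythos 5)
Your proof is correct, and its skeleton agrees with the paper's: both reduce weak upper semi-continuity to strong upper semi-continuity via concavity (the paper invokes Corollary \ref{clsc}; you invoke Mazur's lemma inside the convex set $\dot{\X}^1(T,X_0)$ --- the same functional-analytic fact, with the advantage that your version never needs to extend the functional beyond $\dot{\X}^1(T,X_0)$ or worry about closedness of that set). The genuine difference lies in the strong semi-continuity step. The paper disposes of it in one line by ``directly applying'' Corollary \ref{lscc}; but that corollary concerns integral functionals $\alpha\mapsto\int F(\alpha)\,d\mu$ for a fixed concave $F:\R^d\to\R$ bounded above, whereas $\xi\mapsto\E\big[u\big(\cR^\xi_T\big)\big]$ composes $u$ with the stochastic integral, the drift term and the cost before taking expectations, so the citation is not literal and in particular leaves the martingale term untreated. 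Your argument fills exactly this gap: the finite-variation integration by parts $\int_0^T(X^\xi_t)^\top\sigma\,dB_t=\int_0^T\xi_t^\top\sigma B_t\,dt$ (essentially the observation behind the paper's own Lemma \ref{xsc}, used there for the continuity of $V$) turns the stochastic integral into a pathwise Lebesgue integral dominated by $\sup_{0\leq t\leq T}|\sigma B_t|\cdot\int_0^T|\xi^n_t-\xi_t|\,dt$, after which a.e.-convergent subsequences, Fatou's lemma for the cost term, and the reverse Fatou lemma under the uniform bound $u\leq 1/A_1$ from \eqref{ubd1} yield the claim, with a standard subsequence argument removing the extraction. In short, you prove what the paper only asserts: the paper's route is shorter but, as written, incomplete at precisely the point your integration-by-parts device resolves, while your route is self-contained and consistent with the toolbox the paper deploys elsewhere.
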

\begin{proof}
Since the map $\xi\longmapsto \E\Big[u\big(\cR^\xi_T\big)\Big]$ is concave, it is sufficient to show that the preceding map  is upper semi-continuous with respect to the strong topology of $\J$, due to Corollary \ref{clsc}. Toward this end, let   $(\widetilde{\xi}^n)$ be a sequence in $\dot{\X}^1(T,X_0)$  that converges to $\xi\in\dot{\X}^1(T,X_0)$, strongly in $\J$. Since we are dealing with a metric space, we can use the following characterization of upper semi-continuity at $\xi$:
\begin{equation}
\limsup_{k}\E\Big[u\big(\cR^{\widetilde{\xi}^{n_k}}_T\big)\Big] \leq  \E\Big[u\big(\cR^\xi_T\big)\Big].\label{ls}
\end{equation}
But we also have that  $\widetilde{\xi}^n$ converges  weakly to $\xi$ and hence we can directly apply Corollary \ref{lscc} to obtain \eqref{ls}.
\end{proof}
Now we are ready for the proof of the existence and uniqueness of the optimal strategy. 

\begin{proof}[Proof of Theorem \ref{eos}]
Let  $(\xi^{n})_{n\in\N}$ be such that 
\begin{equation*} 
 \xi^n\in \dot{\X}^1(T, X_0,R_0)\; \text{ and } \;\E\left[u\left(\cR_{T}^{\xi^{n}}\right)\right]\nearrow\sup_{\xi\in\dot{\X}^1(T, X_0)}\E\left[u\left(\cR_{T}^{\xi}\right)\right].
\end{equation*} 
 Lemma \ref{osp} implies that there exists a subsequence $\left(\xi^{n_k}\right)$ of $\left(\xi^n\right)$ and some $\xi^*\in\dot{\X}^1(T,X_0)$ such that $\xi^{n_k}\longrightarrow\xi^*$, weakly in $\J$. Due to Proposition \ref{wtc}, we get
\begin{align*}
V(T,X_0,R_0)&=\limsup_{k}  \E\Big[u\big(\cR^{\xi^{n_k}}_T\big)\Big]\leq  \E\Big[u\big(\cR^{\xi^*}_T\big)\Big],
\end{align*}
which  proves that $\xi^*$ is an optimal strategy for the maximization problem \eqref{omp}.
The uniqueness of the optimal strategy is a direct consequence of  the convexity  of $\dot{\X}^1(T, X_0)$ and (strict) concavity of $\xi\longmapsto \E[u(\cR_{T}^{\xi})]$.
\end{proof}
It is established in \citet{SST10} that the optimal strategies for CARA value functions are such that the corresponding revenues  have finite exponential moments, i.e.,
$
\E\left[\exp\big(-\lambda\cR^{\xi^{*,i}}_T\big)\right]<\infty,
$
for all $\lambda>0$, where $\xi^{*,i}$ are the optimal strategies for the value functions with respective CARA coefficients $A_1$ and $A_2$. This is due to the fact that the optimal strategies are deterministic, and hence $\int_0^T (X^{\xi^{*,i}}_t)^\top\sigma\,dB_t$ have finite exponential moments. However, for the optimal strategy in \eqref{omp1}, we only have $
\E\Big[\exp\big(-\lambda\cR^{\xi^{*}}_T\big)\Big]<\infty
$ if $\lambda\leq A_1$. 
 But otherwise (for $\lambda>A_1$) it is not clear whether or not the analogue holds. Thus, in order to avoid integrability issues, we will have to make the following assumptions. 
\begin{Ass}\label{ar}
We suppose  that the moment generating function of the revenues of the optimal strategy, denoted by $M_{\cR_{T}^{\xi^*}},$ is defined for $2A_2$, where we set 
\[
M_{\cR_{T}^{\xi^*}}(A):=\E\big[\exp(-A\cR_{T}^{\xi	^*}\big)\big].
\]
Thus, we will restrict ourselves  to  the following set of strategies:
\begin{equation}
\dot{\X}^1_{2A_2}(T, X_0):=\Big\{ \xi \in\dot{\X}^1(T,X_0)\,|\, \E\big[\exp(-2A_2\cR_{T}^{\xi}\big)\big]\leq M_{\cR_{T}^{\xi^*}}(2A_2)+1\Big\}\label{4a2}.
\end{equation}
 \end{Ass}
 \begin{Prop}
 The set $\dot{\X}^1_{2A_2}(T, X_0)$ 
is a closed convex set with respect to the strong topology in $\J$ (and hence with respect to the weak topology).
 \end{Prop}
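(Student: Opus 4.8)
The plan is to handle the two defining constraints separately: convexity will follow from soft arguments, while closedness requires passing the exponential-moment bound to a strong $\J$-limit, and the genuine difficulty there is the stochastic integral appearing in $\cR^\xi_T$. Once the convex set is shown to be strongly closed, its weak closedness is immediate from Theorem \ref{wccf}. For convexity, recall from the proof of Proposition \ref{cvf} that $\xi\mapsto\cR^\xi_T$ is $\P$-a.s.\ concave, since the stochastic integral and the drift term are affine in $\xi$ while $-\int_0^T f(-\xi_t)\,dt$ is concave by convexity of $f$. As $x\mapsto\exp(-2A_2x)$ is convex and nonincreasing, $\xi\mapsto\exp(-2A_2\cR^\xi_T)$ is pathwise convex, hence $\xi\mapsto\E[\exp(-2A_2\cR^\xi_T)]$ is convex and its sublevel set $\{\E[\exp(-2A_2\cR^\xi_T)]\leq M_{\cR_{T}^{\xi^*}}(2A_2)+1\}$ is convex; intersecting with the convex set $\dot{\X}^1(T,X_0)$ (convexity of which follows from the affine dependence of $X^\xi$ on $\xi$ and convexity of $f$ in \eqref{ias}) yields convexity of $\dot{\X}^1_{2A_2}(T,X_0)$.

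For closedness, let $(\xi^n)\subset\dot{\X}^1_{2A_2}(T,X_0)$ converge strongly in $\J$ to some $\xi$; I must show $\xi\in\dot{\X}^1_{2A_2}(T,X_0)$. First I would verify $\xi\in\dot{\X}^1(T,X_0)$: the uniform bound $\E[\exp(-2A_2\cR^{\xi^n}_T)]\leq M_{\cR_{T}^{\xi^*}}(2A_2)+1$ yields, via $e^x\geq 1+x$ together with the martingale property $\E[\int_0^T(X^{\xi^n}_t)^\top\sigma\,dB_t]=0$ and Lemma \ref{bin} (exactly as in the proof of Lemma \ref{osp}), a uniform cost bound $\E[\int_0^T f(-\xi^n_t)\,dt]\leq c$. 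Thus $(\xi^n)\subset\overline{K}_c$ for a suitable $c$, and since $\overline{K}_c$ is strongly closed in $\J$ (Proposition \ref{wsc}), the limit satisfies $\xi\in\overline{K}_c\subseteq\dot{\X}^1(T,X_0)$; in particular \eqref{ias} holds for $\xi$, so $\int_0^T(X^\xi_t)^\top\Sigma X^\xi_t\,dt<\infty$ $\P$-a.s.\ and the Itô integral $\int_0^T(X^\xi_t)^\top\sigma\,dB_t$ is well defined.

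It remains to pass the exponential-moment bound to $\xi$, and this is where the work lies. Extracting a subsequence, I may assume $\xi^n\to\xi$ $\P\otimes\lambda$-a.e.\ with an $L^1(\P\otimes\lambda)$-dominating function $g$ (the standard fast-Cauchy construction). Pathwise domination then gives $X^{\xi^n}_t\to X^\xi_t$ for every $t$ and $\sup_{n,t}|X^{\xi^n}_t|\leq|X_0|+\int_0^T g_s\,ds<\infty$ $\P$-a.s., whence $\int_0^T|\sigma^\top(X^{\xi^n}_t-X^\xi_t)|^2\,dt\to 0$ $\P$-a.s.\ by dominated convergence on $[0,T]$. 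The hard part is the stochastic integral: I would invoke the stochastic dominated convergence theorem (continuity of the Itô integral with respect to convergence in probability of $\int_0^T|\cdot|^2\,dt$) to conclude $\int_0^T(X^{\xi^n}_t)^\top\sigma\,dB_t\to\int_0^T(X^\xi_t)^\top\sigma\,dB_t$ in probability, hence $\P$-a.s.\ along a further subsequence. Combining this with the a.s.\ convergence of the drift term and with Fatou's lemma for the nonnegative cost term ($\liminf_n\int_0^T f(-\xi^n_t)\,dt\geq\int_0^T f(-\xi_t)\,dt$), I obtain $\limsup_n\cR^{\xi^n}_T\leq\cR^\xi_T$ $\P$-a.s.

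Since $\exp(-2A_2\cdot)$ is continuous and decreasing, the last inequality gives $\liminf_n\exp(-2A_2\cR^{\xi^n}_T)\geq\exp(-2A_2\cR^\xi_T)$ $\P$-a.s., and a final application of Fatou's lemma yields $\E[\exp(-2A_2\cR^\xi_T)]\leq\liminf_n\E[\exp(-2A_2\cR^{\xi^n}_T)]\leq M_{\cR_{T}^{\xi^*}}(2A_2)+1$, so that $\xi\in\dot{\X}^1_{2A_2}(T,X_0)$ and the set is strongly closed. Being convex and strongly closed, it is weakly closed by Theorem \ref{wccf}, which concludes the argument. The main obstacle throughout is the stochastic integral, which is not continuous under mere weak (or even strong $\J$) convergence of $(\xi^n)$; the $L^1$-domination trick combined with the stochastic dominated convergence theorem is what makes it tractable, and it is precisely the reason for passing to a.e.-convergent subsequences rather than arguing with the full sequence directly.
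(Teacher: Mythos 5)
Your proof is correct, but it takes a genuinely more hands-on route than the paper. The paper disposes of the whole proposition in three lines: convexity of $\xi\mapsto \E[\exp(-2A_2\cR_T^{\xi})]$ gives convexity of the set, and for closedness it notes that strong convergence implies weak convergence and then simply cites Corollary \ref{lscc} to pass the bound $\E[\exp(-2A_2\cR_T^{\zeta})]\leq\liminf_n\E[\exp(-2A_2\cR_T^{\zeta^n})]\leq M_{\cR_T^{\xi^*}}(2A_2)+1$ to the limit. You instead prove the required semicontinuity directly along a strongly convergent sequence: fast-Cauchy subsequence with $L^1(\P\otimes\lambda)$-domination, pathwise convergence of $X^{\xi^n}$, the stochastic dominated convergence theorem for the It\^o integral, pathwise Fatou for the cost term, and a final Fatou in $\omega$. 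This buys two things the paper's argument glosses over. First, Corollary \ref{lscc} as stated applies to integrands $F(x_n)$ with $F:\R^d\to\R$ convex and bounded below, whereas the functional here involves a stochastic integral and is not of that form; your pathwise analysis (in particular the treatment of the It\^o integral, which could equally be handled by the paper's own Lemma \ref{xsc} via integration by parts, since along your dominated subsequence one even has pathwise strong $\L^1[0,T]$ convergence) is exactly the missing content that makes the citation rigorous, in the same spirit as the proof of Proposition \ref{wtc}. Second, you verify that the limit $\xi$ actually belongs to the ambient set $\dot{\X}^1(T,X_0)$ — via the uniform cost bound in the style of Lemma \ref{osp} combined with Proposition \ref{wsc} and Lemma \ref{wcx} — a point the paper's proof does not address at all, even though membership in $\dot{\X}^1(T,X_0)$ is part of the definition \eqref{4a2}. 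The price is length: the paper's soft argument (convexity plus weak lower semicontinuity) is what one would quote in practice, while yours is the self-contained justification of it.
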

 \begin{proof} Due to the convexity of the map $\xi\mapsto \E[\exp(-A(\cR_{T}^{\xi})]$, the preceding set is convex. To show that it is closed in $\J$, we take a sequence $(\zeta^n)$ in $\dot{\X}^1_{2A_2}(T, X_0,R_0)$ that converges to $\zeta$ in $\J$. Since $\zeta^n$ in particular converges  weakly to $\zeta$, we can use Corollary \ref{lscc} to obtain
\[
\E\big[\exp(-2A_2\cR_{T}^{\zeta}\big)\big]\leq\liminf\E\big[\exp(-2A_2\cR_{T}^{\zeta^n}\big)\big]\leq M_{\cR_{T}^{\xi^*}}(2A_2)+1,
\]
which completes the proof.
\end{proof}

 \begin{rem}\label{??}
 As argued before, if $M_{\cR_{T}^{\xi^*}}(2A2)<\infty$, then we also have $$M_{\cR_{T}^{\xi^*}}(A)<\infty\quad \text{ for all }\quad 0<A<2A_2.$$ Note that if we suppose that $u$ is a convex combination of CARA utility functions, then $M_{\cR_{T}^{\xi^*}}$ is defined on $[A_1, A_2]$.  However, we need $M_{\cR_{T}^{\xi^*}}(2A_2)$ to be  well-defined, since we will have to apply the Cauchy-Schwarz inequality to prove the continuity of the value function. 
 \end{rem}

\section{Regularity properties of the value function and the dynamic programming principle}
\subsection{Partial Differentiability of the value function}
In this section, we  will establish that the value function $V$ is continuously differentiable with respect to the parameter $R\in\R$, for fixed $(T,X)\in\;]0,\infty[\times\R^d$. Surprisingly, we just need the existence and uniqueness of the optimal strategy to prove it.  Compared to the proof of the continuity of the value function in its parameters, this one is
essentially easier, due to fact that, for fixed $T, X_0$, the value function is concave as showed in Proposition \ref{cvf}.

Further, we need to  prove the following result.

\begin{Prop}\label{u'}
Let $\xi\in \dot{\X}^1_{2A_2}(T,X_0)$. Then, the map
$
R_0\longmapsto \E\big[u\big(\cR_{T}^{\xi}+R_0\big)\big]
$
is twice differentiable on $\R$ with first and second derivative given by $\E\big[u'\big(\cR_{T}^{\xi}\big)\big]$ and $\E\big[u''\big(\cR_{T}^{\xi}\big)\big]$, respectively. 
\end{Prop}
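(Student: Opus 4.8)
The plan is to justify differentiating twice under the expectation sign, reducing the statement to an application of the classical differentiation-under-the-integral theorem (the parameter-dependent dominated convergence theorem). The three ingredients to verify are: that $R_0 \mapsto u(\cR_T^\xi + R_0)$ is $\P$-integrable for each $R_0$; that for $\P$-a.e.\ $\omega$ the map is twice differentiable in $R_0$ with derivatives $u'(\cR_T^\xi + R_0)$ and $u''(\cR_T^\xi + R_0)$, which is immediate since $u \in C^2$ by \eqref{apc}; and, the only nontrivial point, that these derivatives admit a $\P$-integrable dominating function that is uniform for $R_0$ ranging over a neighbourhood of any fixed point.

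First I would fix $r \in \R$ and restrict attention to $R_0 \in [r-1, r+1]$. For the first derivative I would invoke the upper bound in \eqref{ieu'}, namely $0 < u'(\cR_T^\xi + R_0) \leq \exp(-A_2(\cR_T^\xi + R_0)) + 1 \leq e^{A_2}\,e^{-A_2 r}\exp(-A_2\cR_T^\xi) + 1$, valid for all such $R_0$. The right-hand side is $\P$-integrable: indeed $\xi \in \dot{\X}^1_{2A_2}(T,X_0)$ guarantees $\E[\exp(-2A_2\cR_T^\xi)] < \infty$ by \eqref{4a2}, and hence $\E[\exp(-A_2\cR_T^\xi)] < \infty$, the exponential moment at $A_2 < 2A_2$ being finite as noted in Remark~\ref{??}. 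This legitimizes differentiation under the expectation and yields $\frac{d}{dR_0}\E[u(\cR_T^\xi + R_0)] = \E[u'(\cR_T^\xi + R_0)]$. The integrability of $u(\cR_T^\xi + R_0)$ itself follows from the two-sided bound \eqref{ubd1} together with the same exponential moment.

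For the second derivative I would repeat the argument starting from $R_0 \mapsto u'(\cR_T^\xi + R_0)$, whose pointwise $R_0$-derivative is $u''(\cR_T^\xi + R_0)$. The Arrow-Pratt bound \eqref{apc} gives $|u''| = (-u''/u')\,u' \leq A_2\,u'$, so that $|u''(\cR_T^\xi + R_0)| \leq A_2\big(\exp(-A_2(\cR_T^\xi + R_0)) + 1\big)$, which is dominated, over $R_0 \in [r-1,r+1]$, by $A_2$ times exactly the integrable majorant used above. A second application of the differentiation-under-the-integral theorem then gives $\frac{d^2}{dR_0^2}\E[u(\cR_T^\xi + R_0)] = \E[u''(\cR_T^\xi + R_0)]$, and continuity of $r \mapsto \E[u'(\cR_T^\xi+r)]$ and $r\mapsto \E[u''(\cR_T^\xi+r)]$ follows from the same domination via dominated convergence.

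The only real obstacle is the construction of the uniform integrable majorant, and this is entirely controlled by the finiteness of the exponential moment $\E[\exp(-A_2\cR_T^\xi)]$; this is precisely what the restriction to $\dot{\X}^1_{2A_2}(T,X_0)$ secures, with room to spare, since only the moment at $A_2$ is needed here. Once that moment is in hand, every remaining step is a routine verification of the hypotheses of the Leibniz rule.
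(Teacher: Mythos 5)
Your proposal is correct and takes essentially the same approach as the paper: both justify differentiating twice under the expectation by exhibiting an exponential dominating function whose integrability is exactly what the restriction $\xi\in\dot{\X}^1_{2A_2}(T,X_0)$ supplies, and both control the second derivative through the Arrow--Pratt bound $|u''|\leq A_2 u'$ from \eqref{apc}. The only immaterial difference is that you take the majorant for $u'$ directly from \eqref{ieu'}, while the paper re-derives the bound $u'(-x)\leq C(e^{A_2x}+1)$ from \eqref{ubd1} via Lemma \ref{ci} and phrases the domination step, using concavity and the monotonicity of $u'$, as the single condition $\E\big[u'\big(\cR_T^\xi-1\big)\big]<\infty$.
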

Before beginning with the proof, we  need to prove the following lemma.
\begin{Lem}\label{ci}
Let $g$ be a real-valued locally integrable function on 
$[0,\infty[$ such that
\begin{equation}
\int^x_0 g(t)\,dt\geq 0, \quad \text{ for all }x>0.\label{ii}
\end{equation}
Then
$
\limsup_{x\rightarrow \infty} g(x)\geq 0.
$
\end{Lem}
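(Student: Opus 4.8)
The plan is to argue by contraposition (equivalently, by contradiction). Suppose the conclusion fails, i.e.\ $\limsup_{x\to\infty} g(x) < 0$. Then I can extract a strictly negative uniform bound on $g$ near infinity: there exist $\delta>0$ and a threshold $X_0>0$ such that $g(x)\leq -\delta$ for (almost) every $x\geq X_0$. Since $g$ is only locally integrable, and hence defined merely $\lambda$-a.e., I read $\limsup_{x\to\infty} g(x)$ as the essential limsup, and observe that an a.e.\ bound is all that the subsequent integral estimate requires.

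Next I would split and estimate the primitive of $g$. For any $x>X_0$, writing $C:=\int_0^{X_0} g(t)\,dt$ (which is finite by local integrability), I have
\[
\int_0^x g(t)\,dt = C + \int_{X_0}^x g(t)\,dt \leq C - \delta\,(x-X_0).
\]
Letting $x\to\infty$, the right-hand side tends to $-\infty$, so $\int_0^x g(t)\,dt<0$ for all sufficiently large $x$. This contradicts the standing hypothesis \eqref{ii}, and therefore $\limsup_{x\to\infty} g(x)\geq 0$.

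There is essentially no hard analytic content here; the single point demanding care is the almost-everywhere bookkeeping. One must make precise in what sense $\limsup_{x\to\infty} g(x)$ is taken when $g$ is specified only up to a $\lambda$-null set, and then verify that the negative bound $g\leq -\delta$ holding merely a.e.\ on $[X_0,\infty[$ (rather than everywhere) already suffices to produce the linear-in-$x$ upper bound on $\int_{X_0}^x g(t)\,dt$ that forces the primitive to $-\infty$. Once this is granted, the contradiction is immediate.
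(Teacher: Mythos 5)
Your proof is correct and is essentially the same as the paper's: both argue by contradiction, extract a uniform negative bound $g\leq-\e$ beyond some threshold, and show the primitive then decays linearly to $-\infty$, contradicting \eqref{ii}. Your additional remark on interpreting the limsup in the essential (a.e.) sense is a reasonable refinement of the bookkeeping, but does not change the argument.
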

\begin{proof}
Suppose that there exists $\e>0$ such that
$
\limsup_{x\rightarrow \infty} g(x)< -2\e.
$
Then there exists $x_0>0$ such that $g(x)\leq -\e\text{ for all } x\geq x_0,$ whence we get
\begin{align*}
\int^x_0g(t)\,dt
\leq \int^{x_0}_0 g(t)\,dt-\e(x-x_0)<0 \quad\text{ for } x \text{ large enough},
\end{align*}
which is in contradiction with \eqref{ii}.
\end{proof}
\begin{proof}[Proof of Proposition 3.1]
By translating $u$ horizontally if necessary, we can assume without loss of generality that $R_0=0$. Thus, we have to prove that the map
$
r\mapsto \E\big[u\big(\cR_{T}^{\xi}+r\big)\big]
$
is differentiable at $r=0$ with derivative $\E\big[u'\big(\cR_{T}^{\xi}\big)\big]$.
Since $u$ is concave, increasing, and lies in $C^1(\R)$, $u'$ is decreasing and positive,  hence it is sufficient to prove 
\begin{equation}
\E\big[u'\big(\cR_T^\xi-1\big)\big]<\infty.\label{u'i}
\end{equation}
Due to inequalities \eqref{ubd1}, 
we get
\[
\exp(A_2x)+u(-x)=\int_0^x \Big(\frac1{A_2}\exp(A_2x)-u'(-x)\Big)\,dx +u(0)-\frac1{A_2}\geq 0, \; x\geq 0.
\]
Hence, by translating $u$ vertically if necessary, the conditions of Lemma \ref{ci} apply with $g(x)=\frac1{A_2}\exp(A_2 x)-u'(-x)\text{ on } [0,\infty[$. Therefore, we can find a constant $C>0$ such that
\[
u'(-x)\leq C(\exp(A_2 x)+1)\quad \text{ for all } x\geq 0.
\]
Thus,
\begin{align*}
\E\big[u'\big(\cR_T^\xi-1\big)\big]
&\leq C(\E\big[\exp\big(-A_2 \cR_T^\xi\big)\big]+1) +\E\big[u'\big(\cR_T^\xi-1\big)\b1_{\{\cR_T^\xi-1\geq0\}}\big]<\infty,
\end{align*}
since $u'$ is bounded on $[0,\infty[$ and $\E\big[\exp\big(-A_2 \cR_T^\xi\big)\big]<\infty$, due to the assumption on $\xi$.  This shows the assertion for the first derivative. For the second one, we take $0<\eta<1$ and $r\in\;]-\eta,\eta[$. We wish to prove that
\begin{equation}
\sup_{r\in\;]-\eta,\eta[}\E\big[\big|u''\big(\cR_T^\xi+r\big)\big|\big]<\infty.
\end{equation}
To this end, we use inequality \eqref{apc} to obtain 
\begin{align*}
\E\big[\big|u''\big(\cR_T^\xi+r\big)\big|\big]
 &\leq\E\big[A_2u'\big(\cR_T^\xi-1\big)\big]<\infty,
 \end{align*}
which completes the proof.  \end{proof} 
In our case,  the optimal strategy depends on the parameter $R$ \emph{without}, a priori, any known control of this dependence.
Since the concavity property of the value function will be the key to establishing the desired regularity properties, we 
 consider now a family of concave $C^1$-functions $f_\alpha:\R\longrightarrow \R$ and define
$$f(x)=\sup_{\alpha} f_\alpha(x).$$
 Note that the supremum is not necessarily concave. However, if $f$ is concave in a neighborhood of a point $t$, then the following proposition gives us a sufficient condition under which $f$ is differentiable at this  point. 
\begin{Lem}\label{scf}
Consider a family $(f_\alpha)_{\alpha\in A}$ of concave $C^1(\R)$-functions that are uniformly bounded from above. Define 
\[
f(x)=\sup_{\alpha\in A} f_\alpha(x).
\]
Suppose further that there exist $t\in\R$ and $\eta>0$ such that $f$ is concave on $]t-\eta,t+\eta[$ and   $\alpha^*_t\in A$ such that $f(t)=f_{\alpha^*_t}(t)$. Then, $f$ is differentiable at $t$ with derivative $$f'(t)=f'_{\alpha^*_t}(t).$$ If we suppose moreover that $\alpha^*_t$ is uniquely determined, then $f'$ is continuous at $t$.  
\end{Lem}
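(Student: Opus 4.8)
The plan is to exploit the fact that the maximizer at $t$ supplies a $C^1$ minorant of $f$ that touches it from below, and to combine this with the one-sided derivatives that the local concavity of $f$ provides. First I would record the elementary but decisive consequence of the hypothesis $f(t)=f_{\alpha^*_t}(t)$: since $f=\sup_{\alpha}f_\alpha$, we have $f_{\alpha^*_t}(x)\le f(x)$ for every $x\in\R$, with equality at $x=t$. The assumption that the family is uniformly bounded from above is exactly what guarantees $f(x)<\infty$, so that $f$ is a genuine real-valued function and these inequalities are meaningful. Thus $f_{\alpha^*_t}$ is a differentiable support of $f$ from below at $t$.

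Next I would prove differentiability. Because $f$ is concave on $]t-\eta,t+\eta[$, the one-sided derivatives $f'_+(t)$ and $f'_-(t)$ exist and satisfy the concavity inequality $f'_+(t)\le f'_-(t)$. Using the minorant, for $h>0$ small one has
\[
\frac{f(t+h)-f(t)}{h}\ge \frac{f_{\alpha^*_t}(t+h)-f_{\alpha^*_t}(t)}{h}\quad\text{and}\quad \frac{f(t)-f(t-h)}{h}\le \frac{f_{\alpha^*_t}(t)-f_{\alpha^*_t}(t-h)}{h}.
\]
Letting $h\downarrow 0$ and using $f_{\alpha^*_t}\in C^1(\R)$ yields $f'_-(t)\le f'_{\alpha^*_t}(t)\le f'_+(t)$. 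Together with $f'_+(t)\le f'_-(t)$, the three quantities coincide, so $f$ is differentiable at $t$ with $f'(t)=f'_{\alpha^*_t}(t)$. This squeeze is the routine part of the argument.

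Finally, for the continuity of $f'$ at $t$ I would invoke the standard regularity of derivatives of concave functions: on $]t-\eta,t+\eta[$ the one-sided derivatives $f'_\pm$ are non-increasing, with $f'_+$ right-continuous and $f'_-$ left-continuous, and $f$ is differentiable at a point precisely when these agree there. Since we have just shown $f'_-(t)=f'_+(t)$, monotonicity together with the one-sided continuity forces $f'(s)\to f'(t)$ as $s\to t$ through points of differentiability. The role of the uniqueness of $\alpha^*_t$ is to pin down the envelope representation: applying the differentiability part at each nearby $s$ at which a maximizer $\alpha^*_s$ exists gives $f'(s)=f'_{\alpha^*_s}(s)$, and uniqueness guarantees that $f'_{\alpha^*_t}(t)$ is the only admissible limiting value, so that $f'_{\alpha^*_s}(s)\to f'_{\alpha^*_t}(t)$. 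I expect this last step---articulating continuity precisely through the envelope and isolating the exact use of uniqueness---to be the main obstacle, since the differentiability itself reduces to the clean minorant squeeze above.
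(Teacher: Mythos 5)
Your proof is correct. The differentiability half is essentially the paper's own argument: both exploit that $f_{\alpha^*_t}$ is a $C^1$ minorant of $f$ touching it at $t$ and squeeze the one-sided derivatives; the paper phrases this with an $\varepsilon$--$\delta$ comparison of the right and left difference quotients of $f_{\alpha^*_t}$ (using its concavity), whereas you take the two limits separately, which is marginally cleaner and does not even need concavity of the individual $f_{\alpha^*_t}$. Where you genuinely diverge is the continuity statement. The paper argues by contradiction: assuming $f'(t^-)>f'(t^+)$, it represents these one-sided limits as $f'_{\alpha^*_{t^-}}(t^-)$ and $f'_{\alpha^*_{t^+}}(t^+)$ for maximizers attached to the two sides, and uses the uniqueness of $\alpha^*_t$ together with continuity of $f$ to force $\alpha^*_{t^-}=\alpha^*_{t^+}=\alpha^*_t$, a contradiction --- an argument that tacitly presumes maximizers exist at nearby points and that the envelope identity holds there. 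You instead invoke the standard regularity of concave functions (one-sided derivatives nonincreasing, $f'_+$ right-continuous, $f'_-$ left-continuous, differentiability at a point exactly when they agree), which yields $f'(s)\to f'(t)$ as $s\to t$ through points of differentiability directly from $f'_-(t)=f'_+(t)$. This route is more elementary and, notably, shows that the uniqueness hypothesis is not needed for continuity in this (relative-to-the-domain-of-$f'$) sense; your closing attempt to assign uniqueness the role of pinning down the limit via $f'(s)=f'_{\alpha^*_s}(s)$ is the only vague spot, but it is a side remark rather than a step your proof relies on.
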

 \begin{proof}
By translating the function $f$ if necessary, we can suppose without loss of generality that $t=0$. Because $f$ is 
concave in a neighborhood of $t=0$, we only have to prove that $f'_+(0)\geq f'_-(0)$. To this end, let  $\e>0$ and $\alpha^*_0\in A$ be such that $f(0)=f_{\alpha^*_0}(0)$. Because $f_{\alpha^*_0}$ is concave and differentiable at $0$, for every $\e>0$ there exists $\delta>0$ such that for all $0<h\leq \delta,$ we have 
\begin{equation*}
\frac{f_{\alpha^*_0}(h)-f_{\alpha^*_0}(0)}{h}\geq \frac{f_{\alpha^*_0}(-h)-f_{\alpha^*_0}(0)}{-h}-\e.
\end{equation*}
Thus we get
\begin{align*}
\frac{f(h)-f(0)}{h}
&\geq \frac{f_{\alpha^*_0}(-h)-f_{\alpha^*_0}(0)}{-h}-\e\geq \frac{f(-h)-f(0)}{-h} -\e,
\end{align*}
by the definition of $f$.
Sending $h$ to zero we infer
$
f'_+(0)\geq f'_{\alpha^*_0}(0)\geq f'_-(0)-\e
$
 for every $\e>0$, and hence $f$ is differentiable.
 
 Assume now that $\alpha^*_t$ is uniquely determined, and suppose to the contrary that $f'$ is not continuous at $t$. Since $f$ is concave on $]t-\eta,t+\eta[$ and hence $f'$ is  nonincreasing on $]t-\eta,t+\eta[$, the left- and right-hand limits at $t$ exist, and we infer
 \[
 f'(t^-)=f'_{\alpha_{t^-}^*}(t^-)> f'(t^+)=f'_{\alpha_{t^+}^*}(t^+),
 \]
 where $\alpha^*_{t^-},\alpha^*_{t^+}\in A$. Using the continuity of $f'_{\alpha^*_{t^-}}$  at $t$, 
 we must have, on the one hand,  $\alpha^*_{t^-}\neq\alpha^*_{t^+}$. However, we must equally have, on the other hand,  
 \[
 f(t)=f_{\alpha_t^*}(t)=f(t^+)=f_{\alpha^*_{t^+}}(t^+)=f_{\alpha^*_{t^-}}(t^-),
 \]
 as a direct consequence of the definition of $\alpha^*_t$ and the continuity of $f$.
Therefore, the uniqueness of $\alpha^*_t $ implies $\alpha^*_t=\alpha^*_{t^-}=\alpha^*_{t^+}$, which is clearly a contradiction.
 \end{proof}

We can now state and show  the main result of this subsection.
\begin{Theo}\label{v_r}
The value function is continuously partially differentiable in $R$, and we have the formula 
\[
V_r(T,X,R)=\E\big[u'\big(\cR_T^{\xi^*}\big)\big],
\]
 where $\xi^*$ is the optimal strategy associated to  $V(T,X,R)$.
\end{Theo}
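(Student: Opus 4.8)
The plan is to exploit the concavity of $V$ in its revenues variable (Proposition~\ref{cvf}) together with the existence, uniqueness and $C^1$-smoothness of the value attained by the optimal strategy, and then to invoke Lemma~\ref{scf}. Fix $(T,X)\in\;]0,\infty[\times\R^d$ and observe that the admissibility constraints defining $\dot{\X}^1(T,X)$ do not involve the initial revenue, while raising the initial revenue by an amount $\Delta$ merely shifts every terminal revenue by $\Delta$: running a fixed strategy $\xi$ from initial revenue $r$ produces terminal revenue $\cR_T^{\xi}+(r-R)$, where $\cR_T^{\xi}$ is computed at the reference level $R$. Hence for each $\xi\in\dot{\X}^1(T,X)$ the function $f_\xi(r):=\E\big[u\big(\cR_T^{\xi}+(r-R)\big)\big]$ is concave (as $u$ is concave and $r$ enters affinely), and $V(T,X,r)=\sup_{\xi\in\dot{\X}^1(T,X)}f_\xi(r)$.

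Let $\xi^*$ be the optimal strategy at $(T,X,R)$ furnished by Theorem~\ref{eos}. Since $\xi^*\in\dot{\X}^1_{2A_2}(T,X)$, Proposition~\ref{u'} shows that $f_{\xi^*}$ is in fact $C^1$ (indeed $C^2$), with $f_{\xi^*}'(r)=\E\big[u'\big(\cR_T^{\xi^*}+(r-R)\big)\big]$, so that $f_{\xi^*}'(R)=\E\big[u'\big(\cR_T^{\xi^*}\big)\big]$. By feasibility of $\xi^*$ at every initial revenue level we have $f_{\xi^*}\leq V(T,X,\cdot)$ everywhere, with equality at $r=R$ by optimality. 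Moreover the family $(f_\xi)$ is uniformly bounded from above, because $u\leq u_1< 1/A_1$ by \eqref{ubd1}, so $V(T,X,\cdot)$ is a well-defined concave function whose supremum representation has an \emph{active} member $f_{\xi^*}$ that is $C^1$. We are thus exactly in the situation of Lemma~\ref{scf} (with $f=V(T,X,\cdot)$ and $\alpha^*_R=\xi^*$), and the lemma yields differentiability of $V$ at $R$ together with the identity $V_r(T,X,R)=f_{\xi^*}'(R)=\E\big[u'\big(\cR_T^{\xi^*}\big)\big]$. The underlying mechanism is a tangent squeeze: the $C^1$ support $f_{\xi^*}$ touching the concave $V$ from below at $R$ forces $V'_+(R)\leq V'_-(R)\leq f_{\xi^*}'(R)\leq V'_+(R)$, so all three coincide.

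For the continuity of $R\mapsto V_r(T,X,R)$ I would invoke the second part of Lemma~\ref{scf}: the maximizing index $\alpha^*_R=\xi^*$ is \emph{uniquely} determined, this being precisely the uniqueness assertion of Theorem~\ref{eos}, whence $f'=V_r$ is continuous at each point. (One could alternatively argue directly that the derivative of a concave function which is differentiable throughout an open interval is automatically continuous, being simultaneously the right-continuous right derivative and the left-continuous left derivative.)

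The step I expect to require the most care is the verification of the hypotheses of Lemma~\ref{scf} at the level of the value function rather than for a generic family: one must check that the active function $f_{\xi^*}$ is genuinely $C^1$, which hinges on $\xi^*\in\dot{\X}^1_{2A_2}(T,X)$ and hence on Assumption~\ref{ar} so that Proposition~\ref{u'} applies, and that the support relation $f_{\xi^*}\leq V$ with equality at $R$, combined with the concavity of $V$ from Proposition~\ref{cvf}, indeed pins down both one-sided derivatives. For the continuity claim the delicate point is matching the $R$-dependent optimizer $\xi^*=\xi^*_R$ to the uniqueness hypothesis of Lemma~\ref{scf}, which is exactly what the uniqueness part of Theorem~\ref{eos} provides.
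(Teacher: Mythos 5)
Your proposal is correct and follows essentially the same route as the paper: apply Lemma~\ref{scf} to the family $\big(r\mapsto \E[u(\cR_T^\xi+r)]\big)_{\xi\in\dot{\X}^1_{2A_2}(T,X)}$, using Proposition~\ref{u'} for the $C^1$ property of the members, Theorem~\ref{eos} for the existence and uniqueness of the active maximizer, and Proposition~\ref{cvf} for the concavity of $V(T,X,\cdot)$. Your added details (the tangent-squeeze mechanism and the alternative continuity argument for derivatives of concave functions) are elaborations of the same argument, not a different approach.
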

\begin{proof}
The proof is a direct consequence of Lemma \ref{scf}, when applied to the family of concave functions $(R\mapsto \E[u(\cR_T^\xi+R)])_{\xi\in \dot{\X}^1_{2A_2}(T,X_0)}$. Indeed, this is a family of concave $C^1$-functions (due to Proposition \ref{u'}). 
The existence and uniqueness of an optimal strategy (Theorem \ref{eos}) and the concavity of the map $R\mapsto V(T,X,R)$, for fixed $T,X$  (Lemma \ref{cvf}), yield that the remaining conditions of the preceding lemma are satifsfied.
\end{proof}
\begin{Cor}
Suppose that $u'$ is convex and decreasing. Then, the value function is twice differentiable with second partial derivative 
\[
V_{rr}(T,X,R)= \E\big[u''\big(\cR_T^{\xi^*}\big)\big],
\]
 where $\xi^*$ is the optimal strategy associated to  $V(T,X,R)$.
\end{Cor}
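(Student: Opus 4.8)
The plan is to differentiate once more the representation $V_r(T,X,R)=\E[u'(\cR_T^{\xi^*})]$ furnished by Theorem \ref{v_r}. First I would record the per-strategy regularity. For a fixed admissible $\xi$, Proposition \ref{u'} shows that $R\mapsto\E[u(\cR_T^{\xi}+R)]$ is of class $C^2$ with second derivative $\E[u''(\cR_T^{\xi}+R)]$; hence $g_\xi(R):=\E[u'(\cR_T^{\xi}+R)]$ is $C^1$ with $g_\xi'(R)=\E[u''(\cR_T^{\xi}+R)]$. Here the extra hypothesis that $u'$ is convex enters decisively: it makes each $g_\xi$ convex and each map $R\mapsto\E[u''(\cR_T^{\xi}+R)]$ nondecreasing. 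Writing $f_\xi(R):=\E[u(\cR_T^{\xi}+R)]$ and $\xi^*_R$ for the unique optimizer at $R$ (Theorem \ref{eos}), the object to be differentiated is $R\mapsto V_r(R)=g_{\xi^*_R}(R)$.

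For the lower bound I would use the tangent minorant supplied by the envelope step of Theorem \ref{v_r}: since $V=\sup_\xi f_\xi$ and $f_{\xi^*_R}(R)=V(R)$ with $f_{\xi^*_R}'(R)=V_r(R)$, the function $\psi:=V-f_{\xi^*_R}$ is nonnegative and attains its minimum $0$ at $R$. Comparing the symmetric second differences of $V$ and of $f_{\xi^*_R}$ gives $\liminf_{h\to0}h^{-2}\big(V(R+h)+V(R-h)-2V(R)\big)\ge\E[u''(\cR_T^{\xi^*_R})]$. Because $V$ is concave in $R$ (Proposition \ref{cvf}), $V_r$ is nonincreasing, so $V_{rr}$ exists $\lambda$-a.e., and wherever it exists it satisfies $V_{rr}(R)\ge\E[u''(\cR_T^{\xi^*_R})]$.

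The reverse inequality is where the convexity of $u'$ is exploited, through the sandwich $g_\xi(R)+h\,g_\xi'(R)\le g_\xi(R+h)\le g_\xi(R)+h\,g_\xi'(R+h)$ valid for $h>0$. Applying it with $\xi=\xi^*_{R+h}$ and subtracting $V_r(R)=g_{\xi^*_R}(R)$, the difference quotient $h^{-1}\big(V_r(R+h)-V_r(R)\big)$ is trapped between $h^{-1}I_h+g_{\xi^*_{R+h}}'(R)$ and $h^{-1}I_h+g_{\xi^*_{R+h}}'(R+h)$, where $I_h:=g_{\xi^*_{R+h}}(R)-g_{\xi^*_R}(R)$. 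I would then prove that $R\mapsto\xi^*_R$ is continuous: for $R_n\to R$ (with $T,X$ fixed) the strategies $\xi^*_{R_n}$ lie in a common $\overline K_m$ (Lemma \ref{osp} and Remark \ref{km}), so by Proposition \ref{wsc} a subsequence converges weakly; weak upper semicontinuity (Proposition \ref{wtc}, adapted routinely to absorb the vanishing shift $R_n\to R$) together with continuity of $R\mapsto V(T,X,R)$ (immediate from concavity, Proposition \ref{cvf}) forces the weak limit to be optimal at $R$, whence uniqueness (Theorem \ref{eos}) identifies it as $\xi^*_R$. This makes $R\mapsto\E[u''(\cR_T^{\xi^*_R}+s)]$ continuous, so both $g_{\xi^*_{R+h}}'(R)$ and $g_{\xi^*_{R+h}}'(R+h)$ converge to $\E[u''(\cR_T^{\xi^*_R})]$; combined with the a.e.\ lower bound this would pin down $V_{rr}=\E[u''(\cR_T^{\xi^*})]$ everywhere and, the right-hand side being continuous, upgrade differentiability to continuous differentiability.

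The main obstacle is the control of the optimizer-variation term $I_h=\E\big[u'(\cR_T^{\xi^*_{R+h}}+R)-u'(\cR_T^{\xi^*_R}+R)\big]$: the scheme above needs $I_h=o(h)$, not merely $I_h\to0$, so that the second-order envelope correction generated by the $R$-dependence of the optimizer is negligible. Qualitative continuity of $\xi^*_R$ yields only $I_h\to0$; obtaining the rate requires a \emph{quantitative} stability estimate for $R\mapsto\xi^*_R$, which I would derive from the strict concavity of $\xi\mapsto\E[u(\cR_T^{\xi})]$—uniform on $\overline K_m$ thanks to $0<A_1\le-u''/u'\le A_2$—producing a Hölder, ideally Lipschitz, modulus for $\|\xi^*_{R+h}-\xi^*_R\|$ in terms of $h$. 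I would isolate this stability of the optimizer as a separate lemma; it is the only step that goes genuinely beyond the envelope bookkeeping, and it is exactly what forces the correction term to vanish to first order (as it does trivially in the exponential case, where $\xi^*_R$ does not depend on $R$).
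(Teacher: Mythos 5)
Your lower-bound half is fine: since $V\ge f_{\xi^*_R}$ with equality at $R$ and $f_{\xi^*_R}$ is $C^2$ by Proposition \ref{u'}, the symmetric second-difference comparison does give $V_{rr}\ge\E[u''(\cR_T^{\xi^*_R})]$ at every point where $V_{rr}$ exists. The reverse inequality, however, and with it twice differentiability at \emph{every} point, rests on two steps that are genuinely missing. First, you need $\E[u''(\cR_T^{\xi^*_{R+h}}+R)]\to\E[u''(\cR_T^{\xi^*_R}+R)]$ as $h\to0$, and you derive this from "continuity of $R\mapsto\xi^*_R$". But the compactness machinery of the paper (Lemma \ref{osp}, Proposition \ref{wsc}, Proposition \ref{wtc}) only produces convergence of $\xi^*_{R+h}$ to $\xi^*_R$ \emph{weakly} in $\J$, and weak convergence does not pass to the limit in the functional $\xi\mapsto\E[u''(\cR_T^{\xi}+R)]$: Corollary \ref{lscc} applies to convex or concave integrands, whereas $u''(\cR_T^{\xi}+R)$ is the composition of the monotone but curvature-free function $u''$ with the concave map $\xi\mapsto\cR_T^{\xi}$, so no semicontinuity in either direction is available; note also that the cost term $\int_0^T f(-\xi_t)\,dt$ is only lower semicontinuous along weak limits, so the revenues themselves need not converge in any a.s.\ or $\L^1$ sense. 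Second, you yourself flag that the optimizer-variation term $I_h=g_{\xi^*_{R+h}}(R)-g_{\xi^*_R}(R)$ must be $o(h)$ and leave it unproved; the repair you sketch does not close it. A uniform strict-concavity (strong-concavity) estimate controls $\|\xi^*_{R+h}-\xi^*_R\|^2$ by the optimality gap of $\xi^*_{R+h}$ at the point $R$, and that gap is of order $h$; this yields at best a H\"older-$\tfrac12$ modulus $\|\xi^*_{R+h}-\xi^*_R\|=O(\sqrt h)$, hence $I_h=O(\sqrt h)$, which is not $o(h)$. So the squeeze never closes, and the proposal as written does not prove the corollary.

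This is also exactly where your route departs from the paper's, and why the paper's proof is two lines long: it never touches difference quotients or stability of the optimizer. Under the corollary's hypothesis that $u'$ is convex and decreasing, each map $R\mapsto-\E[u'(\cR_T^{\xi}+R)]$ is again a concave function, of class $C^1$ and bounded above by $0$ (the $C^1$ statement is Proposition \ref{u'}); the paper therefore reapplies the envelope result, Lemma \ref{scf}, to this family of derivative functions, with the unique optimizer $\xi^*$ (Theorem \ref{eos}) as the maximizing index, exactly as Theorem \ref{v_r} applied it to the family $R\mapsto\E[u(\cR_T^{\xi}+R)]$. The point of Lemma \ref{scf} is that differentiability of an upper envelope at a point of attainment follows from one-sided derivative inequalities alone — no rate on $\xi^*_{R+h}-\xi^*_R$ and no prior continuity of $R\mapsto\E[u''(\cR_T^{\xi^*_R})]$ is needed as input, and continuity of the resulting derivative comes out of the lemma via uniqueness of the optimizer. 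If you want to complete your argument in the paper's spirit, the effort should go into verifying the hypotheses of Lemma \ref{scf} for this derivative family (the local sup structure, local concavity, and attainment at $\xi^*$), not into a quantitative stability lemma for $R\mapsto\xi^*_R$.
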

\begin{proof}
The proof is similar to the one of Theorem \ref{v_r} and is obtained by applying Lemma \ref{scf} to $u'$ and Proposition \ref{u'}.
\end{proof}
\begin{rem}
We are in the setting of the preceding corollary if, e.g., $u$ is a convex combination of exponential utility functions or, more generally, if 
$(-u)$ is a complete monotone function, i.e., if $\forall n\in\N^*: (-1)^n (-u)^{(n)}\geq0$. According to the Hausdorff-Bernstein-Widder's theorem (cf. \citet{W41} or \citet{D74}, Chapter 21), 
 this is equivalent to the existence of a Borel measure $\mu$ on $[0,\infty[$ such that 
\[
-u(x)= \int_0^\infty e^{-xt}\,d\mu(t).
\]

\xqed{\diamondsuit}
\end{rem}
\subsection{Continuity of the value function}
The proof of the continuity of our value function will be  split  in two propositions. We will  first prove its upper semi-continuity and then its lower semi-continuity. To prove the upper semi-continuity we will use the same techniques as are used to prove the existence of the optimal strategy for the maximization problem \eqref{omp}.
The main idea to prove the lower semi-continuity is to use a convex combination of the optimal strategy for \eqref{omp}  and the optimal strategy of the corresponding exponential value function at a certain well-chosen point. 
Here, we have to distinguish between two cases; the case where the value function is approximated from above, and the case where the value function is approximated from below in time.
In  the sequel, for $\xi\in\dot{\X}^1(T,X_0)$ we will automatically set $\xi_t=0$ for $t\geq T$.
\begin{Prop}\label{usc}
The value function is upper semi-continuous on $]0,\infty[\times\R^d\times \R$.
\end{Prop}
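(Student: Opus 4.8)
The plan is to transpose the argument of Theorem \ref{eos} to the level of sequences: approximate by optimal strategies, extract a weak limit that lives in the limiting admissible set, and push the objective through the limit. Fix $(T,X_0,R_0)\in\;]0,\infty[\times\R^d\times\R$ and let $(T^n,X^n_0,R^n_0)$ be an arbitrary sequence converging to it. Passing to a subsequence we may assume $V(T^n,X^n_0,R^n_0)\to\limsup_n V(T^n,X^n_0,R^n_0)$, and since $T^n>0$ for $n$ large, Theorem \ref{eos} furnishes for each such $n$ an optimal $\xi^n\in\dot{\X}^1(T^n,X^n_0)$ with $V(T^n,X^n_0,R^n_0)=\E[u(\cR^{\xi^n}_{T^n})]$. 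Setting $\xi^n_t:=0$ for $t\ge T^n$ and $\overline{T}:=\sup_n T^n<\infty$, I regard all $\xi^n$ as elements of $\J=\L^1((\Om\times[0,\overline{T}],\F\otimes\B([0,\overline{T}]),\P\otimes\lambda);\R^d)$.

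First I would establish a uniform energy bound. From $u\le u_1$ we get $\E[\exp(-A_1\cR^{\xi^n}_{T^n})]\le 1/A_1-\E[u(\cR^{\xi^n}_{T^n})]=1/A_1-V(T^n,X^n_0,R^n_0)\le 1/A_1-V_2(T^n,X^n_0,R^n_0)$, and the right-hand side is bounded along the convergent sequence because the CARA value function $V_2$ is continuous on $]0,\infty[\times\R^d\times\R$ (cf.\ \citet{SST10}). Hence the analogue of \eqref{enf} holds uniformly in $n$, and re-running the estimate of Lemma \ref{osp} (whose constants $N^n=|b|C^n(T^n)^2$ stay bounded, since $T^n$ is bounded and the $C^n$ issuing from the superlinear growth of $f$ remain bounded as $\gamma^n=\tfrac1{4|b|T^n}$ stays away from $0$) produces a single $m$ with $\xi^n\in\overline{K}_m$ for all $n$. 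Because $\int_0^{\overline{T}}f(-\xi^n_t)\,dt=\int_0^{T^n}f(-\xi^n_t)\,dt$, the Dunford--Pettis argument of Proposition \ref{wsc} applies unchanged and yields a subsequence $\xi^{n_k}$ converging weakly in $\J$ to some $\xi^*$; since $\E[\int_0^T f(-\xi^{n_k}_t)\,dt]\le m$, Lemma \ref{wcx} identifies $\xi^*\in\dot{\X}^1(T,X_0)$.

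It then remains to prove
\[
\limsup_k\E\big[u\big(\cR^{\xi^{n_k}}_{T^{n_k}}\big)\big]\le\E\big[u\big(\cR^{\xi^*}_{T}\big)\big],
\]
for then $\E[u(\cR^{\xi^*}_T)]\le V(T,X_0,R_0)$ holds trivially (as $\xi^*\in\dot{\X}^1(T,X_0)$) and the proof is complete. This is the counterpart of the weak upper semicontinuity of Proposition \ref{wtc}, but now $T^{n_k}$, $X^{n_k}_0$ and $R^{n_k}_0$ all move, so the convenient reduction ``concave, hence weak u.s.c.\ follows from strong u.s.c.'' is no longer available for a single functional. I would decompose
\[
\cR^{\xi^{n_k}}_{T^{n_k}}=R^{n_k}_0+\int_0^{T^{n_k}}(X^{\xi^{n_k}}_t)^\top\sigma\,dB_t+\int_0^{T^{n_k}}b\cdot X^{\xi^{n_k}}_t\,dt-\int_0^{T^{n_k}}f(-\xi^{n_k}_t)\,dt
\]
and treat the pieces separately: the reference level converges, $R^{n_k}_0\to R_0$; the cost term is bounded from below via the convex lower semicontinuity of Corollary \ref{lscc}; the drift term is controlled through the weak convergence of the inventories $X^{\xi^{n_k}}_t$ (obtained by testing the weak convergence of $\xi^{n_k}$ against $\b1_{[0,t]}$) together with Lemma \ref{bin}; and the horizon mismatch between $T^{n_k}$ and $T$ is absorbed using $f(0)=0$ and $T^{n_k}\to T$.

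The main obstacle, exactly as in the existence proof, is the stochastic-integral term $\int_0^{T^{n_k}}(X^{\xi^{n_k}}_t)^\top\sigma\,dB_t$, which enters $u$ nonlinearly and over which weak $\J$-convergence of the controls provides no pathwise control. The route I would take is to exploit the concavity of $u$ together with the facts that $u\le 1/A_1$ is bounded above and that the exponential moments $\E[\exp(-A_1\cR^{\xi^{n_k}}_{T^{n_k}})]$ are uniformly bounded by the estimate above; this makes $\{u(\cR^{\xi^{n_k}}_{T^{n_k}})\}$ uniformly integrable from above, so that a reverse-Fatou argument in the spirit of Corollary \ref{lscc} (applied to $u$ as a concave function bounded from above) transfers the one-sided weak convergence of the revenues into the desired $\limsup$ inequality. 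I expect that reconciling the weak $\J$-convergence of the controls with the $\L^1(\Om)$-level control of the nonlinear revenue functional across the varying horizons will be the genuinely delicate point.
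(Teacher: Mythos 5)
Your first half is essentially the paper's own argument: extracting optimal strategies $\xi^{n}$ via Theorem \ref{eos}, the uniform exponential-moment bound $\E[\exp(-A_1\cR^{\xi^n}_{T^n})]\le 1/A_1-V_2(T^n,X^n_0,R^n_0)$ from \eqref{vfs} and the continuity of $V_2$, the uniform bound $\xi^n\in\overline{K}_m$ by re-running Lemma \ref{osp}, weak sequential compactness via Dunford--Pettis as in Proposition \ref{wsc}, and identification of the weak limit $\xi^*\in\dot{\X}^1(T,X_0)$ through Lemma \ref{wcx}. The only difference is cosmetic: since the $\xi^n$ live in different sets $\dot{\X}^1(T^n,X^n_0)$, the paper carries out the closedness/compactness argument inside the closed convex hull $\overline{\cK}_m$ of these sets, which is the technically clean way to phrase what you do; this part of your proposal is fine.

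The genuine gap is in your last step: you never prove $\limsup_k\E[u(\cR^{\xi^{n_k}}_{T^{n_k}})]\le\E[u(\cR^{\xi^*}_{T})]$, you only sketch a term-by-term decomposition plus a ``reverse Fatou'' argument and yourself flag it as the delicate point. That sketch cannot be completed as written: weak $\J$-convergence of the controls gives no convergence in probability (let alone almost surely) of the stochastic integrals, so there is no pointwise convergence for a reverse-Fatou/uniform-integrability argument to act on, and one cannot push one-sided bounds on the individual terms of the revenue through the nonlinear concave $u$. Worse, your justification for abandoning the concavity route --- that ``the convenient reduction is no longer available for a single functional'' because the horizons and initial data move --- is exactly where you go astray, since the paper's proof rests on precisely that reduction. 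The observation you miss is that the moving horizons are absorbed by the extension-by-zero convention: each $\xi^{n_k}$ satisfies the fuel constraint $X^{\xi^{n_k}}_{T^{n_k}}=0$, is set to $0$ on $[T^{n_k},\widetilde{T}]$, and $f(0)=0$, so the revenue process is constant on $[T^{n_k},\widetilde{T}]$ and $\E[u(\cR^{\xi^{n_k}}_{T^{n_k}})]=\E[u(\cR^{\xi^{n_k}}_{\widetilde{T}})]$. Hence all quantities in the $\limsup$ are values of the \emph{single} functional $\xi\longmapsto\E[u(\cR^{\xi}_{\widetilde{T}})]$ on the convex set $\overline{\cK}_m$ (the moving $R^{n_k}_0$ being handled by the same argument applied to the jointly concave map in $(r,\xi)$, as the paper notes in the last step of Theorem \ref{em}), and Proposition \ref{wtc} applies to give $\limsup_k\E[u(\cR^{\xi^{n_k}}_{\widetilde{T}})]\le\E[u(\cR^{\widetilde{\xi}}_{\widetilde{T}})]=\E[u(\cR^{\widetilde{\xi}}_{T})]\le V(T,X_0,R_0)$, the last equality because the weak limit $\widetilde{\xi}$ lies in $\dot{\X}^1(T,X_0)$ by Lemma \ref{wcx}. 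Without this (or an equivalent) argument your proof is incomplete at its decisive step.
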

\begin{proof}
Take $\big(T,X_0, R_0\big)\in\;]0,\infty[\times\R^d\times \R$ and let $\big(T^n,X^n_0,R^n_0\big)_n$ be a sequence that converges to $\big(T,X_0,R_0\big)$. We have to show that
\begin{equation}
\limsup_{n} V(T^n,X^n_0,R^n_0)\leq V(T,X_0,R_0).\label{lii}
\end{equation} 

Since $\big(T^n,X^n_0,R^n_0\big)_n$ and $V_i(T^n, X^n_0,R^n_0)$ are bounded, it follows that $\limsup_{n} V(T^n,X^n_0,R^n_0)<\infty$, in conjunction with  \eqref{vfs}. Taking a subsequence if necessary, we can suppose that $(V(T^n,X^n_0,R^n_0))$ converges to $\limsup_{n} V(T^n,X^n_0,R^n_0)$. Let $\xi^{n}$ be the optimal strategy associated to $V(T^n,X^n_0,R^n_0)$, which exists for every $n\in\N,$ due to Theorem \ref{eos}. In the sequel we prove, as in Lemma \ref{osp}, that the sequence $\xi^n$ lies in a weakly sequentially compact set. Note that this proposition can be proved without using Assumption \ref{ar}.
\\
\underline{First step:}  We  set $\widetilde{T}:=\sup_n T^n$. We  will show  that, for every $n\in\N$, we have $\xi^n\in\overline{\cK}_{m}$, provided  that $m$ is large enough, where 
\[
\overline{\cK}_{m}= \Big\{\xi\in \overline{\C}\big(\dot{\X}^1(T^n,X^n_0)\big)_n\big|\; \E\bigg[\int^{\widetilde{T}}_0f(-\xi_t)\,dt\bigg]\leq m\Big\},
\]
and where  $\overline{\C}(\dot{\X}^1(T^n,X^n_0))_n$ denotes the closed convex hull of the sequence of sets $(\dot{\X}^1(T^n,X^n_0))_n$. To this end, we use Remark \ref{km}, noting that  we can choose $\xi^n\in \overline{K}_{m_n},$ where
$m_n$ has to be chosen such that  $$ m_n\geq \frac43\Big(\frac{-V_2(\widetilde{T},X^n_0,R^n_0)}{A_1}+R^n_0 +N\Big),$$ and $N$ depends only on $f,b$ and $\widetilde{T}$. 
Take now  $m\in\R$ such that $m\geq \sup_n m_n.$
Note that such $m$ exists, because $(X_0^n, R^n_0)$ is bounded and $V_2$ is continuous. Then it follows  that $$\E\bigg[\int^{\widetilde{T}}_0f(-\xi^n_t)\,dt\bigg]\leq m \quad\text{ for all } n\in\N.$$ Taking now the convex hull of the sequence of sets $(\dot{\X}^1(T^n,X^n_0))_n$, we  conclude that $\xi^n\in\overline{\cK}_{m}\; \text{ for all } n\in\N$ .\\
\underline{Second step:}  We will prove  that $\overline{\cK}_{m}$ is weakly sequentially compact. To this end, we will  first prove that it is a closed convex set in $\J$. \\The set $\overline{\cK}_{m}$ is convex,  because 
the map $
\xi\longmapsto \E\int^{\widetilde{T}}_0f(-\xi_t)\,dt
$ is convex  (due to the convexity of $f$) and defined on the convex set $\overline{\C}\big(\dot{\X}^1(T^n,X^n_0)\big)_n$. 
We will show that it is closed with respect to the $\J$-norm. Denote by $\overline{\C} (X^n_0)_n$ the closed convex hull of the sequence $(X^n_0)_n$, which is  bounded in $\R^d$. 
We show  that for $\xi\in\overline{\cK}_{m}$ there exists $\widetilde{X}$ in $\overline{\C} (X^n_0)_n$  such that $\xi\in\dot{\X}^1(\widetilde{T},\widetilde{X})$. To this end, we write $\xi$ as a convex combination of $\xi^{n_i}\in\dot{\X}^1(T^{n_i},X^{n_i}_0)$,
\[
\xi=\lambda_1 \xi^{n_1}+\dots+\lambda_s \xi^{n_s},
\]
 where $\sum_{i=1}^{s}\lambda_i=1,\,\lambda_i\geq0$. By expressing then the constraint on $\xi^{n_i}$, we get
\[
\lambda_i \int_0^{T^i}\xi^{n_i}_t\,dt=\lambda_i X_0^{n_i},
\]
which implies 
\[
\int_0^{\widetilde{T}}\xi_t\,dt= \sum_{i=1}^{s} \lambda_i \int_0^{T^i}\xi^{n_i}_t\,dt=\sum_{i=1}^{s}\lambda_i X^{n_i}_0=\widetilde{X}.
\]
Take now a sequence $(\widetilde{\xi}^q)_q$ of $\overline{\cK}_{m}$ that converges in the $\J$-norm to a liquidation strategy $\widetilde{\xi}$. We prove that $\widetilde{\xi}\in\dot{\X}^1(\widetilde{T},\widetilde{X})$ for $\widetilde{X}\in\overline{\C}  (X^n_0)_n$. As previously remarked, there exists a sequence $(\widetilde{X}^q)_q\subset\overline{\C} (X^n_0)_n$ such that  $\widetilde{\xi}^q\in\dot{\X}^1(\widetilde{T},\widetilde{X}^q)$. Hence, we have
\[
\int_0^{\widetilde{T}} \widetilde{\xi}^q\,dt=\widetilde{X}^q,\q
\]
Replacing $(\widetilde{X}^q)_q$ by a subsequence if necessary, we can suppose that it converges to some $\widetilde{X}$, because this sequence is bounded. Moreover,  $\widetilde{X}$ lies in $\overline{\C} (X^n_0)_n$. Since $(\widetilde{\xi}_q)_q$ converges  weakly to $\widetilde{\xi}$, we are now in the setting of Lemma \ref{wcx}, which ensures that $\widetilde{\xi}\in\dot{\X}^1(\widetilde{T},\widetilde{X})$, as well as
$
\E[\int^{\widetilde{T}}_0f(-\widetilde{\xi}_t)]\,\leq m.
$
Hence, this proves that $\overline{\cK}_{m}$ is a closed subset of $\J$. 
\\
Since $\overline{\cK}_{m}$ is convex, it is also closed with respect to the weak topology of $\J$. Thus, it is  sufficient to prove that  $\overline{\cK}_m$ is uniformly integrable. 
To this end, take $\e>0$ and $\xi\in\overline{\cK}_{m}$. There exists  $\alpha>0$ such that $\frac{|\xi_t|}{f(-\xi_t)}\leq \frac \e{m}$, for $\big|\xi_t\big|>\alpha$, due to the superlinear growth property of $f$. Because $f(x)=0$ if and only if $x=0$, the term $1/f(-\xi_t)$ is well-defined on $\{|\xi_t|>\alpha\}$,  hence
\begin{align*}
\E\bigg[\int^T_0\b1_{\{|\xi_t|>\alpha\}}\big|\xi_t\big|\,dt\bigg]
&\leq \E\bigg[\int^T_0\b1_{\{|\xi_t|>\alpha\}}f(-\xi_t)\,dt\bigg] \frac \e{c}
 \leq\e,
\end{align*}
which proves the uniform integrability of $\overline{\cK}_{m}$. 
\\
\underline{Last step:}
We have proved that $(\xi^n)_n$ is a sequence in the weakly sequentially compact set $\overline{\cK}_m$. Thus, there exist a subsequence $\xi^{n_k}$ of $\xi^n$ and some $\widetilde{\xi}\in \overline{\cK}_m$ such that $\xi^{n_k}$  converges to  $\widetilde{\xi}$, weakly in $\J$. We are here again in the settings of  Lemma \ref{wcx}, which allows us us to deduce   that $\widetilde{\xi}\in\dot{\X}^1(T,X_0)$. Finally, because $\xi\longmapsto \E[u(\cR^\xi_T)]$ is upper semi-continuous with respect to the  weak topology of $\J$, due to Proposition \ref{wtc}, we get
\begin{align*}
\limsup_{n} V(T^{n},X^{n}_0,R^{n}_0)
&=\limsup_{k} \E\Big[u\big(\cR^{\xi^{n_k}}_{T}\big)\Big]\leq \E\Big[u\big(\cR^{\widetilde{\xi}}_T\big)\Big]
\leq V(T,X_0,R_0),
\end{align*}
where the last inequality is due to  the definition of $V$ at $(T,X_0,R_0)$ and the fact that $\widetilde{\xi}\in\dot{\X}^1(T,X_0)$. This concludes the proof of the upper semi-continuity of $V$.
\end{proof}
In the following, we will prove the lower semi-continuity of the value function $V$. Contrarily to the proof of the upper semi-continuity of $V$, we will have to consider two cases; when the sequence of time converges from above and from bellow to a fixed time $T$. For the latter case, we will first need to derive  a certain lower semi-continuity property of the value function within time, for fixed $X_0,R_0.$ 
The difficult part of the proof of the lower semi-continuity  is due to the fact that accelerating the strategy when we approximate the time from below  cannot be useful to prove the result, since we are then facing measurability  issues. Therefore we will  have to use other techniques. 
\\
We first need to prove the following lemma, which gives a sufficient condition to ensure that the expected utilities  $\E[u(\cR^{\eta^n}_T)]$ converge to $\E[u(\cR^{\eta}_T)]$, when $\cR^{\eta^n}_T$ converges to $\cR^\eta_T$, in probability.
\begin{Lem}\label{crp}
Let $\eta^n\in\dot{\X}^1(T,X_0)$ be a sequence of strategies such that $\cR^{\eta^n}_T$ converges to $\cR^\eta_T$, in probability, where $\eta\in\dot{\X}^1(T,X_0)$. \\ Suppose moreover that $(\exp(-2A_2 \cR^{\eta^n}_{T^n}))_n$ is uniformly bounded in $\L^2$. Then we have
\begin{equation}
\E\Big[u\Big(\cR_{T}^{\eta^n}\Big)\Big]\underset{n\longrightarrow \infty}{\longrightarrow}  \E\Big[u\Big(\cR_{T}^{\eta}\Big)\Big].
\end{equation}
\end{Lem}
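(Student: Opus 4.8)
The plan is to combine convergence in probability with uniform integrability and to conclude by Vitali's convergence theorem. First I would record the pointwise bound on $u$ coming from \eqref{ubd1}. Since $u_2(x)\le u(x)\le u_1(x)$ with $u_1(x)=\tfrac1{A_1}-\exp(-A_1 x)\le\tfrac1{A_1}$ and $u_2(x)=-\exp(-A_2 x)$, separating the cases $u(x)\ge 0$ and $u(x)<0$ gives
$$
|u(x)|\le \frac1{A_1}+\exp(-A_2 x)\qquad\text{for all }x\in\R .
$$
Applying this with $x=\cR_T^{\eta^n}$ reduces the control of $u(\cR_T^{\eta^n})$ to the control of the exponential moments of $\exp(-A_2\cR_T^{\eta^n})$, which is exactly what the hypothesis is designed to furnish.

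Second, I would establish that the family $\big(u(\cR_T^{\eta^n})\big)_n$ is uniformly integrable, for which it suffices to bound it in $\L^2$. By the triangle inequality in $\L^2$ and the preceding estimate,
$$
\big\|u(\cR_T^{\eta^n})\big\|_{\L^2}\le \frac1{A_1}+\big\|\exp(-A_2\cR_T^{\eta^n})\big\|_{\L^2},
$$
while the Cauchy--Schwarz inequality yields
$$
\big\|\exp(-A_2\cR_T^{\eta^n})\big\|_{\L^2}^2=\E\big[\exp(-2A_2\cR_T^{\eta^n})\big]\le \big(\E\big[\exp(-4A_2\cR_T^{\eta^n})\big]\big)^{1/2}=\big\|\exp(-2A_2\cR_T^{\eta^n})\big\|_{\L^2}.
$$
The right-hand side is uniformly bounded in $n$ by the assumption that $\big(\exp(-2A_2\cR_T^{\eta^n})\big)_n$ is bounded in $\L^2$; hence $\sup_n\|u(\cR_T^{\eta^n})\|_{\L^2}<\infty$, so the family is bounded in $\L^2$ and therefore uniformly integrable.

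Third, since $u$ is continuous and $\cR_T^{\eta^n}\to\cR_T^{\eta}$ in probability, the continuous mapping theorem gives $u(\cR_T^{\eta^n})\to u(\cR_T^{\eta})$ in probability. Convergence in probability together with uniform integrability yields, by Vitali's convergence theorem, convergence in $\L^1(\P)$; in particular $u(\cR_T^\eta)\in\L^1(\P)$ and $\E\big[u(\cR_T^{\eta^n})\big]\to\E\big[u(\cR_T^{\eta})\big]$, which is the assertion. Alternatively, one may argue along subsequences: every subsequence admits a further subsequence along which $\cR_T^{\eta^n}\to\cR_T^\eta$ almost surely, and the generalized dominated convergence theorem (using $\exp(-A_2\cR_T^{\eta^n})$ as the varying dominating functions, whose integrals converge by the $\L^2$-bound just obtained) identifies the common limit $\E[u(\cR_T^\eta)]$, forcing convergence of the whole sequence.

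The main obstacle is the uniform integrability step: it is precisely there that both ingredients are genuinely used, namely the two-sided exponential control \eqref{ubd1} on $u$, which converts $|u(\cR_T^{\eta^n})|$ into an affine function of $\exp(-A_2\cR_T^{\eta^n})$, and the $\L^2$-bound on $\exp(-2A_2\cR_T^{\eta^n})$, which (via Cauchy--Schwarz) produces a uniform bound on $\E[\exp(-2A_2\cR_T^{\eta^n})]$ dominating the square of $u(\cR_T^{\eta^n})$. Once this bound is in place, the pointwise estimate, the continuous mapping, and the passage to the limit are all routine.
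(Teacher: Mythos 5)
Your proposal is correct and follows essentially the same route as the paper: both establish uniform $\L^2$-boundedness of $\big(u(\cR_T^{\eta^n})\big)_n$ from the two-sided exponential bounds \eqref{ubd1} together with the hypothesis on $\exp(-2A_2\cR_T^{\eta^n})$, and then conclude by Vitali's convergence theorem combined with convergence in probability. The only cosmetic difference is that the paper splits $u$ into its positive and negative parts (bounding $\E[(u^-)^2]$ by $\E[\exp(-2A_2\cR_T^{\eta^n})]$), whereas you use the single pointwise bound $|u(x)|\le \tfrac1{A_1}+\exp(-A_2x)$ and the triangle inequality in $\L^2$; your write-up is, if anything, more explicit about how the $\L^2$-hypothesis is converted into the needed uniform bound via Cauchy--Schwarz.
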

\begin{proof}
We need to prove that $(u(\cR^{\eta^n}_T)_n)$ is uniformly bounded in $\L^2$. But this is a direct consequence of the fact that $(\E[u^+(\cR^{\eta^n}_T)])_n$ is bounded and that, for all $n\in\N$, $\E[(u^-(\cR^{\eta^n}_T))^2]\leq \E[\exp(-2A_2 \cR^{\eta^n}_{T^n})]$, due to inequality \eqref{ubd1}. 
Since $\E[\exp(-2A_2 \cR^{\eta^n}_{T^n})]<\infty$, applying  Vitali's convergence theorem we conclude  that
\begin{equation*}
\E\Big[u\Big(\cR_{T}^{\eta^n}\Big)\Big]\underset{n\longrightarrow \infty}{\longrightarrow}  \E\Big[u\Big(\cR_{T}^{\eta}\Big)\Big].
\end{equation*}
\end{proof}
The next lemma is a direct consequence of the integration by parts formula for the stochastic integral.
\begin{Lem}\label{xsc}
Let $\xi^n\in \dot{\X}^1(T,X_0)$  converge to some $\xi\in\dot{\X}^1(T,X_0)$ in the $\J[0,T]$-weak convergence sense, $\pas$ Then 
\[
\int_0^T (X^{\xi^n}_t)^\top\sigma\,dB_t\underset{n\rightarrow \infty}{\longrightarrow }\int_0^T (X^{\xi}_t)^\top\sigma\,dB_t \quad \pas
\]
 \end{Lem}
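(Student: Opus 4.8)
The plan is to use the stochastic integration by parts (It\^o product) formula to rewrite each stochastic integral as a \emph{pathwise} Lebesgue integral against the bounded continuous process $t\mapsto\sigma B_t$, and then to invoke the definition of weak convergence directly, $\omega$ by $\omega$.

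First I would apply the product rule to the scalar process $t\mapsto (X^{\xi}_t)^\top\sigma B_t$. Since $t\mapsto X^{\xi}_t$ is absolutely continuous, hence of finite variation, with $dX^{\xi}_t=-\xi_t\,dt$, the quadratic covariation between $X^{\xi}$ and $\sigma B$ vanishes, so no It\^o correction term appears. Integrating over $[0,T]$ and using that $B_0=0$ together with the terminal constraint $X^{\xi}_T=0$ (which holds $\pas$ because $\xi\in\dot{\X}^1(T,X_0)$), both boundary terms drop out and I obtain the identity
\[
\int_0^T (X^{\xi}_t)^\top\sigma\,dB_t=\int_0^T \xi_t^\top\sigma B_t\,dt\quad\pas,
\]
and the analogous identity for each $\xi^n$. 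This is the crucial reduction: the right-hand side is, for each fixed $\omega$, an ordinary Lebesgue integral against the bounded function $t\mapsto\sigma B_t(\omega)$.

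Next I would fix $\omega$ in the full-measure set on which $\xi^n(\omega,\cdot)\to\xi(\omega,\cdot)$ weakly in $\J([0,T];\R^d)$ and on which the preceding identities hold. For such $\omega$ the path $t\mapsto\sigma B_t(\omega)$ is continuous on the compact interval $[0,T]$, hence bounded, so it belongs to $\K([0,T];\R^d)=(\J([0,T];\R^d))'$ and defines a continuous linear functional $\eta\mapsto\int_0^T\eta_t^\top\sigma B_t(\omega)\,dt$ on $\J([0,T];\R^d)$. Evaluating this functional along the weakly convergent sequence $\xi^n(\omega,\cdot)$ gives
\[
\int_0^T (\xi^n_t)^\top\sigma B_t\,dt\longrightarrow\int_0^T \xi_t^\top\sigma B_t\,dt,
\]
and combining with the integration by parts identity yields the claimed $\pas$ convergence of the stochastic integrals.

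The only point requiring genuine care is the first step: one must justify that the product formula applies and that the covariation term truly vanishes because $X^{\xi}$ has finite variation, and one must be able to use $X^{\xi}_T=0$ $\pas$. The remainder is a routine pathwise application of the very definition of weak convergence in $\J$; the single subtlety there is to fix one $\P$-null set outside of which the weak convergence of the $(\xi^n)$ and all the representation identities hold simultaneously, so that the limit may be taken for $\P$-a.e.\ $\omega$.
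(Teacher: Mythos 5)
Your proof is correct and takes essentially the same route as the paper, which gives no written proof but introduces the lemma as ``a direct consequence of the integration by parts formula for the stochastic integral.'' Your write-up fills in exactly that intended argument: the product rule (with vanishing covariation since $X^{\xi}$ has absolutely continuous paths, and vanishing boundary terms since $X^{\xi}_T=0$ and $B_0=0$) rewrites each stochastic integral as the pathwise pairing $\int_0^T \xi_t^\top\sigma B_t\,dt$, and then pathwise weak $\L^1$-convergence tested against the bounded continuous path $t\mapsto\sigma B_t(\omega)$, outside a single null set handling all identities at once, yields the claimed almost-sure convergence.
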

Now we are ready to state and prove the following proposition. 
\begin{Prop}\label{cfbt}
Let $(T,X_0, R_0)\in\;]0,\infty[\times\R^d\times \R$ and $T^n$ be a sequence of positive real numbers that converges from below to $T$, i.e., $T^n\uparrow T$. Then we have
\begin{equation}
\liminf_{n} V(T^n, X_0,R_0)\geq V(T,X_0,R_0).\label{iva}
\end{equation}
\end{Prop}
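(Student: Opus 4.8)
The plan is to bound $V(T^n,X_0,R_0)$ from below by the expected utility of a well-chosen admissible strategy $\eta^n\in\dot{\X}^1(T^n,X_0)$ and to show that these expected utilities converge to $V(T,X_0,R_0)$. Let $\xi^*$ be the unique optimal strategy for $(T,X_0,R_0)$ furnished by Theorem \ref{eos}, so that $V(T,X_0,R_0)=\E[u(\cR^{\xi^*}_T)]$; by Remark \ref{km} we may assume $\xi^*\in\overline{K}_m$, whence $\E[\int_0^T f(-\xi^*_t)\,dt]\leq m<\infty$. Since $V(T^n,X_0,R_0)\geq \E[u(\cR^{\eta^n}_{T^n})]$ for every admissible $\eta^n$, it suffices to produce a sequence $(\eta^n)$ with $\E[u(\cR^{\eta^n}_{T^n})]\to V(T,X_0,R_0)$; taking $\liminf_n$ then yields \eqref{iva}. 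To pass from revenues to utilities I would invoke Lemma \ref{crp}: it is enough to check that (i) $\cR^{\eta^n}_{T^n}\to\cR^{\xi^*}_T$ in probability and (ii) $(\exp(-2A_2\cR^{\eta^n}_{T^n}))_n$ is bounded in $\L^2$.

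For the construction I would stress at the outset why the naive idea fails: rescaling $\xi^*$ from $[0,T]$ onto $[0,T^n]$ (i.e.\ accelerating) produces a control of the form $\tfrac{T}{T^n}\xi^*_{tT/T^n}$, which is $\F_{tT/T^n}$-measurable with $tT/T^n>t$ and therefore anticipates the future. Instead I would keep the speed of $\xi^*$ on the bulk and repair only the endpoint: fix a window size $\rho_n\downarrow 0$ with $\delta_n:=T-T^n=o(\rho_n)$, let $s_n:=T^n-\rho_n\uparrow T$, set $\eta^n:=\xi^*$ on $[0,s_n]$, and on the terminal window $[s_n,T^n]$ liquidate the residual $X^{\xi^*}_{s_n}$ by following the deterministic optimal CARA-$A_2$ liquidation profile (the optimal strategy of the corresponding exponential value function) scaled by the residual, evaluated at the well-chosen point $(X^{\xi^*}_{s_n},\rho_n)$. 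Since $X^{\xi^*}_{s_n}$ is $\F_{s_n}$-measurable, $\eta^n$ is adapted and no future information is used, so $\eta^n\in\dot{\X}^1(T^n,X_0)$.

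To verify (i) I would split $\cR^{\eta^n}_{T^n}-\cR^{\xi^*}_T$ into a bulk part on $[0,s_n]$ and a terminal-window part on $[s_n,T^n]$. On the bulk the drift and cost integrals $\int_0^{s_n} b\cdot X^{\xi^*}_t\,dt$ and $\int_0^{s_n} f(-\xi^*_t)\,dt$ converge to their counterparts over $[0,T]$ by absolute continuity of the integral, using $s_n\uparrow T$ together with the integrability \eqref{ias} and $\E[\int_0^T f(-\xi^*)]<\infty$; the stochastic integral $\int_0^{s_n}(X^{\xi^*}_t)^\top\sigma\,dB_t$ converges $\pas$ along the lines of Lemma \ref{xsc}. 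The terminal-window contributions vanish because the position stays close to $X^{\xi^*}_T=0$ and the window length tends to $0$, so its drift and volatility parts disappear. For (ii) I would combine the exponential-moment control of $\xi^*$ (Assumption \ref{ar}, i.e.\ restricting to $\dot{\X}^1_{2A_2}(T,X_0)$) with the finite exponential moments of the deterministic terminal piece via the Cauchy--Schwarz inequality, so that $\E[\exp(-2A_2\cR^{\eta^n}_{T^n})]$ stays bounded. Granting (i) and (ii), Lemma \ref{crp} gives $\E[u(\cR^{\eta^n}_{T^n})]\to\E[u(\cR^{\xi^*}_T)]=V(T,X_0,R_0)$, and since $V(T^n,X_0,R_0)\geq\E[u(\cR^{\eta^n}_{T^n})]$ we obtain $\liminf_n V(T^n,X_0,R_0)\geq V(T,X_0,R_0)$.

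The main obstacle is the cost of clearing the residual $X^{\xi^*}_{s_n}$ on the shrinking window while the horizon is strictly shorter than $T$. Because $\xi^*$ is only integrable with $\int_0^T f(-\xi^*)<\infty$ and carries no pointwise control near $T$, the (Jensen-optimal) constant-speed cost $\rho_n\, f\!\big(-X^{\xi^*}_{s_n}/\rho_n\big)$ of the repair need not vanish for an arbitrary choice of window; forcing it to $0$ requires a careful calibration of $\rho_n$ against $\delta_n$ and exploits the convexity of $f$ together with the regularity of the deterministic exponential profile and the integrability of $f(-\xi^*)$. This is precisely where approximation from below is harder than from above, and it is the reason acceleration and measurable-selection arguments are unavailable here.
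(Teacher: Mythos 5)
Your route is genuinely different from the paper's: the paper never constructs strategies that are admissible for the shorter horizons $T^n$ at all. Instead it fixes a strategy $\xi\in\dot{\X}^1_{2A_2}(T,X_0)$ for the \emph{original} horizon, shows that $\varphi^\xi(\overline{T}):=\E\big[u\big(\cR^\xi_{\overline{T}}\big)\big]$ is continuous at $T$ from the left (a.s.\ convergence of $\cR^\xi_{T^n}$ to $\cR^\xi_T$, a uniform bound on $\E\big[\exp\big(-2A_2\cR^\xi_{T^n}\big)\big]$ obtained by conditioning on $\F_{T^n}$ and Jensen's inequality, then Vitali's theorem), and then invokes lower semi-continuity of the supremum $\sup_\xi\varphi^\xi$ of a family of continuous functions. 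Your proposal instead builds $\eta^n\in\dot{\X}^1(T^n,X_0)$ by following $\xi^*$ on $[0,s_n]$ and repairing on a terminal window; this is precisely the construction the paper avoids, and the obstacle you flag at the end is, in my assessment, a genuine gap rather than a calibration detail that can be deferred.

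Here is why the gap cannot be closed as you describe. The repair cost is $\rho_n f\big(-X^{\xi^*}_{s_n}/\rho_n\big)$, and because $T^n<T$ the window $\rho_n=T^n-s_n$ is \emph{always} strictly shorter than the remaining time $T-s_n=\rho_n+\delta_n$ of the original problem. Jensen's inequality controls the cost over the \emph{longer} window, $(T-s_n)\,f\big(-X^{\xi^*}_{s_n}/(T-s_n)\big)\leq\int_{s_n}^T f(-\xi^*_t)\,dt\rightarrow0$, but to pass to the shorter window you must evaluate $f$ at the dilated argument $\lambda_n y_n$ with $\lambda_n=(T-s_n)/\rho_n>1$ and $y_n=-X^{\xi^*}_{s_n}/(T-s_n)$. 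Convexity of $f$ with $f(0)=0$ gives $f(\lambda_n y_n)\geq\lambda_n f(y_n)$ --- the \emph{wrong} direction --- so ``convexity of $f$ together with careful calibration'' cannot do the job. What one actually needs is an upper comparison of $f(\lambda y)$ with $f(y)$ for $\lambda\downarrow1$, uniform in $y$, i.e.\ a doubling/$\Delta_2$-type or two-sided power-growth condition on $f$; no such condition is assumed by the paper in this section (Assumption \ref{afp} appears only later, for the Bellman principle, and is a one-sided upper bound in any case). Worse, $y_n$ may blow up as $s_n\rightarrow T$ at an $\omega$-dependent rate, since only $\int_0^T f(-\xi^*_t)\,dt<\infty$ is known and $\xi^*$ admits no pointwise control near $T$; hence no deterministic calibration of $\rho_n$ against $\delta_n$ works for a.e.\ $\omega$ simultaneously, and to repair this one would have to take the switching times $s_n$ to be stopping times, which your construction does not provide. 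A secondary issue: for your condition (ii), splitting $\cR^{\eta^n}_{T^n}$ into the bulk $\cR^{\xi^*}_{s_n}$ and the window contribution and applying Cauchy--Schwarz requires exponential moments of $\cR^{\xi^*}$ of order strictly greater than $2A_2$, which Assumption \ref{ar} does not supply; the paper sidesteps this because, for a \emph{fixed} strategy, the needed bound follows from conditional Jensen at the level $2A_2$ itself.
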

\begin{proof} In the following, we will need Assumption \ref{ar}. Let $(T,X_0,R_0)\in\,]0,\infty[\times\R^d\times\R$ and $\xi\in\dot{\X}^1_{2A_2}(T,X_0)$.
Define 
\begin{align*}
\varphi^\xi:&\,]0,\infty[\longrightarrow \R\\
&\overline{T}\longmapsto \E\big[u\big(\cR^\xi_{\overline{T}}\big)\big].
\end{align*}
Note that the map $\varphi^\xi$ is constant on $[T,\infty[$. We show  that $\varphi^\xi$ is continuous at $T$. To this end, it is sufficient to take a sequence
$(T^n)$ such that $T^n\uparrow T$ and to prove that 
\begin{equation}
\varphi^\xi(T^n)\longrightarrow \varphi^\xi(T)\label{vcv}
\end{equation}
or, equivalently,  
\begin{equation*}
\E\big[u\big(\cR^\xi_{T^n}\big)\big]\longrightarrow \E\big[u\big(\cR^{\xi}_T\big)\big].
\end{equation*}
We easily have the   convergence 
\begin{equation}
\cR_{T^n}^{\xi}=\int_0^{T^n} (X^{\xi}_{ t})^\top\sigma\, dB_t+\int_0^{T^n} b\cdot X^{\xi}_t\,dt-\int_0^{T^n} f(-\xi_t)\,dt\underset{n\rightarrow \infty}{\longrightarrow} \cR^{\xi}_T\q 
\end{equation}
Because $u$ is continuous, we then obtain
\begin{equation}
\lim_nu\big(\cR_{T^n}^{\xi}\big)= u\big(\cR_{T}^{\xi}\big)\quad \pas
\end{equation} 
Now, we have to  prove the boundedness of the sequence $(\E[\exp(-2A \cR^{\xi}_{T^n})])_n$.
For this matter, we write
\begin{IEEEeqnarray*}{rCl} 
\IEEEeqnarraymulticol{3}{l}{\E\big[\exp\big(-2A \cR^{\xi}_{T^n}\big)\big]}\\
&\leq& K\E\Big[\exp\Big(-2A \Big(\E\bigg[\int_0^{T} (X^{\xi}_{ t})^\top\sigma\, dB_t+\int_0^{T} b\cdot X^{\xi}_t\,dt-\int_0^{T} f(-\xi_t)\,dt\Big|\F_{T^n}\Big]\Big)\Big)\Big]\\
&\leq& K\E\Big[\E\Big[\exp\Big(-2A \Big(\int_0^{T} (X^{\xi}_{ t})^\top\sigma\, dB_t+\int_0^{T} b\cdot X^{\xi}_t\,dt-\int_0^{T} f(-\xi_t)\,dt\Big)\Big)\Big|\F_{T^n}\Big]\Big]\\
 &=&K\E\Big[\exp\Big(-2A \Big(\int_0^{T} (X^{\xi}_{ t})^\top\sigma\, dB_t+\int_0^{T} b\cdot X^{\xi}_t\,dt-\int_0^{T} f(-\xi_t)\,dt\Big)\Big)\Big]
<\infty,
 \end{IEEEeqnarray*}
 where $K=\exp (T|b|\|X^{\xi}\|_{\L^2})$ is obtained  using H\"older's inequality, and where the finiteness of the last term follows with $\xi\in\dot{\X}^1_{2A_2}(T,X_0)$.
 Thus, the sequence   $(u(\cR^{\xi}_{T^n})$ is uniformly bounded in $\L^2$, whence  using Vitali's convergence theorem we infer
  \[
\E\big[u\big(\cR_{T^n}^{\xi}\big)\big]\underset{n\rightarrow \infty}{\longrightarrow}  \E\big[u\big(\cR_{T}^{\xi}\big)\big],
\]
which proves \eqref{vcv}. Hence, $\varphi^\xi$ is continuous at $T$, and $\sup_{\xi\in\dot{\X}^1_{2A_2}(T,X_0)}\varphi^\xi$ is lower semi-continuous at $T$, because it is the supremum of a family of (lower semi-) continuous functions. Since $$\sup_{\xi\in\dot{\X}^1_{2A_2}(T,X_0)}\varphi^\xi (T)=V(T,X_0,R_0),$$ this proves in particular that for every 
 sequence of time $T^n$ that converges from below to $T$,  we have
\begin{equation}
\liminf_{n} \sup_{\xi\in\dot{\X}^1_{2A_2}(T,X_0)}\varphi^\xi(T^n)\geq \sup_{\xi\in\dot{\X}^1_{2A_2}(T,X_0)}\varphi^\xi(T)=V(T,X_0,R_0),
\end{equation}
which proves \eqref{iva}.
\end{proof}
We can now derive the lower semi-continuity of the value function $V$.
\begin{Prop}\label{lsc}
The value function is lower semi-continuous on $]0,\infty[\times\R^d\times \R$.
\end{Prop}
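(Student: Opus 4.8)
The plan is to establish that for every sequence $(T^n,X_0^n,R_0^n)$ converging to $(T,X_0,R_0)$ one has $\liminf_n V(T^n,X_0^n,R_0^n)\ge V(T,X_0,R_0)$; together with the upper semi-continuity of Proposition \ref{usc} this yields the desired continuity. Passing to a subsequence that realizes the lower limit and then to a monotone sub-subsequence, I may assume that either $T^n\downarrow T$ or $T^n\uparrow T$. The guiding idea is to decouple the spatial perturbation $(X_0^n,R_0^n)\to(X_0,R_0)$ from the temporal one $T^n\to T$: the spatial part will be absorbed by a convex-combination argument resting on the concavity established in Proposition \ref{cvf}, while the temporal part will be reduced to the already available one-sided-in-time estimates.

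First I would carry out the spatial reduction. Choose $\lambda_n:=\min\{1/2,\sqrt{|X_0^n-X_0|+|R_0^n-R_0|}\}$, so that $\lambda_n\to0$ while $(X_0^n-X_0)/\lambda_n\to0$ and $(R_0^n-R_0)/\lambda_n\to0$; set $\hat X^n:=X_0+(X_0^n-X_0)/\lambda_n\to X_0$ and $\hat R^n:=R_0+(R_0^n-R_0)/\lambda_n\to R_0$. Let $\bar\xi^n$ be the optimal strategy for $V(T^n,X_0,R_0)$ (Theorem \ref{eos}), and let $\zeta^n$ be a deterministic liquidation strategy for the data $(T^n,\hat X^n)$ — for instance the corresponding CARA-optimal one, or simply the constant-speed strategy $\zeta^n_t\equiv\hat X^n/T^n$. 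Then $\eta^n:=(1-\lambda_n)\bar\xi^n+\lambda_n\zeta^n\in\dot{\X}^1(T^n,X_0^n)$, and since $(1-\lambda_n)R_0+\lambda_n\hat R^n=R_0^n$, exactly the computation of Proposition \ref{cvf} (concavity of $\xi\mapsto\cR^\xi$ together with the concavity and monotonicity of $u$, applied across different initial data) gives
\[
V(T^n,X_0^n,R_0^n)\ge\E\big[u\big(\cR^{\eta^n}_{T^n}\big)\big]\ge(1-\lambda_n)\,V(T^n,X_0,R_0)+\lambda_n\,\E\big[u\big(\cR^{\zeta^n}_{T^n}\big)\big].
\]
Because $\zeta^n$ is deterministic with uniformly bounded data, $\E[u(\cR^{\zeta^n}_{T^n})]\ge\E[u_2(\cR^{\zeta^n}_{T^n})]=-\E[\exp(-A_2\cR^{\zeta^n}_{T^n})]$ stays bounded below uniformly in $n$ (the Gaussian moment generating function keeps the last expectation bounded), using \eqref{vfs} and \eqref{ubd1}. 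Letting $\lambda_n\to0$ therefore yields $\liminf_n V(T^n,X_0^n,R_0^n)\ge\liminf_n V(T^n,X_0,R_0)$. The virtue of the convex combination is that concavity delivers this lower bound for free, with no delicate passage to the limit in the nonlinear cost term $\int_0^{T^n}f(-\eta^n_t)\,dt$.

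It then remains to control the fixed-space quantity $\liminf_n V(T^n,X_0,R_0)$. If $T^n\uparrow T$ this is precisely the content of Proposition \ref{cfbt}, which gives $\liminf_n V(T^n,X_0,R_0)\ge V(T,X_0,R_0)$. If instead $T^n\downarrow T$, the estimate is immediate from the monotonicity of $V$ in its time argument: extending the optimal strategy $\xi^*$ for $V(T,X_0,R_0)$ by $\xi^*\equiv0$ on $[T,T^n]$ keeps the state at $0$ and hence leaves the revenue unchanged, so $V(T^n,X_0,R_0)\ge\E[u(\cR^{\xi^*}_{T})]=V(T,X_0,R_0)$ for every $n$. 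In both cases $\liminf_n V(T^n,X_0,R_0)\ge V(T,X_0,R_0)$, and combining this with the spatial reduction proves the lower semi-continuity.

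I expect the genuine difficulty to be concentrated entirely in the from-below time case, which is exactly why it was isolated in Proposition \ref{cfbt}: one cannot compress the optimal strategy into the shorter horizon $[0,T^n]$, since a time change of an adapted process would look into the future and destroy adaptedness, while an impulsive terminal correction is ruled out by the absolute-continuity (finite-fuel) constraint. Accordingly, all the integrability machinery — Assumption \ref{ar}, the $\L^2$-bound on $\exp(-2A_2\cR^\xi_{T^n})$, and the Vitali argument of Lemma \ref{crp} — is expended there rather than in the present proof. Once Proposition \ref{cfbt} is granted, the remaining steps are soft, relying only on the concavity of Proposition \ref{cvf}, the CARA comparison \eqref{vfs}, and the monotonicity of $V$ in $T$.
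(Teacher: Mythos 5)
Your proof is correct, and in one important respect it takes a genuinely different and cleaner route than the paper's. The paper also splits into the cases $T^n\downarrow T$ and $T^n\uparrow T$, and its treatment of $T^n\uparrow T$ is essentially yours: a convex combination of the optimal strategy for $V(T^n,X_0,R_0)$ with a CARA strategy, concavity as in Proposition \ref{cvf}, the bound \eqref{vfs}, and Proposition \ref{cfbt}. Where you diverge is the case $T^n\downarrow T$: the paper there forms a convex combination with the optimal strategy $\xi^*$ at the \emph{limit} point $(T,X_0,R_0)$ and then passes to the limit in the expected utility itself, which costs it almost-sure convergence of the revenues (Lemma \ref{xsc}), a uniform $\L^2$-bound on $\exp(-2A_2\cR^{\xi^n}_{T^n})$ resting on Assumption \ref{ar}, and a Vitali argument (Lemma \ref{crp}). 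You instead first decouple the spatial perturbation from the temporal one — reducing everything to $\liminf_n V(T^n,X_0,R_0)$ via concavity and the optimal strategies at $(T^n,X_0,R_0)$ — and then dispose of the downward-in-time case by the elementary monotonicity of $V$ in $T$ (zero-extension of strategies, $f(0)=0$), a fact the paper itself already invokes in the proof of Proposition \ref{icv}. This makes the downward case independent of Assumption \ref{ar} and of the Vitali machinery, which is a real simplification. Two further remarks. First, your choice $\lambda_n\sim\sqrt{|X_0^n-X_0|+|R_0^n-R_0|}$, forcing both $(X_0^n-X_0)/\lambda_n\to0$ and $(R_0^n-R_0)/\lambda_n\to0$, handles the initial-revenue matching more carefully than the paper's $\lambda_n=|X_0^n-X_0|$: the paper never adjusts the revenue parameter of the auxiliary CARA strategy to make the combined strategy start exactly at $R_0^n$, and with its choice of $\lambda_n$ the auxiliary revenue $R_0+(R_0^n-R_0)/\lambda_n$ could even be unbounded; your scaling removes this issue. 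Second, two small points you should make explicit: when $(X_0^n,R_0^n)=(X_0,R_0)$ your $\lambda_n$ vanishes and $\hat X^n,\hat R^n$ are undefined, so adopt the paper's convention $\lambda_n=1/n$ in that case (the reduction inequality is then trivial anyway); and absorbing the factor $(1-\lambda_n)$ requires $V(T^n,X_0,R_0)$ to be bounded, which follows from \eqref{vfs} together with the upper bound $u\leq 1/A_1$ and the continuity of the CARA value function $V_2$.
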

\begin{proof}
Let $(T,X_0, R_0)\in\;]0,\infty[\times\R^d\times \R$ and $(T^n,X^n_0,R^n_0)_n$ be a sequence that converges to $(T,X_0,R_0)$. We have to show that
\begin{equation}
\liminf_{n} V(T^n, X^n_0,R^n_0)\geq V(T,X_0,R_0).\label{lsi}
\end{equation}
We split the proof of \eqref{lsi} in two parts; first we will assume that $T^n\downarrow T$, second we will assume that $T^n\uparrow T$ (for this latter case, we will use Proposition \ref{cfbt}).\\
\underline{First case:} Suppose that $T^n\downarrow T$. 
We  set  
\begin{equation}
\lambda_n:=\begin{cases}|X^n_0-X_0\big|,& \text{if } |X^n_0-X_0|\neq0,\\
                                               \frac1{n},&\text{otherwise},\\
                          \end{cases}\label{lan}
\end{equation}
                           which belongs to $]0,1[$, for $n$ large enough. Let  now $\widehat{X}^n_0\in\R^d$ be  such that
$
X^n_0=(1-\lambda_n)X_0+\lambda_n \widehat{X}^n_0
$
 and consider  the  sequence of strategies
\begin{equation*}
\xi^n_t:=(1-\lambda_n) \xi^*_{t}+\lambda_n\widehat{\xi}^n_t,
\end{equation*}
 where  $\xi^*$ is the optimal strategy associated to  $V(T,X_0,R_0)$, and $\widehat{\xi}^n$ is the optimal strategy associated to  $V_2(T^n, \widehat{X}^n_0, R^n_0)$. \\ 
 Note that, due to the choice of $\lambda_n$, the vector $\widehat{X}^n_0$ is bounded: indeed, we have
 \[
\widehat{X}^n_0=\frac{X^n_0-X_0}{\lambda_n}+\lambda_n +X_0,
\]
which is bounded, due to the boundedness of $X^n_0$ and the definition of $\lambda_n$.
Hence, $V_2 (T^n,\widehat{X}^n_0,R^n_0)$ is bounded in $n$, which implies that $\int^{T^n}_0 f(-\widehat{\xi}^n_t)\,dt$ 
is again bounded in $n$. 
Since $f$ has superlinear growth and is positive, the integral $\int^{T^n}_0|-\widehat{\xi}^n_t|\,dt$ is also bounded in $n$.

Observe that
\begin{align*}
\int^{T^n}_0 \xi^n_t\;dt= (1-\lambda_n)\int^{T^n}_0 \xi^*_{t}\;dt +\lambda_n \int^{T^n}_0\widehat{\xi}^n_t\;dt=(1-\lambda_n)X_0 + \lambda_n \widehat{X}^n_0=X^n_0,
\end{align*}
where the last equality follows  with $T^n\geq T$ and the fact that $\xi^*_t=0$ for $t\geq T$. Moreover, $\xi^n$ verifies \eqref{ias}, due to the convexity of $f$ and the boundedness of $\widehat{\xi}^n$,
whence  $\xi^n\in \dot{\X}^1_{2A_2}(T^n,X^n_0)$.

 We now show  that
\begin{equation}
\cR_{T^n}^{\xi^n}=\int_0^{T^n} (X^{\xi^n}_{ t})^\top\sigma\, dB_t+\int_0^{T^n} b\cdot X^{\xi^n}_t\,dt-\int_0^{T^n} f(-\xi^n_t)\,dt\underset{n\rightarrow \infty}{\longrightarrow} \cR^{\xi^*}_T,\q,\label{rcn}
\end{equation}
by individually consedering each term, starting from the left.\\
 Because $\int^{T^n}_0|\widehat{\xi}_t^n|\,dt$ is uniformly bounded, $\xi^n$ converges to $\xi^*$ in $\J[0,T],\;\pas$ Indeed, we write
\begin{align*}
\E\bigg[\int_0^{T^n}\big|\xi^n_t-\xi^*_t\big|\, dt\bigg]
&=\lambda_n \Big(\E\bigg[\int_0^{T^n}\big|\widehat{\xi}^n_t\big|\, dt\bigg]+\E\bigg[\int_{T^n}^T\big|\xi^*_t\big|\, dt\bigg]\Big)
\underset{n\rightarrow \infty}{\longrightarrow} 0.
\end{align*}
Therefore, Lemma \textcolor{red}{\ref{xsc}} yields
\[
\int_0^{T^n} (X^{\xi^n}_{ t})^\top\sigma\, dB_t\underset{n\rightarrow \infty}{\longrightarrow} \int_0^T(X^{\xi^*}_{ t})^\top\sigma\, dB_t.
\] 
Due to $X^{\xi^n}_t=(1-\lambda_n)X^{\xi^*}_{ t}+ \lambda_n X^{\widehat{\xi}^n}_t \; \pas \;\text{ for all } t\in[0,T^n]$, we can express the second integral in \eqref{rcn} as follows:
\begin{align*}
\int_0^{T^n} b\cdot X^{\xi^n}_t\,dt
&=(1-\lambda_n)\int_0^{T} b\cdot X^{\xi^*}_{ t} \, dt + \lambda_n\int_0^{T^n} b\cdot X^{\widehat{\xi}^n}_{ t} \, dt,
\end{align*}
which converges $\pas$ to $\int_0^{T} b\cdot X^{\xi^*}_{ t} \, dt$, because $\int_0^{T^n} b\cdot X^{\widehat{\xi}^n}_{ t} \, dt$ is uniformly bounded and $\lambda_n$ is a null sequence.

  We now prove  that
\begin{equation}
\int_0^{T}  f\big(-(1-\lambda_n) \xi^*_t-\lambda_n\widehat{\xi}^n_{t}\big)\,dt\underset{n\rightarrow \infty}{\longrightarrow}\int_0^{T} f(-\xi^*_{t})\,dt,\quad \pas\label{fc}
\end{equation}
Due to the continuity of $f$, we have
\begin{equation*}
 f\big(-(1-\lambda_n) \xi^*_t-\lambda_n\widehat{\xi}^n_{t}\big)\longrightarrow f\big(-\xi^*_t\big), \quad\pas
\end{equation*}
Because $f$ is convex, we further get
\[
0\leq f\big(-(1-\lambda_n) \xi^*_t-\lambda_n\widehat{\xi}^n_{t}\big)\leq (1-\lambda_n) f\big(-\xi^*_t\big)\,dt+ \lambda_n  f\big(-\widehat{\xi}^n_t\big).
\]
Since $\int_0^Tf(-\widehat{\xi}^n_t)\,dt$ is uniformly bounded in $n$,  the dominated convergence theorem of Lebesgue implies  \eqref{fc}. Therefore, \eqref{rcn} is established,
whence again
\begin{equation}
\lim_n u\big(\cR_{T^n}^{\xi^{n}}\big)= u\big(\cR_{T}^{\xi^{*}}\big)\quad \pas,\label{ule}
\end{equation} 
using the continuity of $u$.

Further, with $L:=\sup_n V_2(T^n, \widehat{X}^n_0, R^n_0)$, we obtain
\begin{align*}
\exp(-2A_2\cR_{T^n}^{\xi^n}) &\leq  \big((1-\lambda_n) \exp(-2A_2\cR_{T^n}^{\xi^*})+\lambda_n \exp(-2A_2\cR_{T^n}^{\widehat{\xi}^n})\big)\\ 
&\leq  \big((1-\lambda_n)M_{\cR_{T}^{\xi^*}}(2A2) +\lambda_n L\big)<\infty,
\end{align*}
because $\xi\mapsto\exp(-2A \cR_{T^n}^\xi)$ is convex and $T^n\geq T$, in conjunction with  Assumption \ref{ar}.    Therefore,  applying Lemma \ref{crp} gives
\[
\E\big[u\big(\cR_{T^n}^{\xi^n}\big)\big]\underset{n\longrightarrow \infty}{\longrightarrow}  \E\big[u\big(\cR_{T}^{\xi^*}\big)\big].
\]
Finally, we can write
\begin{align*}
\liminf_n V(T^n, X^n_0,R^n_0)\geq \liminf_{n} \E\left[u\left(\cR_{T^n}^{\xi^n}\right)\right]= \E\left[u\left(\cR_{T}^{\xi^*}\right)\right]
=V(T,X_0,R_0),
\end{align*}
which proves \eqref{lsi} when  $T^n\downarrow T$. 
\\
\underline{Second case:} Suppose now that $T^n\uparrow T$. We  let  $\lambda_n$ and  $\widehat{X}^n_0\in\R^d$ as in \eqref{lan} and consider the following sequence of strategies
\begin{equation*}
\xi^n_t:=(1-\lambda_n) \xi^{*,n}_{t}+\lambda_n\widehat{\xi}^n_t,
\end{equation*}
 where  $\xi^{*,n}$ is the optimal strategy associated to  $V(T^n,X_0,R_0)$ and  $\widehat{\xi}^n$  is the optimal strategy associated to $V_2(T^n, \widehat{X}^n_0, R^n_0)$.
 
As above, we can  show that $\xi^n\in \dot{\X}^1_{2A_2}(T^n,X^n_0)$, wherefore
\begin{align*}
\liminf_n V(T^n, X^n_0,R^n_0)&\geq 
 \liminf_{n} \E\big[u\big(\cR_{T^n}^{(1-\lambda_n) \xi^{*,n}+\lambda_n\widehat{\xi}^n}\big)\big]\\
&\geq\liminf_{n} \big( (1-\lambda_n)\E\big[u\big(\cR_{T^n}^{\xi^{*,n}}\big)\big]+\lambda_n \E\big[u\big(\cR_{T^n}^{\widehat{\xi}^n}\big)\big]\big)\\
&\geq  \liminf_{n} (1-\lambda_n)V(T^n,X_0,R_0)+\liminf_{n}\lambda_n V_2(T^n,X^n_0,R^n_0)\\
&\geq V(T,X_0,R_0).
\end{align*}
Here, we have used the concavity of $\xi\mapsto \E[u(\cR_{T}^{\xi})]$ for the second inequality, inequality \eqref{vfs} for the third one, and Proposition \ref{cfbt}, in conjunction with the fact that $V_2(T^n,X^n_0,R^n_0)$ is bounded and $\lambda_n$ is a null sequence, for the last one.
This proves \eqref{lsi} when  $T^n\uparrow T$. 
\end{proof}
As a consequence of Proposition \ref{usc} and Proposition \ref{lsc}, we obtain the following fundamental result.
\begin{Theo}\label{cv}
The value function $V$ is continuous on $]0,\infty[\times\R^d\times \R$. 
\end{Theo}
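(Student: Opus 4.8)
The plan is simply to combine the two preceding propositions, since a real-valued function on a metric space is continuous precisely when it is simultaneously upper and lower semi-continuous. No new analysis is required here: all of the substantive work---the weak compactness arguments, the convex-combination constructions, and the separate treatment of the two time directions---has already been carried out in Proposition \ref{usc} and Proposition \ref{lsc}. The present theorem is merely their synthesis.

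Concretely, I would fix $(T,X_0,R_0)\in\;]0,\infty[\times\R^d\times\R$ and take an arbitrary sequence $(T^n,X^n_0,R^n_0)_n$ converging to it. Proposition \ref{usc} supplies $\limsup_n V(T^n,X^n_0,R^n_0)\leq V(T,X_0,R_0)$, while Proposition \ref{lsc} supplies $\liminf_n V(T^n,X^n_0,R^n_0)\geq V(T,X_0,R_0)$. Chaining these two inequalities with the trivial bound $\liminf_n\leq\limsup_n$ forces all the quantities involved to coincide, so that $\lim_n V(T^n,X^n_0,R^n_0)$ exists and equals $V(T,X_0,R_0)$.

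Since the domain $]0,\infty[\times\R^d\times\R$ is a metric space, this sequential characterization of the limit is equivalent to continuity of $V$ at $(T,X_0,R_0)$, and as the base point was arbitrary, $V$ is continuous on all of $]0,\infty[\times\R^d\times\R$. I anticipate no genuine obstacle at this stage; the only point worth keeping in mind is that both input propositions are formulated for general sequences converging in \emph{every} coordinate (and not merely in the time variable), so that the combination of the upper and lower semi-continuity estimates indeed yields \emph{joint} continuity rather than mere separate continuity in each argument.
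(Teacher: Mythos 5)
Your proposal is correct and matches the paper exactly: the paper states Theorem \ref{cv} as an immediate consequence of Proposition \ref{usc} (upper semi-continuity) and Proposition \ref{lsc} (lower semi-continuity), with no further argument needed. Your remark that both propositions handle sequences converging jointly in all coordinates, so that the conclusion is genuine joint continuity, is the right point to check and it holds.
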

\subsection{The Bellman principle and the construction of $\e$-maximizers.}
In this section we prove the Bellman principle of optimality underlying our maximization problem \eqref{omp}. To this end, we use  
$\e$-maximizers  constructed on a bounded region. Their existence is proved by using an approximating sequence of strategies. Thus, we avoid here the use of a measurable selection theorem, which appears typically in optimal control theory. The dynamic programming principle is a key result to prove both a verification theorem and a theorem stating that the value function is a solution, in the viscosity sense, 
of a Hamilton-Jacobi-Bellman equation. From now on, for a fixed time $T\in\;]0,\infty[$, we will consider the time-reversed value function: $t\mapsto V(T-t,X_0,R_0)$,  
and we will assume that $(\Omega, \F,\P)$ is the \emph{canonical Wiener Space}. 
\begin{Theo}
\emph{(Bellman Principle)}
\label{bp}
Let $(T,X_0,R_0)\in\;]0,\infty[\times\R^d\times\R$. Then we have
 \begin{equation}
 V(T, X_{0}, R_{0})=\sup_{\xi\in\dot{\X}^1(T, X_0)}\E\big[V\big(T-\tau, X^\xi_{\tau}, \cR^{\xi}_{\tau}\big)\big]
\label{ebp}
\end{equation}
for every stopping time $\tau$ taking values in $[0,T[$.
\end{Theo}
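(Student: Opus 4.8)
The plan is to prove the two inequalities in \eqref{ebp} separately, using the continuity of $V$ (Theorem \ref{cv}) exactly at the point where a measurable selection argument would otherwise be needed. Throughout I exploit that on the canonical Wiener space the strong Markov property makes the shifted increments $(B_{\tau+s}-B_\tau)_{s\geq0}$ a Brownian motion independent of $\F_\tau$; this lets me realise any continuation of a control after time $\tau$ as a fresh strategy driven by these increments, and it identifies the conditional continuation problem, given $\F_\tau$, with the maximization problem \eqref{omp} run over horizon $T-\tau$ from the data $(X^\xi_\tau,\cR^\xi_\tau)$.

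For the inequality ``$\leq$'' I fix $\xi\in\dot\X^1(T,X_0)$ and split the revenues as $\cR^\xi_T=\cR^\xi_\tau+(\text{revenues accrued on }[\tau,T])$. Conditioning on $\F_\tau$ and using the independence of the shifted increments, the tail $(\xi_{\tau+s})_{s\in[0,T-\tau]}$ is, conditionally, an admissible control liquidating $X^\xi_\tau$ over the horizon $T-\tau$, so that $\E[u(\cR^\xi_T)\mid\F_\tau]\leq V(T-\tau,X^\xi_\tau,\cR^\xi_\tau)$ almost surely. Taking expectations and then the supremum over $\xi$ gives $V(T,X_0,R_0)\leq\sup_\xi\E[V(T-\tau,X^\xi_\tau,\cR^\xi_\tau)]$; the only points to check are the measurability and integrability of the right-hand integrand, both of which follow from Theorem \ref{cv} and the sandwich \eqref{vfs}.

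The reverse inequality is the substantial part, and the one that usually calls for measurable selection. Fixing $\hat\xi\in\dot\X^1(T,X_0)$ and $\e>0$, it suffices to construct $\xi\in\dot\X^1(T,X_0)$ that follows $\hat\xi$ on $[0,\tau]$ and then splices in an $\e$-optimal continuation, so that $\E[u(\cR^\xi_T)]\geq\E[V(T-\tau,X^{\hat\xi}_\tau,\cR^{\hat\xi}_\tau)]-C\e$. Because the reachable states $(T-\tau,X^{\hat\xi}_\tau,\cR^{\hat\xi}_\tau)$ are not confined to a compact set --- the revenues carry an unbounded stochastic integral and $T-\tau$ may tend to $0$ --- I first restrict to a compact $\mathcal K\subset\,]0,T]\times\R^d\times\R$ carrying most of the mass, estimating the complement by the moment control of Assumption \ref{ar} together with the CARA bounds \eqref{vfs}. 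On $\mathcal K$ the uniform continuity of $V$ furnishes a finite Borel partition into cells on which $V$ oscillates by at most $\e$; on each cell I fix a representative state and, by Theorem \ref{eos} (or an approximating sequence as in Lemma \ref{osp}), a single strategy that is $\e$-optimal for that representative. Defining the continuation cell by cell on the shifted increments produces an adapted, admissible control without invoking any selection theorem, precisely because only finitely many cells occur.

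It then remains to verify that the spliced control lies in $\dot\X^1(T,X_0)$ --- that it still satisfies $X^\xi_T=0$ and the integrability requirement \eqref{ias} --- and to bound $\E[u(\cR^\xi_T)]$ from below. Computing the conditional expected utility of each spliced piece through the independence of the shifted increments, and combining the per-cell oscillation bound with the error estimate on $\mathcal K^c$, yields $\E[u(\cR^\xi_T)]\geq\E[V(T-\tau,X^{\hat\xi}_\tau,\cR^{\hat\xi}_\tau)]-C\e$; letting $\e\downarrow0$ and taking the supremum over $\hat\xi$ finishes the proof. I expect the main obstacle to be this splicing step: arranging the truncation so that the error on $\mathcal K^c$ is genuinely controlled (handling both the unbounded revenues and the singular behaviour of $V$ as $T-\tau\downarrow0$ recorded in \eqref{hjbic}), and confirming that the glued control inherits admissibility, are where the real work concentrates.
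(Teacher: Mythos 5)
Your ``$\leq$'' direction is sound and is essentially the paper's: conditioning on $\F_\tau$, using the shift on the Wiener space to identify the tail of a control as an admissible continuation (the paper's Lemma \ref{ce} and Proposition \ref{esp}), with integrability on the unbounded region handled by the CARA sandwich (Lemma \ref{eN}). The genuine gap is in your ``$\geq$'' direction, in the cell-by-cell construction. You partition the reachable states $(T-\tau,X^{\hat{\xi}}_\tau,\cR^{\hat{\xi}}_\tau)$ into finitely many cells, fix one representative $(t_j,x_j,r_j)$ per cell, and splice in the ($\e$-)optimal strategy of that representative. But that strategy liquidates the position $x_j$ over the horizon $t_j$; at every other point of the cell the actual position is $X^{\hat{\xi}}_\tau(\om)\neq x_j$ and the actual horizon is $T-\tau(\om)\neq t_j$, so the glued control terminates with the nonzero residual position $X^{\hat{\xi}}_\tau(\om)-x_j$ (and may even overrun the horizon when $t_j>T-\tau(\om)$). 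It therefore does not lie in $\dot{\X}^1(T,X_0)$, and its expected utility is not a lower bound for $V(T,X_0,R_0)$. This cannot be absorbed as an $O(\e)$ error: by the initial condition \eqref{hjbic}, $V(0,X,R)=-\infty$ for $X\neq 0$, i.e.\ the finite fuel constraint is a hard constraint and ``almost liquidating'' is infinitely penalized, not mildly penalized. Uniform continuity of $V$ lets you bin the \emph{revenue} coordinate only, because $R$ enters as a pure translation and never in the admissibility constraint; it does not let you bin the position or the time-to-go.

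The paper's proof of Theorem \ref{em} is structured precisely around this obstruction. It first takes $\xi$ simple (countably many values) and $\tau$ discrete, so that $(T-\tau(\om),X^\xi_\tau(\om))$ takes exactly the grid values $(T-t_i,x_g)$ --- no approximation at all in time and position --- and only the revenue is rounded to $r_l$, the $2\e$ loss coming from uniform continuity of $V$ and $u$; the chosen strategy is then admissible for the \emph{actual} state. The passage to general $(\xi,\tau)$ is not a refinement of a partition but a genuine limit procedure: approximate $\xi$ in $\L^p$ by simple processes, approximate $\tau$ from above by discrete stopping times $\tau_k\downarrow\tau$, and extract a weak $\J$-limit of the corresponding $\e$-maximizers using the weak sequential compactness of the sets $\overline{\cK}_{m^{\e}}$ (Proposition \ref{usc}, Lemma \ref{wcx}) together with the weak upper semicontinuity of the utility functional (Proposition \ref{wtc}, Corollary \ref{lscc}). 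This compactness-and-semicontinuity machinery is exactly what your finite-partition shortcut tries to replace, and without it (or a repaired splicing that corrects the residual position within the actual remaining horizon --- a nontrivial restructuring, not a routine verification) the construction of the continuation, and hence the ``$\geq$'' inequality, does not go through.
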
 
\begin{rem}
Note that \citet{BT11} developed  a weak formulation of the dynamic principle, which can be used to derive the viscosity property of the corresponding value function, in some optimal control problems. However, this requires the following concatenation property (\emph{Assumption A}) of the strategies: for $\xi,\eta\in\dot{\X}^1(T,X_0)$ and a stopping time $\tau\in[0,T[$, we must have that $\xi\b1_{[0,\tau]}+\eta\b1_{]\tau,T]}\in\dot{\X}^1(T,X_0)$, which is however not the case in general, and therefore is not usable in our work. In \citet{BN12}, another weak formulation of the dynamic principle with generalized state constraints is formulated. Here again, a concatenation property (\emph{Assumption B}) in the following form is required: for $\xi,\eta\in\dot{\X}^1(T,X_0)$ and a time $s\in[0,T]$, it must hold that $X^\xi_t=X^\xi_s-\int_s^t\eta_u\, du$, for $t\leq s$, which is again not the case in general, and thus cannot be directly applied here.\xqed{\diamondsuit}
\end{rem}
 The proof of Theorem \ref{bp} is split in two parts. For ease of reference, let us first make the following assumption  on $f$.
 \begin{Ass}\label{afp}
From now on, we suppose that $f$ has at most a polynomial growth of degree $p$, i.e.,  there exists $C>0$ such that $$ f(x)\leq C(1+|x|^p)\quad \text{ for all }x\in\R^d.$$
\end{Ass}
Further, in order to avoid measurability issues, we need to suppose that for $T\in\;]0,\infty[,\;(\Om, \F, (\F_t)_{t\in[0,T]}, P)$ is the canonical Wiener space. Taking this perspective, let us start with proving some measurability results. Here also, we will restrict our attention to  strategies that lie in $\dot{\X}^1_{2A_2}(T, X_0,R_0)$, as mentioned in Assumption \ref{ar}. 
\begin{Lem}\label{ce} 
 For $\om\in\Om,$ define the map $\phi_\om:\Om \rightarrow \Om$ by
\begin{equation*}
\phi_\om(\widetilde{\om})=\begin{cases}
\om(s),& \text{ for } s\in [0,\tau(\om)],\\
\om(\tau(\om))+\widetilde{\om}(s)-\widetilde{\om}(\tau(\om)),&\text{ for } s\in\;]\tau(\om), T],
\end{cases} 
\end{equation*}
where $\tau$ is as in \eqref{ebp}. Moreover, for $\xi\in\dot{\X}^1(T,X_0)$ we define 
\[
\xi^\om_t(\widetilde{\om}):=\xi_t\circ\phi_\om(\widetilde{\om}).
\]
Then, for $\P$-a.e. $\om$,
\begin{equation}
\E\left[u\big(\cR_T^\xi\big)\big|\F_\tau\right](\om)=\E\left[u\big(\cR_{\tau}^\xi+\cR_{\tau, T}^{\xi^{\om}}\big)\big|\F_{\tau}\right](\om)=\E\Big[u\big(\cR_{\tau}^{\xi}(\om)+\cR_{\tau(\om),T}^{\xi^{\om}}\big)\Big],\label{ma}
\end{equation} 
 where
$R_{t, T}^{\widetilde{\xi}}$ denotes the revenues generated by the strategy $\xi^{\om}$ during the time period $[t,T]$, i.e:
\begin{equation*}
R_{t, T}^{\widetilde{\xi}}=  \int_{t}^{T}(X^{\widetilde{\xi}}_{s})^\top\sigma\,dB_{s}+\int_{t}^{T} b\cdot X^{\widetilde{\xi}}_{s}\;ds-\int_{t}^{T}f(-\widetilde{\xi}_{s})\;ds.
\end{equation*}
\end{Lem}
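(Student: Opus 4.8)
The plan is to read \eqref{ma} as the strong Markov property of Brownian motion on the canonical Wiener space, expressed through the concatenation map $\phi_\om$, together with the adaptedness of $\xi$ and a transformation rule for the revenue functional under $\phi_\om$. The backbone is the regular conditional probability representation: for any $\F_T$-measurable functional $G$ with $G\in L^1(\P)$ one has, for $\P$-a.e. $\om$,
\[
\E\big[G\mid\F_\tau\big](\om)=\int_\Om G\big(\phi_\om(\widetilde\om)\big)\,\P(d\widetilde\om).
\]
This is precisely the statement that, conditionally on $\F_\tau$, the increments $(B_{\tau+s}-B_\tau)_{s\ge0}$ form an independent Brownian motion, and $\phi_\om$ glues the frozen past $\om|_{[0,\tau(\om)]}$ to those fresh increments. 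First I would record this identity (via the strong Markov property) and check that $G=u(\cR_T^\xi)$ is admissible: since $\xi\in\dot\X^1_{2A_2}(T,X_0)$ we have $\E[\exp(-2A_2\cR_T^\xi)]<\infty$, so $u^-(\cR_T^\xi)\in L^2$ by \eqref{ubd1}, while $u\le u_1\le 1/A_1$ is bounded above; hence $u(\cR_T^\xi)\in L^1$ and the conditional expectation is well defined.

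Next I would compute $\cR_T^\xi\circ\phi_\om$, splitting $\cR_T^\xi=\cR_\tau^\xi+\cR_{\tau,T}^\xi$ with $\cR_\tau^\xi$ being $\F_\tau$-measurable. For the past piece: because $\phi_\om(\widetilde\om)$ coincides with $\om$ on $[0,\tau(\om)]$ and $\tau$ is a stopping time on the canonical filtration, one has $\tau(\phi_\om(\widetilde\om))=\tau(\om)$, and by progressive measurability $\xi_t(\phi_\om(\widetilde\om))=\xi_t(\om)$ for $t\le\tau(\om)$; hence $\cR_\tau^\xi\circ\phi_\om(\widetilde\om)=\cR_\tau^\xi(\om)$, constant in $\widetilde\om$. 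For the future piece I must establish
\[
\cR_{\tau,T}^\xi\circ\phi_\om(\widetilde\om)=\cR_{\tau(\om),T}^{\xi^\om}(\widetilde\om),
\]
where $\xi^\om_s(\widetilde\om)=\xi_s(\phi_\om(\widetilde\om))$ by definition. The two Lebesgue integrals $\int_{\tau}^T b\cdot X_s^\xi\,ds$ and $\int_\tau^T f(-\xi_s)\,ds$ transform pathwise using $X_s^\xi\circ\phi_\om=X_s^{\xi^\om}$, with Assumption \ref{afp} supplying the integrability. Combining the pieces, the representation yields
\[
\E\big[u(\cR_T^\xi)\mid\F_\tau\big](\om)=\int_\Om u\big(\cR_\tau^\xi(\om)+\cR_{\tau(\om),T}^{\xi^\om}(\widetilde\om)\big)\,\P(d\widetilde\om),
\]
which is the outer equality of \eqref{ma}. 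The middle expression is the same quantity with conditional-expectation notation retained: substituting $\xi^\om$ for $\xi$ in the future is legitimate because the two share the same post-$\tau$ conditional law, and the passage to the ordinary expectation then amounts to freezing the $\F_\tau$-measurable data $\cR_\tau^\xi\mapsto\cR_\tau^\xi(\om)$, $\tau\mapsto\tau(\om)$, and using that the remaining randomness enters only through the $\F_\tau$-independent post-$\tau$ increments.

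The main obstacle is the transformation rule for the stochastic integral, namely
\[
\Big(\int_\tau^T (X_s^\xi)^\top\sigma\,dB_s\Big)\circ\phi_\om=\int_{\tau(\om)}^T (X_s^{\xi^\om})^\top\sigma\,dB_s\qquad\text{for }\P\text{-a.e. }\widetilde\om,\ \P\text{-a.e. }\om.
\]
This is delicate because the stochastic integral is defined only up to $\P$-null sets, whereas $\phi_\om$ is not absolutely continuous with respect to $\P$, so one cannot simply compose maps. I would prove it by the standard approximation scheme: first verify it for simple $\F$-adapted integrands, where both sides reduce to finite sums of products of integrand values and Brownian increments and the identity is transparent, since the post-$\tau$ increments of $\phi_\om(\widetilde\om)$ equal those of $\widetilde\om$; then extend to the integrand $(X^\xi_s)^\top\sigma$ by $L^2$-approximation, controlling the error through a conditional It\^o isometry so that convergence holds for $\P$-a.e.\ $\om$ simultaneously. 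The integrability furnished by $\xi\in\dot\X^1_{2A_2}(T,X_0)$ together with Assumption \ref{afp} guarantees the $L^2$ bounds needed for this limiting argument, after which passing to a subsequence delivers the almost-everywhere identity and completes the proof.
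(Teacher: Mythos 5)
Your proposal follows essentially the same route as the paper's proof: the key identity $\E[H\,|\,\F_\tau](\om)=\E[H\circ\phi_\om]$ (which the paper derives from the strong Markov property and a Dynkin $\pi$-$\lambda$ argument), combined with the pathwise decomposition $\cR_T^\xi\circ\phi_\om=\cR_\tau^\xi(\om)+\cR_{\tau(\om),T}^{\xi^\om}$, and then applied to $u(\cR_T^\xi)$ using that $u$ is bounded above (the paper translates $u$ and works with a nonnegative integrand where you check $\L^1$-integrability; both are fine). If anything, you are more careful than the paper on the one genuinely delicate point, namely the transformation of the stochastic integral under the concatenation map $\phi_\om$, which the paper asserts without comment and you justify by approximation with simple integrands.
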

To prove the preceding Lemma, we have to use the three following lemmas. The proof of the first one can be found in, e.g., \citet{YR91} (as a consequence of Levy's characterization of Brownian motion) or \citet{HK04}.
\begin{Lem}
Let $\tau$ be a bounded stopping time and $(B_t)_{t\in[0,\infty[}$
a Brownian motion. Then $\widetilde{B}_t:=B_{t+\tau}-B_{\tau}$ is a Brownian motion independent of $\F_{\tau}$.
\end{Lem}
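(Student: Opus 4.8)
The plan is to establish, in one stroke, both that $\widetilde B$ has stationary independent Gaussian increments and that these increments are independent of $\F_\tau$, by computing a conditional characteristic function with the help of the exponential (complex) martingale attached to $B$. Concretely, I would fix $\theta\in\R^m$ and introduce $M_t:=\exp\big(i\,\theta\cdot B_t+\tfrac12|\theta|^2 t\big)$, which is a complex martingale with $|M_t|=\exp(\tfrac12|\theta|^2 t)$ bounded on every bounded time interval.

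\textbf{First step.} Since $\tau$ is bounded, say $\tau\le K$, the times $\tau+s$ and $\tau+t$ (for $0\le s<t$) are bounded stopping times and $M$ is a bounded martingale on $[0,K+t]$, so the optional sampling theorem yields $\E[M_{\tau+t}\mid\F_{\tau+s}]=M_{\tau+s}$. Because $\tau$ and $B_{\tau+s}$ are $\F_{\tau+s}$-measurable, I would cancel the common factors to obtain $\E\big[\exp(i\,\theta\cdot(B_{\tau+t}-B_{\tau+s}))\mid\F_{\tau+s}\big]=\exp(-\tfrac12|\theta|^2(t-s))$. Recalling that $\widetilde B_t-\widetilde B_s=B_{\tau+t}-B_{\tau+s}$ and that $\F_\tau\subset\F_{\tau+s}$, this single identity already shows that the increment $\widetilde B_t-\widetilde B_s$ is $\mathcal N(0,(t-s)I_m)$-distributed and independent of $\F_{\tau+s}$, hence of $\F_\tau$.

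\textbf{Second step.} Iterating over a partition $0=t_0<t_1<\dots<t_n$ and using the tower property, I would derive, for every bounded $\F_\tau$-measurable $Z$ and all $\theta_1,\dots,\theta_n\in\R^m$, the factorization $\E\big[Z\exp(i\sum_{j}\theta_j\cdot(\widetilde B_{t_j}-\widetilde B_{t_{j-1}}))\big]=\E[Z]\prod_j\exp(-\tfrac12|\theta_j|^2(t_j-t_{j-1}))$. This expresses that the increments of $\widetilde B$ are jointly independent, Gaussian with the correct covariances, and jointly independent of $\F_\tau$. Combined with $\widetilde B_0=0$ and the path-continuity of $\widetilde B$ inherited from $B$, the usual characterisation of Brownian motion then gives that $\widetilde B$ is an $m$-dimensional Brownian motion independent of $\F_\tau$.

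\textbf{Where the difficulty lies.} The only genuinely delicate point is the justification of optional sampling at the random time $\tau+t$; this is exactly where the boundedness of $\tau$ is used, since it makes $\tau+t$ a bounded stopping time and $M$ uniformly integrable up to it. An alternative route, closer to the phrasing ``consequence of Lévy's characterization'', is to check directly that $\widetilde B$ is a continuous local martingale for the shifted filtration $(\F_{\tau+t})_t$ with bracket $\langle\widetilde B^k,\widetilde B^l\rangle_t=\delta_{kl}\,t$, conclude via Lévy's theorem that it is a Brownian motion, and recover the independence of $\F_\tau$ from the factorization above; a third, more elementary route approximates $\tau$ from above by stopping times taking countably many values, settles the claim on each atom by the increment-independence of $B$, and passes to the limit by path-continuity. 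I would present the exponential-martingale argument as the main line, since it delivers the Brownian property and the independence of $\F_\tau$ simultaneously.
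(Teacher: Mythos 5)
Your proof is correct, but note that the paper itself gives no proof of this lemma: it refers the reader to \citet{YR91} and \citet{HK04}, remarking only that the statement is ``a consequence of L\'evy's characterization of Brownian motion.'' Your main line is therefore a genuinely different, self-contained route. The cited route would first verify that $\widetilde B$ is a continuous local martingale with respect to the shifted filtration $(\F_{\tau+t})_{t\geq 0}$ with bracket $\langle \widetilde B^k,\widetilde B^l\rangle_t=\delta_{kl}t$ --- a verification that itself rests on optional sampling at the bounded stopping times $\tau+t$ --- and then invoke L\'evy's theorem, which delivers the Brownian property of $\widetilde B$ together with independence of its increments from $\F_{\tau+0}=\F_\tau$. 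Your exponential-martingale computation short-circuits the stochastic-calculus layer entirely: the single identity $\E\big[\exp(i\theta\cdot(\widetilde B_t-\widetilde B_s))\mid\F_{\tau+s}\big]=\exp(-\tfrac12|\theta|^2(t-s))$, iterated over a partition and combined with the tower property, yields the finite-dimensional distributions and the independence of $\F_\tau$ in one stroke, using nothing beyond Doob's optional sampling theorem for the bounded complex martingale $M$ on $[0,K+t]$ (applied to its real and imaginary parts separately, which each form a bounded real martingale). What the paper's route buys is brevity given L\'evy's theorem as a black box; what yours buys is elementariness and the fact that independence of $\F_\tau$ comes out of the same computation rather than as a corollary of the filtration formulation. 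Two small points to tighten in your write-up: (i) to deduce independence from the deterministic conditional characteristic function you need the identity to hold almost surely simultaneously for all $\theta\in\R^m$; since both sides are continuous in $\theta$, establish it first for rational $\theta$ on a common null set and extend by continuity; (ii) the cancellation of $M_{\tau+s}$ uses that $\tau$ and $B_{\tau+s}$ are $\F_{\tau+s}$-measurable and that $M_{\tau+s}\neq 0$, which is worth saying explicitly. Your third, discretization route (approximating $\tau$ from above by stopping times with countably many values) is in fact the classical textbook proof of the strong Markov property, so all three paths you sketch are viable.
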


The next lemma uses the Dynkin's $\pi$-$\lambda$ theorem. See, e.g.,  \citet{W91} for more details.
\begin{Lem}
Let $F:\R^2\longrightarrow [0,\infty[$ be a measurable function, $X$ independent of a sigma-algebra $\A$ and $Y\;\A$-measurable. Then,
\begin{equation}
\E[F(X,Y)\big|\A](\om)=\E[F(X,Y(\om))]\q \label{cee}
\end{equation}
\end{Lem}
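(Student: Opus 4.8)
The plan is to establish \eqref{cee} first for indicators of measurable rectangles, then to bootstrap to all Borel indicators on $\R^2$ by Dynkin's $\pi$--$\lambda$ theorem, and finally to pass to an arbitrary nonnegative measurable $F$ through linearity and monotone convergence. Throughout I read \eqref{cee} as an a.s.\ identity and introduce the frozen function $g(y):=\E[F(X,y)]$, so that the assertion reads $\E[F(X,Y)\mid\A]=g(Y)$ $\P$-a.s. Since $Y$ is $\A$-measurable we have $\sigma(Y)\subseteq\A$, and the hypothesis that $X$ is independent of $\A$ then forces $X$ and $Y$ to be independent; hence the joint law of $(X,Y)$ is the product of the two marginals, a fact I will use to guarantee the measurability of $g$.

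For the base case I take $F=\b1_{A_1\times A_2}$ with $A_1,A_2\in\B(\R)$ and factor $F(X,Y)=\b1_{A_1}(X)\,\b1_{A_2}(Y)$. Because $\b1_{A_2}(Y)$ is $\A$-measurable it pulls out of the conditional expectation, and because $X$ is independent of $\A$ one has $\E[\b1_{A_1}(X)\mid\A]=\P(X\in A_1)$; therefore $\E[F(X,Y)\mid\A]=\b1_{A_2}(Y)\,\P(X\in A_1)$, which is exactly $g(Y)$ since $g(y)=\b1_{A_2}(y)\,\P(X\in A_1)$. This settles \eqref{cee} on the $\pi$-system of rectangles, which generates $\B(\R^2)$.

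Next I let $\D$ denote the class of all $C\in\B(\R^2)$ for which \eqref{cee} holds with $F=\b1_C$, and I argue that $\D$ is a $\lambda$-system: it contains $\R^2$, and its stability under proper differences and under increasing countable unions follows from the linearity of the conditional expectation together with the monotone convergence theorem for conditional expectations (and the elementary counterparts of these statements for $g$). Dynkin's $\pi$--$\lambda$ theorem then yields $\D=\B(\R^2)$, so \eqref{cee} holds for every Borel indicator; by linearity it extends to nonnegative simple functions, and approximating a general $F\geq 0$ from below by simple functions and invoking monotone convergence delivers the full statement.

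I expect the rectangle base case to be immediate and the real care to be required in the bookkeeping of the $\P$-null sets and in the measurability of $g$: for each fixed $C$ the identity holds only off an exceptional null set that a priori depends on $C$, so to run the $\lambda$-system argument cleanly one should fix once and for all a single version of $\E[\,\cdot\mid\A]$ and check that the Dynkin-system axioms are stable under a.s.\ equality. The most economical way to sidestep this is to verify directly that $g(Y)$ is a version of $\E[F(X,Y)\mid\A]$: the measurability of $y\mapsto g(y)$ follows from Tonelli applied to the product law of $(X,Y)$, so that $g(Y)$ is $\A$-measurable, while the defining relation $\E[\b1_G\,F(X,Y)]=\E[\b1_G\,g(Y)]$ for all $G\in\A$ is precisely Fubini--Tonelli once the independence $X\perp\A$ is used to factor the integral. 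This reduces the whole lemma to a single application of the product-measure theorem, and it is here that all the subtlety lies.
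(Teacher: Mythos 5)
Your proposal is correct and follows essentially the same route as the paper: the rectangle base case (pulling the $\A$-measurable factor out of the conditional expectation and using independence of $X$), the Dynkin $\pi$--$\lambda$ argument over the class of Borel sets whose indicators satisfy \eqref{cee}, and the extension to general nonnegative $F$ by linearity and monotone convergence. Your closing remark that one could instead verify directly, via Fubini--Tonelli and the factorization of the joint law, that $g(Y)$ with $g(y)=\E[F(X,y)]$ is a version of $\E[F(X,Y)\mid\A]$ is a sound and more economical alternative, but your main argument coincides with the paper's proof.
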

\begin{proof}
Let us first consider $A=(A_1\times A_2), A_i\in\B(\R), i=1,2,$ and set 
\[
F(x,y):=\b1_{A_1\times A_2}(x,y)=\b1_{A_1}(x)\b1_{A_2}(y).
\]
Using the fact that $Y$ is $\A$-measurable 
as well as the independence of $X$ 
we write 
\begin{align*}
\E[F(X,Y)](\om)&=\E[\b1_{A_1}(X)\b1_{A_2}(Y)\big|\A](\om)\\
&=\b1_{A_2}(Y(\om))\E[\b1_{A_1}(X)\big|\A](\om)\\
&=\b1_{A_2}(Y(\om))\E[\b1_{A_1}(X)]\\
&=\E[\b1_{A_1}(X)\b1_{A_2}Y(\om)].
\end{align*}
 Consider now 
\[
\D:=\{ A\in\B(\R^2)\,\big|\eqref{cee}\text{ holdsÊfor } F=\b1_A\}.
\] 
Then $\D$ is a Dynkin system  containing $\C:=\{A_1\times A_2\big|A_i\in\B(\R)\}$. Due to the stability of the set $\C$ under intersection, it follows that $\D\supset \sigma(\C)=\B(\R^2)$. Using the monotone convergence theorem,
\eqref{cee} follows for an arbitrary $F$.
\end{proof}
The next lemma is a consequence of both preceding results.
\begin{Lem}
Let $H:\Om\longrightarrow [0,\infty[$ be a measurable function, $\tau$  a stopping time with values in $[0,T[$, and $\phi_w$  defined as in Lemma \ref{ce} for $\om\in\Om$. Then we have
\begin{equation*}
\E[H\big|\F_\tau](\om)=\E[H\circ\phi_\om]\q 
\end{equation*}
\end{Lem}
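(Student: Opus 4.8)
The plan is to decompose the trajectory into its history up to $\tau$ and its increment after $\tau$, and then to feed this decomposition into the two preceding lemmas. First I would observe that on the canonical Wiener space the whole path $\om$ is recovered measurably from the pair consisting of the stopped path $\om^\tau:=(\om(s\wedge\tau))_{s}$, which is $\F_\tau$-measurable, together with the post-$\tau$ increment
\[
\widetilde B(\om):=\big(B_{\tau+s}(\om)-B_\tau(\om)\big)_{s\geq 0}.
\]
Indeed, $\om(s)=\om^\tau(s)$ for $s\leq\tau$ and $\om(s)=\om^\tau(\tau)+\widetilde B(\om)(s-\tau)$ for $s>\tau$. Consequently there is a measurable reconstruction map $F$ with $H=F(\widetilde B,Y)$, where $Y:=(\om^\tau,\tau)$ collects the $\F_\tau$-measurable data (recall that $\tau$ itself is $\F_\tau$-measurable).

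Next I would invoke the two helper lemmas. By the first lemma, $\widetilde B$ is a Brownian motion that is independent of $\F_\tau$; hence its law under $\P$ is the Wiener measure $\mu$, and $\widetilde B$ plays the role of the variable independent of $\A=\F_\tau$, while $Y$ is $\F_\tau$-measurable. The second lemma, whose Dynkin-class proof carries over verbatim from $\R^2$ to the Polish path spaces at hand, then yields for $\P$-a.e. $\om$
\[
\E[H\mid\F_\tau](\om)=\E[F(\widetilde B,Y)\mid\F_\tau](\om)=\E\big[F(\widetilde B,Y(\om))\big].
\]
Since $\widetilde B$ has law $\mu$, the right-hand side equals $\int F(w,Y(\om))\,\mu(dw)$.

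Finally I would identify this integral with $\E[H\circ\phi_\om]$. Unwinding the definition of $\phi_\om$ in Lemma \ref{ce}, the path $\phi_\om(\widetilde\om)$ has stopped path $Y(\om)$ and post-$\tau(\om)$ increment equal to $\widetilde\om(\tau(\om)+\cdot)-\widetilde\om(\tau(\om))$, so that $H\circ\phi_\om(\widetilde\om)=F\big(\,\widetilde\om(\tau(\om)+\cdot)-\widetilde\om(\tau(\om)),\,Y(\om)\,\big)$. For the now \emph{deterministic} time $\tau(\om)$, the increment process is again a Brownian motion by stationarity of increments, hence also has law $\mu$, and therefore $\E[H\circ\phi_\om]=\int F(w,Y(\om))\,\mu(dw)$ as well. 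Comparing the two expressions gives the assertion.

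The main obstacle I anticipate is the bookkeeping in the decomposition step: one must exhibit the reconstruction map $F$ explicitly enough to see that $H=F(\widetilde B,Y)$ with $Y$ genuinely $\F_\tau$-measurable, and confirm joint measurability of $(\om,\widetilde\om)\mapsto\phi_\om(\widetilde\om)$ so that $\om\mapsto\E[H\circ\phi_\om]$ is well defined. In particular the two places where the Wiener law of an increment is invoked---after the random time $\tau$ (via the first lemma) and after the frozen time $\tau(\om)$ (via stationarity)---must be matched carefully so that both expectations reduce to the same integral against $\mu$.
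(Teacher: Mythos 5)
Your proposal is correct and takes exactly the route the paper intends: the paper omits an explicit proof of this lemma, remarking only that it ``is a consequence of both preceding results,'' and your argument is precisely that derivation --- decompose the canonical path into its $\F_\tau$-measurable stopped part and its post-$\tau$ increment, use the first lemma to see the increment is Brownian and independent of $\F_\tau$, apply the freezing lemma (second lemma, extended from $\R^2$ to the path space by the same Dynkin-class argument), and identify both sides with the same integral of the reconstruction map against Wiener measure. The measurability points you flag (that $\tau$ and the stopped path are recovered on $\{\cdot=\om\text{ up to }\tau(\om)\}$, i.e.\ a Galmarino-type identification, and joint measurability of $(\om,\widetilde\om)\mapsto\phi_\om(\widetilde\om)$) are genuine but standard on Wiener space, and your treatment of them is adequate.
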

We can now prove Lemma \ref{ce}
\begin{proof}[Proof of Lemma \ref{ce}]
First, note that
\begin{align*}
\cR^\xi_T\circ\phi_\om(\widetilde{\om})&=\cR^\xi_\tau\circ\phi_\om(\widetilde{\om})+\cR^\xi_{\tau,T}\circ\phi_\om(\widetilde{\om})\\
&=\cR^\xi_\tau(\om)+\cR^{\xi^\om}_{\tau(\om),T}(\widetilde{\om})
\end{align*}
for $\P$ -a.e. $\widetilde{\om}\in\Om$. Due to the fact that $u$ is bounded from above, we can apply the preceding Lemma to $H:= -u(\cR^\xi_T)$ (by translating $u$ vertically if necessary),  and we finally get (when dropping the minus sign in front of $u$)
\begin{align*}
 \E\big[u\big(\cR_T^\xi\big)\big|\F_\tau\big](\om)&= \E\big[u\big(\cR_T^\xi\circ\phi_\om\big)\big]\\
&= \E\big[u\big(\cR_{\tau}^{\xi}(\om)+\cR_{\tau(\om),T}^{\xi^{\om}}\big)\big],
\end{align*}
which proves the lemma.
\end{proof}
The following lemma yields an upper bound for an exponential value function at some stopping time with values in $[0,T[$. It uses the notations of Lemma \ref{ce}. For $d=1$, an analogous result  can  be found in \citet{SSB08}.
\begin{Lem}\label{lice}
Let $\overline{V}(T,X_0,R_0)=\inf_{\xi\in\dot{\X}_{det}(T,X_0)} \E\big[\exp(-A \cR_T^\xi)\big]$ and $\tau$ be a stopping time with values in $[0,T[$. We then have
\begin{equation}
\overline{V}(T-\tau, X_\tau^\zeta, \cR_\tau^\zeta)\leq \E\big[\exp(-A\cR_T^\zeta)|\F_\tau\big]\label{mp}\q
\end{equation}
for every $\zeta \in\dot{\X}^1(T,X_0).$
\end{Lem}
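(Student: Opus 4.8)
The plan is to combine the measurability decomposition underlying Lemma \ref{ce} with the result of \citet{SST10} that the CARA liquidation problem is solved by a \emph{deterministic} strategy. The latter guarantees that the infimum over deterministic strategies in the definition of $\overline{V}$ equals the infimum over all admissible strategies, which is exactly what lets me compare $\overline{V}$ with the value of a \emph{random} strategy. Concretely, I would apply the measurability identity $\E[H\,|\,\F_\tau](\om)=\E[H\circ\phi_\om]$, valid for any nonnegative measurable $H$ (the unlabeled lemma established just before Lemma \ref{ce}), to $H:=\exp(-A\cR_T^\zeta)$; no vertical translation is needed here since $H\geq0$ already.

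First I would compute the conditional expectation on the right-hand side of \eqref{mp}. Exactly as in the proof of Lemma \ref{ce}, for $\P$-a.e.\ $\om$ one has the path decomposition $\cR_T^\zeta\circ\phi_\om = \cR_\tau^\zeta(\om)+\cR_{\tau(\om),T}^{\zeta^\om}$. Applying the measurability identity to $H$ and pulling out the $\F_\tau$-measurable factor $\exp(-A\cR_\tau^\zeta(\om))$ (a constant once $\om$ is fixed), I obtain for $\P$-a.e.\ $\om$,
\[
\E\big[\exp(-A\cR_T^\zeta)\big|\F_\tau\big](\om)=\exp\big(-A\cR_\tau^\zeta(\om)\big)\,\E\big[\exp\big(-A\cR_{\tau(\om),T}^{\zeta^\om}\big)\big].
\]

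Next I would identify $\zeta^\om$, restricted to $[\tau(\om),T]$ and shifted by $-\tau(\om)$, as an admissible strategy liquidating $X_\tau^\zeta(\om)$. By construction $\zeta^\om$ coincides with the deterministic path $\zeta_\cdot(\om)$ on $[0,\tau(\om)]$, so that $X_{\tau(\om)}^{\zeta^\om}=X_\tau^\zeta(\om)$, while $X_T^{\zeta^\om}=0$; moreover $\widetilde{B}_t:=B_{t+\tau(\om)}-B_{\tau(\om)}$ is a Brownian motion independent of $\F_\tau$, so after the shift $t\mapsto t-\tau(\om)$ the process lies in $\dot{\X}^1\big(T-\tau(\om),X_\tau^\zeta(\om)\big)$. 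Since for the CARA problem the deterministic infimum agrees with the infimum over all of $\dot{\X}^1$, the value of the particular (random) strategy $\zeta^\om$ dominates $\overline{V}$:
\[
\overline{V}\big(T-\tau(\om),X_\tau^\zeta(\om),0\big)\leq \E\big[\exp\big(-A\cR_{\tau(\om),T}^{\zeta^\om}\big)\big].
\]
Multiplying by $\exp(-A\cR_\tau^\zeta(\om))$ and invoking the scaling $\overline{V}(S,x,r)=\exp(-Ar)\,\overline{V}(S,x,0)$ then yields \eqref{mp}.

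The main obstacle is the measurability and admissibility bookkeeping for $\zeta^\om$: one must verify that it is progressively measurable, meets the integrability requirement \eqref{ias}, and genuinely liquidates $X_\tau^\zeta(\om)$ with respect to the shifted Brownian filtration — this is precisely where the canonical Wiener space assumption and the independence of Brownian increments beyond $\tau$ are needed. The second, more conceptual, point is the replacement of the deterministic infimum in the definition of $\overline{V}$ by the infimum over all admissible strategies; this comparison is legitimate only because \citet{SST10} provides a \emph{deterministic} CARA optimizer, and it is the crux that allows a random strategy $\zeta^\om$ to be bounded from below by $\overline{V}$.
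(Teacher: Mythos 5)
Your proof is correct, and it takes a genuinely different route from the paper's. Both arguments start from the measurability identity \eqref{ma}, rewriting $\E[\exp(-A\cR_T^\zeta)\mid\F_\tau](\om)$ as $\exp(-A\cR_\tau^\zeta(\om))$ times a term involving the strategy after $\tau$; but where you pass to the shifted strategy $\zeta^\om$, verify its admissibility for the subproblem with data $(T-\tau(\om),X_\tau^\zeta(\om))$, and then invoke pointwise the theorem of \citet{SST10} that the CARA infimum over \emph{all} adapted strategies coincides with the deterministic infimum defining $\overline{V}$ (together with the scaling $\overline{V}(S,x,r)=e^{-Ar}\,\overline{V}(S,x,0)$), the paper never uses deterministic optimality at all. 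It quotes from \citet{SST10} only the elementary representation $\overline{V}\big(T-\tau(\om),X_\tau^\zeta(\om),\cR_\tau^\zeta(\om)\big)=\exp\big(-A\cR_\tau^\zeta(\om)+A\inf_{\widetilde{\zeta}}\int_\tau^T\cL(X_t^{\widetilde{\zeta}},\widetilde{\zeta}_t)\,dt\big)$, the infimum running over deterministic strategies; it factors $\exp(-A\cR_{\tau,T}^\zeta)=Y^\zeta\exp\big(A\int_\tau^T\cL(X_t^\zeta,\zeta_t)\,dt\big)$, where $Y^\zeta$ is a Dol\'eans-Dade exponential; it bounds the Lagrangian integral pathwise from below by the deterministic infimum, since each realized trajectory of $\zeta$ is itself an admissible deterministic trajectory; and it concludes by showing $\E[Y^\zeta\mid\F_\tau]=1$ via the martingale property of the stochastic exponential (Girsanov). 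The trade-off: the paper's argument is essentially self-contained --- it re-proves, in conditional form, that randomization cannot help a CARA investor --- and it works with conditional expectations of $\zeta$ itself, so it never needs $\zeta^\om$ to be an admissible strategy for a shifted problem; your argument is shorter and dispenses with the change-of-measure step entirely, but its crux is outsourced to the full strength of the deterministic-optimality theorem of \citet{SST10}, applied $\om$-by-$\om$, and it therefore genuinely depends on the admissibility and adaptedness bookkeeping for $\zeta^\om$ that you flag --- bookkeeping which does go through on the canonical Wiener space, by the same disintegration arguments the paper employs around Lemma \ref{ce} and Proposition \ref{esp}.
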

\begin{proof}
Let $\tau\leq T$ be a stopping time, $\zeta \in\dot{\X}^1(T,X_0)$, and denote by
\begin{equation}
\cR_{s, T}^\zeta=    \int_s^T ( X^\zeta_t)^\top \sigma\;dB_t +\int_s^T b\cdot X^\zeta_t \;dt-\int_s^T f(-\zeta_t)\;dt\label{rsT}
\end{equation}
the revenues generated by $\zeta$ over the time interval $[s,T]$.
 In \cite{SST10}, there is another convenient formulation of $\overline{V}$: for every $\om\in\Om$,
\[
\overline{V}(T-\tau(\om),X_{\tau}^\zeta (\om),\cR_{\tau}^\zeta (\om))=\exp\Big(-A \cR^\zeta_\tau (\om)+ A \inf_{\widetilde{\zeta}\in\dot{\X}_{det}(T-\tau(\om), X_\tau^\zeta(\om))}\int_\tau^T\cL(X_t^{\widetilde{\zeta}},\widetilde{\zeta}_t)\,dt\Big).
\]
Let us next set
\[
Y^\zeta=e^{-A \int_\tau^T  ( X^\zeta_t)^\top \sigma\;dB_t-\frac12\int_\tau^T A^2(X^\zeta_t)^\top \Sigma X_t^\zeta\,dt}.
\]
We then have for every $\zeta\in\dot{\X}^1(T,X_0)$ and almost every $\omega\in\Omega$: 
\begin{align*}
\lefteqn{\E\Big[\exp(-A \cR_{\tau,T}^\zeta)|\F_\tau\Big](\omega)} & \\
&= \E\bigg[Y^\zeta\exp\bigg(A\int_\tau^T \cL(X_t^\zeta,\zeta_t)\,dt\bigg)\bigg|\F_\tau\bigg](\om)\\
&\geq \E\bigg[Y^\zeta\exp\bigg(A\inf_{\widetilde{\zeta}\in\dot{\X}_{det}(T-\tau(\om), X_\tau^\zeta(\om))} \int_\tau^T \cL(X_t^{\widetilde{\zeta}},\widetilde{\zeta}_t)\,dt\bigg)\bigg|\F_\tau\bigg](\om)\\
&= \E\Big[Y^\zeta e^{A \cR^\zeta_\tau (\om)}\overline{V}(T-\tau(\om),X_{\tau}^\zeta (\om),\cR_{\tau}^\zeta (\om) |\F_\tau\Big](\om)\\
&=\exp\big(A \cR^\zeta_\tau (\om)\big)\overline{V}(T-\tau(\om),X_{\tau}^\zeta (\om),\cR_{\tau}^\zeta (\om))\E\big[Y^\zeta  |\F_\tau\big](\om).
\end{align*}
Here, we have used \eqref{rsT} for the first equality and  the monotonicity property of the conditional expectation for the inequality.

It remains to show  that
\begin{equation}
\E\big[Y^\zeta  |\F_\tau\big]=1\label{y|}\q.
\end{equation}
Indeed, this will prove the result, because we also have that
\begin{align*}
\E\big[\exp(-A\cR_T^\zeta)|\F_\tau\big](\omega)&= \E\big[\exp\big(-A\big(\cR_{\tau, T}^{\zeta}+ \cR_\tau^\zeta(\omega)\big)\big)\big|\F_\tau\big](\om)\\
&=   \exp\big(-A \cR_\tau^\zeta(\omega)\big) 
\E\big[\exp\big(-A \cR_{\tau, T}^{\zeta}(\om)\big)\big|\F_\tau\big](\om),
\end{align*}
by using  \eqref{ma}.
To prove \eqref{y|},  let us define the following process 
\[
Z^{\zeta}_t=e^{-A \int_0^t  ( X^\zeta_u)^\top \sigma\;dB_u-\frac12\int_0^t A^2(X^\zeta_u)^\top \Sigma X_u^\zeta\,du},
\]
which is a true martingale, due to Girsanov's theorem ($X^\zeta$ fulfills \eqref{ias}, due to the assumption on $\zeta$). Therefore, we have
\begin{align*}
\E\big[Z_T^\zeta|\F_\tau\big]&=\E\big[Y^\zeta Z_\tau^\zeta|\F_\tau\big]  \\
&=Z_\tau^\zeta \E\big[Y^\zeta|\F_\tau\big]\\
&=Z_\tau^\zeta,
\end{align*}
which proves \eqref{y|} and hence also our lemma.
\end{proof}
We wish now to prove the following fundamental proposition:
\begin{Prop}\label{eiv}
Let $ \xi\in\dot{\X}^1_{2A_2}(T,X_0)$ and $\tau$ be a stopping time with values in $[0,T[$. Then we have
\begin{equation}
V(T,X_0,R_0)\geq \E\big[V\big(T-\tau, X^\xi_\tau, \cR^\xi_\tau\big)\big]. \label{eive}
\end{equation}
\end{Prop}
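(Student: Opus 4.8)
The plan is to establish this inequality --- which is the nontrivial ``$\geq$'' half of the dynamic programming principle read off against a fixed strategy --- by concatenating $\xi$ on $[0,\tau]$ with a \emph{near-optimal} continuation on $]\tau,T]$ and then exploiting Lemma \ref{ce}. The subtlety is that Lemma \ref{ce} by itself yields only $\E[u(\cR^\xi_T)]\leq\E[V(T-\tau,X^\xi_\tau,\cR^\xi_\tau)]$, i.e. the easy direction; to bound $\E[V(T-\tau,X^\xi_\tau,\cR^\xi_\tau)]$ \emph{from above} by $V(T,X_0,R_0)$ I must produce an honest admissible strategy whose expected utility is within $\e$ of the right-hand side. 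The relevant splice here is the physically consistent one, which follows $\xi$ up to $\tau$ and then liquidates the residual position $X^\xi_\tau$ over $]\tau,T]$ --- not the naive superposition of two full strategies flagged as inadmissible in the Remark above.

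First I would fix $\e>0$ and build $\e$-maximizers depending measurably on the continuation data $(T-\tau,X^\xi_\tau,\cR^\xi_\tau)$. Since $\xi\in\dot{\X}^1(T,X_0)$, the position $X^\xi_\tau$ is bounded (by $\sup_t|X^\xi_t|\in\K$) and $T-\tau$ ranges in $[0,T]$; the revenue coordinate $\cR^\xi_\tau$ is unbounded, but its lower tail is controlled by the exponential-moment bound built into $\dot{\X}^1_{2A_2}(T,X_0)$ together with Lemma \ref{lice}. Exploiting the continuity of $V$ (Theorem \ref{cv}), I would partition the essentially bounded range of the triple into countably many measurable cells on which $V$ oscillates by at most $\e$, pick a representative point and a genuinely $\e$-optimal strategy in each cell, and then adapt that strategy to the exact starting position and horizon --- forcing it to start at $X^\xi_\tau$ and to empty the portfolio at the correct time --- using the convex-combination-with-a-CARA-optimal-strategy device already employed in Proposition \ref{lsc}. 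Continuity of $\xi\mapsto\E[u(\cR^\xi_T)]$ in the data then makes the resulting measurable family $\om\mapsto\zeta^\om$ pointwise $2\e$-optimal, i.e. $\E[u(\cR^\xi_\tau(\om)+\cR^{\zeta^\om}_{\tau(\om),T})]\geq V(T-\tau(\om),X^\xi_\tau(\om),\cR^\xi_\tau(\om))-2\e$.

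Next I would set $\gamma$ equal to $\xi$ on $[0,\tau]$ and to $\zeta^\om$ on $]\tau,T]$ and check that $\gamma\in\dot{\X}^1(T,X_0)$: absolute continuity and $X^\gamma_T=0$ are built into the construction, the integrability requirement \eqref{ias} follows from Assumption \ref{afp} and the boundedness of the continuation data, and the exponential-moment control needed for the tower property comes from convexity of $\zeta\mapsto\E[\exp(-2A_2\cR^\zeta_T)]$ and the CARA bound of Lemma \ref{lice}. Since $\gamma=\xi$ up to $\tau$ we have $\cR^\gamma_\tau=\cR^\xi_\tau$ and $X^\gamma_\tau=X^\xi_\tau$, so the definition of $V$, the tower property, and Lemma \ref{ce} applied to $\gamma$ give
\[
V(T,X_0,R_0)\geq\E\big[u(\cR^\gamma_T)\big]=\E\big[\,\E[u(\cR^\gamma_T)\mid\F_\tau]\,\big]\geq\E\big[V(T-\tau,X^\xi_\tau,\cR^\xi_\tau)\big]-2\e,
\]
and letting $\e\downarrow0$ finishes the argument.

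The hard part will be the second step: producing the $\e$-maximizers as a genuinely measurable (piecewise-defined) family \emph{without} invoking a measurable-selection theorem, while simultaneously keeping each adapted continuation admissible and inside the exponential-moment class. The two delicate points are the unboundedness of the revenue coordinate $\cR^\xi_\tau$ (handled by an exponential-moment truncation justified by $\dot{\X}^1_{2A_2}$ and Lemma \ref{lice}) and the need to modify every representative strategy so that it starts at the random position $X^\xi_\tau$ and still liquidates at the prescribed horizon --- exactly where continuity of $V$ in all three arguments, rather than mere existence of optimizers, becomes indispensable.
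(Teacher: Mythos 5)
Your overall architecture coincides with the paper's: truncate the continuation data to a bounded region using the exponential-moment class and Lemma \ref{lice} (this is the paper's Lemma \ref{eN}), construct a measurable family of $\e$-maximizers on that region by a grid argument that avoids measurable selection (the paper's Theorem \ref{em}), then concatenate with $\xi$ and conclude via Lemma \ref{ce}, the tower property and the definition of $V$ --- your final chain of inequalities is exactly the paper's. The gaps are concentrated in the middle step. First, your reduction to a bounded region starts from a false premise: for $\xi\in\dot{\X}^1_{2A_2}(T,X_0)$ the position $X^\xi_\tau$ is \emph{not} essentially bounded. The class $\dot{\X}^1$ was introduced precisely by dropping the requirement $\sup_{t}|X^\xi_t|\in\K(\P)$ that defines $\X(T,X_0)$; only \eqref{ias} and the exponential-moment bound survive. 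So there is no ``essentially bounded range of the triple'' to partition into cells, and the truncation must cover the position coordinate as well as the revenues --- which is exactly what the paper's Lemma \ref{eN} does with the event $\big\{|X^\xi_\tau|\vee|\cR^\xi_\tau|>N\big\}$, using $|V|\leq|V_2|+1/A_1$ and $\E\big[|V_2(T-\tau,X^\xi_\tau,\cR^\xi_\tau)|\big]\leq\E\big[\exp(-A_2\cR^\xi_T)\big]<\infty$. This is repairable, but as stated your cell decomposition is not defined on a set of full measure.

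Second, and more seriously, your claim that ``continuity of $\xi\mapsto\E[u(\cR^\xi_T)]$ in the data'' makes the adapted family pointwise $2\e$-optimal is unsupported: by Proposition \ref{wtc} that map is only upper semi-continuous (with respect to the weak topology), and full continuity is available nowhere in the paper. What your construction actually needs is a \emph{quantitative, uniform-over-cells} estimate of the loss incurred when a representative $\e$-optimal strategy is forced to start at the off-grid position $X^\xi_\tau(\om)$ and to fit inside the horizon $T-\tau(\om)$: concavity of $\xi\mapsto\E[u(\cR^\xi_T)]$ bounds the mixing loss by the cell diameter times a bound on $|V|+|V_2|$ over the truncated region; the revenue offset requires the $\E[u'(\cdot)]$-control of Proposition \ref{u'} (this is where the $2A_2$-moment class is indispensable); and the time mismatch forces you to round horizons \emph{downward} and invoke uniform continuity of $V$ on the compact truncated region, since accelerating a strategy into a shorter horizon is exactly the manoeuvre the paper rules out. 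The convex-combination device of Proposition \ref{lsc} that you cite is a sequential, asymptotic argument (a null sequence $\lambda_n$) and does not by itself yield such a uniform bound. The paper avoids this difficulty altogether by discretizing $\xi$ and $\tau$ themselves (first step of Theorem \ref{em}), so that $X^\xi_\tau$ lands \emph{exactly} on countably many grid points and the selected optimal strategies need no adaptation at all, and then recovers general $\xi$ and $\tau$ by $\L^p$-approximation and weak compactness (second through last steps of that proof). Your route could likely be completed with the quantitative estimates sketched above --- every ingredient exists in the paper --- but as written the key step is asserted rather than proved.
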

This proposition will follow from the subsequent lemma and the theorem on the existence of  $\e$-maximizers on a bounded region. The latter one will be proved without the use of a measurable selection argument, by simply using the continuity of the value function and the existence of an optimal strategy for the maximization problem \eqref{omp}. 
The next lemma allows us to restrict our problem to a region where the parameters $T, X_0$ and $R_0$ are bounded. Indeed, outside this region (with the bound of the parameters having to be taken large enough), the following result proves that the right-hand side term of $\eqref{eive}$ can be chosen
smaller than $\e$.
\begin{Lem}\label{eN}
Let $\xi\in\dot{\X}^1_{2A_2}(T,X_0)$. Under the assumptions and notations of Proposition \ref{eiv}, there exists  $N=N_\e\in\N$ such that
\begin{equation}
\E\Big[\big|V(T-\tau,X^\xi_\tau,\cR^\xi_\tau)\big|\b1_{\big\{\left|X^\xi_\tau\right|\vee\left|\cR^\xi_\tau\right|> N\big\}}\Big]\leq\e.\label{en}
\end{equation}
\end{Lem}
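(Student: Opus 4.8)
The plan is to dominate $\big|V(T-\tau,X^\xi_\tau,\cR^\xi_\tau)\big|$ by a single $\P$-integrable random variable that does \emph{not} depend on $N$, and then to conclude from the absolute continuity of the integral. The starting point is the two-sided bound \eqref{vfs}, read as an inequality of functions evaluated at the random triple $(T-\tau,X^\xi_\tau,\cR^\xi_\tau)$: it gives $V_2\le V\le V_1$, where $V_1,V_2$ are the CARA value functions attached to $u_1,u_2$ in \eqref{ubd1}. For the positive part this is immediate, since $u_1(x)=\tfrac1{A_1}-\exp(-A_1x)\le\tfrac1{A_1}$ forces $V_1\le\tfrac1{A_1}$ everywhere, whence $V\le\tfrac1{A_1}$ and $V^+\le\tfrac1{A_1}$ pointwise.

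The delicate part is the negative part, and here I would route the estimate through Lemma \ref{lice}. Writing $\overline V$ for the functional of that lemma taken with $A=A_2$, one has $V_2=-\overline V$, so $V^-\le\overline V(T-\tau,X^\xi_\tau,\cR^\xi_\tau)$; applying Lemma \ref{lice} with $\zeta=\xi$ (legitimate, since $\dot{\X}^1_{2A_2}(T,X_0)\subset\dot{\X}^1(T,X_0)$) then yields
$$
V^-\big(T-\tau,X^\xi_\tau,\cR^\xi_\tau\big)\le\overline V\big(T-\tau,X^\xi_\tau,\cR^\xi_\tau\big)\le\E\big[\exp(-A_2\cR^\xi_T)\,\big|\,\F_\tau\big].
$$
Combining the two parts produces the dominating variable
$$
\big|V(T-\tau,X^\xi_\tau,\cR^\xi_\tau)\big|\le\frac1{A_1}+\E\big[\exp(-A_2\cR^\xi_T)\,\big|\,\F_\tau\big]=:G,
$$
and $G$ is integrable because, by the tower property, $\E[G]=\tfrac1{A_1}+\E[\exp(-A_2\cR^\xi_T)]<\infty$: indeed $\xi\in\dot{\X}^1_{2A_2}(T,X_0)$ forces $\E[\exp(-2A_2\cR^\xi_T)]<\infty$, hence $\E[\exp(-A_2\cR^\xi_T)]<\infty$ by Remark \ref{??}.

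It then remains to control the tail. Since for $\P$-a.e.\ $\om$ the path $t\mapsto X^\xi_t(\om)$ is continuous on the compact $[0,T]$, the value $X^\xi_\tau$ is finite $\P$-a.s., and $\cR^\xi_\tau$ is likewise finite $\P$-a.s.; consequently the events $\{|X^\xi_\tau|\vee|\cR^\xi_\tau|>N\}$ decrease to a $\P$-null set and $\P(|X^\xi_\tau|\vee|\cR^\xi_\tau|>N)\to0$ as $N\to\infty$. By the absolute continuity of the integral of the fixed integrable function $G$, for the prescribed $\e$ there is $\delta>0$ such that $\E[G\,\b1_A]\le\e$ whenever $\P(A)\le\delta$; choosing $N=N_\e$ so large that $\P(|X^\xi_\tau|\vee|\cR^\xi_\tau|>N)\le\delta$ gives
$$
\E\Big[\big|V(T-\tau,X^\xi_\tau,\cR^\xi_\tau)\big|\,\b1_{\{|X^\xi_\tau|\vee|\cR^\xi_\tau|>N\}}\Big]\le\E\big[G\,\b1_{\{|X^\xi_\tau|\vee|\cR^\xi_\tau|>N\}}\big]\le\e,
$$
which is exactly \eqref{en}.

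The main obstacle is precisely the construction of the majorant in the negative part. A naive pointwise bound via the explicit CARA formula would introduce the deterministic liquidation cost $\inf_{\dot{\X}_{det}}\int\cL\,dt$ evaluated at $(T-\tau,X^\xi_\tau)$, which blows up both as $|X^\xi_\tau|\to\infty$ and as $\tau\uparrow T$, and is therefore useless for producing an $N$-independent integrable bound. The point of passing through Lemma \ref{lice} is that this cost term is absorbed into the conditional expectation $\E[\exp(-A_2\cR^\xi_T)\mid\F_\tau]$, whose unconditional expectation is finite thanks to the $2A_2$-integrability built into Assumption \ref{ar}; once the right dominating function is in hand, the remainder is a routine absolute-continuity argument.
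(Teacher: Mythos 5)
Your proof is correct and follows essentially the same route as the paper's: both bound $|V|$ at the stopping time by $1/A_1+|V_2(T-\tau,X^\xi_\tau,\cR^\xi_\tau)|$ via \eqref{vfs}, dominate the $V_2$-term through Lemma \ref{lice} by $\E[\exp(-A_2\cR^\xi_T)\,|\,\F_\tau]$ (integrable thanks to Assumption \ref{ar}), and then choose $N$ by absolute continuity of the integral. Your splitting into positive and negative parts and the explicit tail argument are only presentational refinements of the paper's argument.
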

\begin{proof}
We first prove that
\begin{equation}
\E\big[|V_2(T-\tau, X^\xi_\tau, \cR^\xi_\tau)|\big]<\infty\label{cb},
\end{equation}
where we  have $|V_2(T,X_0,R_0)|=\inf_{\zeta\in\dot{\X}(T,X_0)} \E\big[\exp(-A_2\cR_T^\zeta)\big]$. This is a direct consequence of Lemma \ref{lice}. Indeed, we can write
\begin{align*}
\E\big[|V_2(T-\tau, X_\tau^\xi, \cR_\tau^\xi)|\big]&\leq \E\big[\E\big[\exp(-A_2\cR_T^\xi)|\F_\tau\big]\big]\\
&=\E\big[\exp(-A_2\cR_T^\xi)\big]\\
&<\infty.
\end{align*}
Here, the first inequality is due to \eqref{mp}, and the  last one follows from the fact that $\xi\in\dot{\X}^1_{2A_2}(T,X_0)$. Thus \eqref{cb} follows, and hence, there exists  $N\in\N$ such that
\begin{equation*}
\E\Big[\big(|V_2(T-\tau,X^\xi_\tau,\cR^\xi_\tau)|+1/A_1\big)\b1_{\big\{\left|X^\xi_\tau\right|\vee\left|\cR^\xi_\tau\right|> N\big\}}\Big]\leq\e.
\end{equation*}
Using 
\[
|V(T,X_0,R_0)|\leq |V_2(T,X_0,R_0)|+1/A_1, \quad(T,X_0,R_0)\in\;]0,\infty[\times\R^d\times\R,
\]
which is due to \eqref{vfs}, we infer \eqref{en}.
\end{proof}
We can now state and prove the following fundamental theorem of this subsection.
\begin{Theo}[Existence of the $\e$-maximizers on a bounded region]\label{em}
With the notations of Proposition~\ref{eiv}, Lemma \ref{ce} and Lemma \ref{eN}, there exists a progressively measurable process $\widetilde{\xi}^.=\widetilde{\xi}^{.,\tau,\e}\in\dot{\X}^1_{2A_2}(T-\tau(.), X_\tau^\xi(.))$ such that for $\mathrm{P}$-a.e.      $\om\in\big\{\big|X^\xi_\tau\big|\wedge\big|\cR^\xi_\tau\big|\leq N\big\}$,
\begin{equation}
V\big(T-\tau(\om),X^\xi_\tau(\om),\cR^\xi_\tau(\om)\big)\leq \E\Big[u\Big(\cR^\xi_\tau(\om) +\cR^{\widetilde{\xi}^{\om,\tau,\e}}_{\tau(\om),T}\Big)\Big]+\e. \label{emi}
\end{equation}
\end{Theo}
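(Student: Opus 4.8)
The plan is to build $\widetilde{\xi}^{\cdot}$ cell by cell in the random initial state and to glue the pieces measurably, using the continuity of $V$ (Theorem~\ref{cv}) together with the \emph{exact} optimizers furnished by Theorem~\ref{eos} in place of a measurable selection. Write $Z(\om):=\big(T-\tau(\om),X^\xi_\tau(\om),\cR^\xi_\tau(\om)\big)$, which is $\F_\tau$-measurable; on the bounded region $G:=\{|X^\xi_\tau|\leq N,\ |\cR^\xi_\tau|\leq N\}$ singled out by Lemma~\ref{eN} it takes values in $]0,T]\times\{x\in\R^d:|x|\leq N\}\times[-N,N]$. First I would cover this set by countably many Borel cells $(Q_j)_j$ of small diameter, each contained in a compact slab $[\delta_{k+1},\delta_k]\times\{|x|\leq N\}\times[-N,N]$ on which $V$ is uniformly continuous; this keeps every cell inside the open domain $]0,\infty[\times\R^d\times\R$ where Theorem~\ref{cv} holds, so the singular time $0$ is never touched. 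In each $Q_j$ I fix a center $z_j=(s_j,y_j,\rho_j)$ and, by Theorem~\ref{eos}, an exact optimizer $\eta^{(j)}\in\dot{\X}^1_{2A_2}(s_j,y_j)$ with $\E\big[u\big(\rho_j+\cR^{\eta^{(j)}}_{s_j}\big)\big]=V(z_j)$.

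The technical core is to transport $\eta^{(j)}$ from $z_j$ to a nearby target $z=(s,y,\rho)\in Q_j$. I would use the affine time–space reparametrisation
\[
\eta^{(j)}_z(t):=\tfrac{s_j}{s}\,A_{y_j\to y}\,\eta^{(j)}\!\big(\tfrac{s_j}{s}\,t\big),\qquad t\in[0,s],
\]
where $A_{y_j\to y}$ is a linear map depending continuously on $(y_j,y)$ with $A_{y_j\to y}y_j=y$ (the trivial choice when $y_j=0$). A change of variables gives $\int_0^s\eta^{(j)}_z(t)\,dt=A_{y_j\to y}y_j=y$, so $\eta^{(j)}_z$ liquidates $y$ over the horizon $s$. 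As $z\to z_j$ the reparametrisation tends to the identity, so $\eta^{(j)}_z\to\eta^{(j)}$ in $\J$ and $\cR^{\eta^{(j)}_z}_s\to\cR^{\eta^{(j)}}_{s_j}$ in probability (the stochastic integral term converging as in Lemma~\ref{xsc}), while the polynomial growth of $f$ (Assumption~\ref{afp}) controls the rescaled cost $\int_0^s f(-\eta^{(j)}_z)\,dt$ and, with Assumption~\ref{ar}, yields a uniform $\L^2$-bound on $\exp\big(-2A_2\cR^{\eta^{(j)}_z}_s\big)$. Lemma~\ref{crp} then gives $\E\big[u\big(\rho+\cR^{\eta^{(j)}_z}_s\big)\big]\to V(z_j)$, and combined with the uniform continuity of $V$ on the slab this lets me shrink the cells so that
\[
\E\big[u\big(\rho+\cR^{\eta^{(j)}_z}_s\big)\big]\geq V(z)-\e\qquad\text{for all }z\in Q_j.
\]

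Finally I would assemble the global process. On $\{Z\in Q_j\}\cap G$ set $\widetilde{\xi}^{\om}$ equal to $\eta^{(j)}_{Z(\om)}$, started at time $\tau(\om)$ and driven by the post-$\tau$ increments of the Brownian motion; on the canonical Wiener space this shift is exactly the map $\phi_\om$ of Lemma~\ref{ce}, so $\cR^{\widetilde{\xi}^{\om}}_{\tau(\om),T}$ is precisely the revenue of $\eta^{(j)}_{Z(\om)}$ over its own horizon. Because $Z$ is $\F_\tau$-measurable, the cells are Borel, and $z\mapsto\eta^{(j)}_z$ is jointly measurable in the state and the path, the resulting $\widetilde{\xi}^{\cdot}$ is progressively measurable and lies in $\dot{\X}^1_{2A_2}(T-\tau(\cdot),X^\xi_\tau(\cdot))$. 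Evaluating the last display at $z=Z(\om)$ and using Lemma~\ref{ce} to rewrite $\E\big[u\big(\cR^\xi_\tau(\om)+\cR^{\widetilde{\xi}^{\om}}_{\tau(\om),T}\big)\big]$ then yields \eqref{emi} for $\P$-a.e.\ $\om\in G$.

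The hardest part is precisely this transport-plus-gluing step: one must simultaneously keep every reparametrised strategy in $\dot{\X}^1_{2A_2}$ with moment bounds that are uniform over each cell (which is where Assumption~\ref{afp} enters), make the expected utility depend continuously on $z$ so that the countable partition delivers a single $\e$, and verify that the $\om$-dependent assembly is genuinely progressively measurable — a point that is clean only because we work on the canonical Wiener space and may invoke the shift $\phi_\om$ of Lemma~\ref{ce}. It is this last identification, resting on the continuity of $V$, that replaces the measurable selection theorem used in the classical treatments.
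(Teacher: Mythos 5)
Your overall architecture (countable partition of the bounded region, exact optimizers from Theorem \ref{eos} at reference points, continuity of $V$ instead of measurable selection, gluing via $\phi_\om$ on Wiener space) matches the spirit of the paper, but the technical core of your argument --- the transport step --- contains a genuine gap that the paper's proof is specifically designed to avoid. Your reparametrised strategy $\eta^{(j)}_z(t)=\tfrac{s_j}{s}A_{y_j\to y}\,\eta^{(j)}(\tfrac{s_j}{s}t)$ is in general \emph{not adapted}: the optimal strategies here are genuinely stochastic (not deterministic, since $u$ is not CARA), and whenever the target horizon $s$ is shorter than the center horizon $s_j$, the value $\eta^{(j)}(\tfrac{s_j}{s}t)$ is $\F_{s_jt/s}$-measurable with $s_jt/s>t$, i.e., the transported process anticipates the Brownian motion. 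So $\eta^{(j)}_z$ need not lie in $\dot{\X}^1_{2A_2}(s,y)$ at all, and Lemma \ref{xsc} (which concerns weakly convergent strategies on a \emph{fixed} horizon) does not apply to these time-changed processes. This is exactly the obstruction the paper flags before Lemma \ref{crp}: accelerating a strategy to fit a shorter horizon ``cannot be useful to prove the result, since we are then facing measurability issues.'' Two further problems: no linear map $A_{y_j\to y}$ with $A_{y_j\to y}y_j=y$ exists when $y_j=0\neq y$, so cells meeting the hyperplane $\{y_j=0\}$ are not handled; and your ``uniform $\L^2$-bound'' on $\exp(-2A_2\cR^{\eta^{(j)}_z}_s)$ and on the rescaled cost is not justified --- Assumption \ref{afp} gives only an \emph{upper} polynomial bound on $f$, so $\E\int|\eta^{(j)}|^p\,dt$ cannot be dominated by $\E\int f(-\eta^{(j)})\,dt$, and convexity controls $f(-c\,\eta)$ by $f(-\eta)$ only for $c\le 1$.

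The paper circumvents all of this by never transporting a strategy between different $(\text{time},\text{position})$ states. It first discretizes the \emph{data}: $\xi$ is taken simple (countably valued) and $\tau$ discrete, so that $(T-\tau,X^\xi_\tau)$ takes countably many values \emph{exactly}; only the revenue coordinate is rounded to a grid, and the revenue affects feasibility of no strategy --- it enters only as an additive shift inside $u$, which uniform continuity of $u$ and $V$ on the compact $C_N$ absorbs. Thus the exact optimizer $\xi^{*,\gamma_N(\om)}$ is used at a state whose time and position components are the true ones, and measurability of the assembly is automatic because everything is countable. The passage to general $\xi$ and $\tau$ is then done by $\L^p$-approximation of $\xi$ (this is where Assumption \ref{afp} actually enters), approximation of $\tau$ by discrete stopping times from above, weak sequential compactness of $\overline{\cK}_{m^\e}$, and the upper semi-continuity furnished by Corollary \ref{lscc}, together with continuity of $V$. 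If you want to rescue your scheme, you would have to replace the time-scaling transport by something that preserves adaptedness --- which, in effect, forces you back to the paper's strategy of keeping the time-and-position coordinate exact and only perturbing the revenue.
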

\begin{proof}
The proof of this result is split in several steps. Let us first consider a simple process $\xi$  which is allowed to take only countably many values and a discrete stopping time $\tau$. The existence of the $\e$-maximizers is  easier to prove in this case, because we are not facing any measurability problems.

In the second step, we consider an arbitrary process $\xi\in\dot{\X}^1_{2A_2}(T,X_0)$ and a stopping time $\tau$ taking values in $[0,T[$. The process $\xi$ can then be approximated by simple processes as in the first step,
 with respect to the  topology of the $\L^p$-norm, where $p$ has to be chosen such that $f(x)\leq C(1+|x|^p)$ (see Assumption \ref{afp}).
 
 In the third step, we show by compactness arguments that the corresponding sequence of $\e$-maximizers (as obtained in the first step) converges weakly to a process $\xi^{\tau,\e}$.

In the last step, we show that $\xi^{\tau,\e}$ is the $\e$-maximizer we were looking for. \\
As observed in  Remark  \ref{km}, we will use the fact that a process $\xi\in\dot{\X}^1_{2A_2}(T,X_0)$ lies, in particular, in the set $\overline{K}_m (T,X_0)$ for a constant $m>0$, with
\[
\overline{K}_{m}(T,X_0)=\Big\{\xi\in \dot{\X}^1(T,X_0)\big|\; \E\bigg[\int^T_0f(-\xi_t)\,dt\bigg]\leq m\Big\}.
\]
\\
\underline{First step:}  Let $\e>0$. For $L\in\N$ and $i\in\{0,\dots,2^{L}\}$, define
\[
t_i=i \frac{T}{2^{L}},
\]
and $\xi\in\dot{\X}^1_{2A_2}(T,X_0)$ as follows:
\begin{equation}
\xi_t(\om)=\sum_{i=1}^{2^{L}}\xi_i(\om)\b1_{[t_i,t_{i+1}[}(t),
\end{equation}
where $\xi_i$ takes values in the set $\{z_{i,p}\,|\,p\in\N, z_{i,p}\in\R^d\}$.
Moreover, let $\tau$ be a stopping time taking values in the set $\{t_0,t_1,...,t_{2^L}\}$, and set $\Om_{i,p_i}:=\{\xi_i=z_{i,p_i}\}$, $\Gamma_j:=\{\tau=t_j\}$. Note that $\Gamma_j$ and $\Om_{i,p_i}$ can be empty. For every $t\in [0,T]$, we have
\begin{equation}
X^\xi_t=X_0-\sum_{i=1}^{k-1} \xi_i (t_{i+1}-t_i) - \xi_{k}(t-t_{k}),
\end{equation}
where $k$ is such that $t\in [t_k,t_{k+1}[$.  We can therefore write for every $\om\in\bigcap_{i=1}^q\Om_{i,p_i}\cap \Gamma_q$,
\begin{equation}
X^\xi_{\tau}(\om)=X_0-\sum_{i=1}^{q-1} z_{i,p_i} (t_{i+1}-t_i).
\end{equation} Because $V$ and $u$ are continuous (see Theorem \ref{cv}), $V$ is uniformly continuous on $C_N:=[t_1,T]\times \overline{B}(0,N)\times[-N,N]$ (where $\overline{B}(0,N)$ denotes the $d$-dimensional euclidian closed ball with radius $N$), and $u$ is uniformly continuous on $[-N,N]$. Therefore, we can find $\delta_N$ such that for every $t^i,x^i,r^i, i=1,2$, we have
 \[
|(t^1-t^2,x^1-x^2,r^1-r^2)|<\delta_N\Rightarrow |V(t^1,x^1,r^1)-V(t^2,x^2,r^2)|\vee |u(r^1)-u(r^2)|<\e.
\]
Further, take  $L\in\N$  such that 
\[
\frac{N}{2^L}<\delta_N,
\]
and introduce
\[
\G:=\big\{ ((1,p_1),\dots,(q,p_q))| q\in\{0,\dots,2^L\}, p_1,\dots,p_q\in\N\big\}.
\]
Setting
\begin{align*}
  r_j:=-N+\frac{jN}{2^{L}}, & \quad x_g:=X_0-\sum_{i=1}^{q-1} z_{i,p_i} (t_{i+1}-t_i),\\
  j\in\{1,...,2^{L+1}\},&\quad g\in\G, \text{ with }g=((1,p_1),\dots,(q,p_q)),
\end{align*} 
we can now define the following grid:
\[
\Gamma_N=\Big\{(t_i,x_g,r_l) | i\in\{0,...,2^{L}\},j\in\{0,...,2^{L+1}\}, g\in\G\Big\}\cap C_N.
\]
When
 \[
\Big(\tau(\om),X^\xi_\tau(\om), \cR^\xi_\tau(\om)\Big) \in \{t_i\}\times \{x_g\}\times[r_l,r_{l+1}[\,\cap\, C_N,
\]
we set
\[
\gamma_N( \om):=(T-t_{i},x_g,r_l).
\]
Note that $\gamma_N$ is $\F_\tau$-measurable. Let us denote by $\xi^{*,\gamma_N(\om)}$  the optimal strategy associated to $V(\gamma_N(\om))$ (which exists, due to Theorem \eqref{eos}).  Then,  the process $\xi^{*,\gamma_N(\om)}$ is well-defined for every $\om\in \big\{\big|X^\xi_\tau\big|\wedge\big|\cR^\xi_\tau\big|\leq N\big\}.$ Moreover, it belongs to the set $\dot{\X}^1_{2A_2}(T-t_{i}, x_g)=\dot{\X}^1_{2A_2}(T-\tau(\om),X_\tau^\xi(\om)).$ (Note that if $\tau(\om)=T$ and $x_g=0$, then $\gamma_N(\om)=(0,0,r_l)$, for some $r_l,$ which implies that $V(\gamma_N(\om))=u(r_l)$, and therefore $\xi^{*,\gamma_N(\om)}=0$ is well-defined in this case, too.) Furthermore, we have by construction
\begin{equation}
V(T-t_{i},x_g,r_l)= \E\Big[u\Big(r_l +\cR^{{\xi}^{*,\gamma_N(\om)}}_{\tau(\om),T}\Big)\Big],
\end{equation}
hence we obtain on $\big\{\big|X^\xi_\tau\big|\wedge\big|\cR^\xi_\tau\big|\leq N\big\}$:
\begin{eqnarray*}
\lefteqn{\Big|V(T-\tau(\om), X^\xi_{\tau}(\om),\cR^\xi_{\tau}(\om))-\E\Big[u\Big(\cR^\xi_\tau(\om) +\cR^{{\xi}^{*,\gamma_N(\om)}}_{\tau,T}(\om)\Big)\Big]\Big|}&\\
&&\leq\Big|V(T-\tau(\om), X^\xi_{\tau}(\om),\cR^\xi_{\tau}(\om))-V(\gamma_N(\om))\Big|\\
&&+\Big|V(\gamma_N(\om))-\E\Big[u\Big(\cR^\xi_\tau (\om)+\cR^{{\xi}^{*,\gamma_N(\om)}}_{\tau(\om),T}\Big)\Big]\Big|\\
&&=\Big|V(T-t_i,x_g,\cR^\xi_{\tau}(\om))-V(T-t_{i},x_g,r_l)\Big|\\
&&+\Big|\E\Big[u\Big(r_l +\cR^{{\xi}^{*,\gamma_N(\om)}}_{\tau(\om),T}\Big)\Big]-u\Big(\cR^\xi_\tau(\om) +\cR^{{\xi}^{*,\gamma_N(\om)}}_{\tau(\om),T}\Big)\Big]\Big|\\
&&\leq \e+\e\\
&&=2\e,
\end{eqnarray*}
due to the uniform continuity of $V$ and of $u$.
Thus, we have found  a process $\xi^{*,\gamma_N(.)}=\widetilde{\xi}^{.,\tau,\e}\in\dot{\X}^1_{2A_2}(T-\tau(.), X_\tau^\xi(.))$ such that \eqref{emi} holds for  every  $\om\in \big\{\big|X^\xi_\tau\big|\wedge\big|\cR^\xi_\tau\big|\leq N\big\}$.
 Moreover,  $$\widetilde{\xi}^{.,\tau,\e}\in\overline{K}_{m^{\e}}(T-\tau(.), X_\tau^\xi(.)),$$ where $m^{\e}$ has to be chosen as in \eqref{m&M}. \\
\\
\underline{Second step:} Let $\xi$ and $\tau$ be arbitrary. We can find a sequence of processes $\xi^k$ as in the first step such that $\xi^k$ converges to $\xi$ in $\L^p$, i.e., 
\[
\E\bigg[\int^T_0\big|\xi^{k}_t-\xi_t\big|^p\,dt\bigg]\longrightarrow 0,
\]
where $p$ is chosen according to Assumption \ref{afp}. Moreover, this sequence of processes may be chosen to lie in $\dot{\X}^1_{2A_2}(T,X_0)$, as argued in Assumption \ref{ar}. We will prove that
\begin{equation}
\cR^{\xi^k}_T\underset{k\rightarrow\infty}{\longrightarrow} \cR^{\xi}_T \quad\text{in probability}. \label{rkc}
\end{equation}
Due to Lemma \ref{xsc}, we have that 
\[
\int_t^T (X^{\xi^k}_s)^\top\sigma\,dB_s\underset{k\rightarrow \infty}{\longrightarrow }\int_t^T (X^{\xi}_s)^\top\sigma\,dB_s\quad \pas
\]
We have moreover, as a direct consequence of the $\L^p$ convergence of $\xi^k$ to $\xi$,
\[
 \int^T_t b \cdot X^{\xi^k}_s\,ds\underset{k\rightarrow \infty}{\longrightarrow } \int^T_t b \cdot X^{\xi}_s\,ds \quad \pas
\]
and
\[
\int^T_tf(-\xi^k_s)\,ds\underset{k\rightarrow \infty}{\longrightarrow }\int^T_tf(-\xi_s)\,ds\quad \text{in } \J
\]
(due to the growth condition imposed on $f$ in Assumption \ref{afp}), and hence in probability. This establishes \eqref{rkc}. \\
\\
\underline{Third step:} We can find a sequence of stopping times $(\tau_k)$ (with values in $[0,T[$) as in the first step such that $\tau_k\downarrow \tau\! \q$ As can be seen in the first step above, for each $k\in\N$, we can find $\widetilde{\xi}^{.,\tau_k,\e} \in\overline{K}_{m^{\e}}(T-\tau_k(.), X^{\xi^k}_{\tau_k}(.))$ such that 
\begin{equation}
V\big(T-\tau_k(\om),X^{\xi^k}_{\tau_k}(\om),\cR^{\xi^k}_{\tau_k}(\om)\big)\leq \E\Big[u\Big(\cR^{\xi^k}_{\tau_k}(\om) +\cR^{\widetilde{\xi}^{\om,\tau_k,\e}}_{\tau_k(\om),T}\Big)\Big]+\e \label{emik}
\end{equation}
for P-a.e  $\om\in \big\{\big|X^{\xi^k}_{\tau_k}\big|\wedge\big|\cR^{\xi^k}_{\tau_k}\big|\leq N\big\}$. Moreover,  we have
that $\widetilde{\xi}^{.,\tau_k,\e}\in\overline{\cK}_{m^{\e}}$, with 
\[
\overline{\cK}_{m^{\e}}= \Big\{\xi\in \overline{\C}\big(\dot{\X}^1_{2A_2}(T-\tau_k(.),X^\xi_{\tau_k}(.))\big)_k\big|\; \E\bigg[\int^T_{\tau(.)}f(-\xi_t)\,dt\bigg]\leq m^{\e}\Big\},
\]
where $  \overline{\C}(\dot{\X}^1_{2A_2}(T-\tau_k(.),X^\xi_{\tau_k}(.)))_k$ denotes the closed convex hull of the sequence of sets  $\big(\dot{\X}^1_{2A_2}(T-\tau_k(.),X^\xi_{\tau_k}(.))\big)_k$. Recall that we set here 
\[
\zeta_t=0 \;\text{for}\;  t\in [\tau(.),\tau_k(.)] \; \text{when}\; \zeta\in\dot{\X}^1_{2A_2}(T-\tau_k(.),X^\xi_{\tau_k}(.)),
\]
since $\tau(.)\leq \tau_k(.),\q\;\;$  \\
 Because $\overline{\cK}_{m^{\e}}$ is weakly sequentially compact, as proved in Proposition \ref{usc}, there exists $\widetilde{\xi}^{\tau,\e}\in\overline{\cK}_{m^{\e}}$ such that by passing to a subsequence if necessary, $\widetilde{\xi}^{k,\tau_k,\e}$ converges to $\widetilde{\xi}^{\tau,\e}$ weakly in $\J$. Using now Lemma \ref{wcx}, we have that $\widetilde{\xi}^{\tau,\e}\in\overline{\cK}_{m^{\e}}\q$ on $\{|X^{\xi^k}_{\tau_k}|\wedge|\cR^{\xi^k}_{\tau_k}|\leq N\}.$  
\\
\\
\underline{Last step:}
Notice first that we have
\begin{equation}
\limsup_k \E\Big[u\Big(\cR^{\xi^k}_{\tau_k}(\om) +\cR^{\widetilde{\xi}^{\om,\tau_k,\e}}_{\tau_k(\om),T}\Big)\Big]\leq \E\Big[u\Big(\cR^{\xi}_{\tau}(\om) +\cR^{\widetilde{\xi}^{\om,\tau,\e}}_{\tau(\om),T}\Big)\Big]  \label{lsem}
\end{equation}
for P-a.e $ \om\in \big\{\big|X^{\xi^k}_{\tau_k}\big|\wedge\big|\cR^{\xi^k}_{\tau_k}\big|\leq N\big\}$. Indeed, similarly to how it was established for $\xi\longmapsto \E\Big[u\big(\cR^\xi_T\big)\Big]$, we can prove that   $(r,\eta)\mapsto   \E\Big[u\big(r+\cR^\eta_{t,T}\big)\Big]$  is concave and thus we can apply Corollary \ref{lscc}, which proves \eqref{lsem}. (Note that we cannot simply apply Fatou's lemma to prove \eqref{lsem}, since it is not known whether or not
\[
\limsup_k u\Big(\cR^{\xi^k}_{\tau_k}(\om) +\cR^{\widetilde{\xi}^{\om,\tau_k,\e}}_{\tau_k(\om),T}\Big)\leq u\Big(\cR^{\xi}_{\tau}(\om) +\cR^{\widetilde{\xi}^{\om,\tau,\e}}_{\tau(\om),T}\Big), 
\]
because we only have a weak convergence of $\widetilde{\xi}^{\om,\tau_k,\e}$ to $\widetilde{\xi}^{\tau,\e}.$) Going back to \eqref{emik} and passing to the limit superior on both sides of the inequality, we finally get for P-a.e. $\om\in \{|X^{\xi}_{\tau}|\wedge|\cR^{\xi}_{\tau}|\leq N\}$,
\begin{align*}
 V\big(T-\tau(\om), X^{\xi}_{\tau}(\om), \cR^{\xi}_{\tau}(\om)\big)&=\limsup_k V\big(T-\tau_k(\om),X^{\xi^k}_{\tau_k}(\om),\cR^{\xi^k}_{\tau_k}(\om)\big)\\
&\leq\limsup_k \E\Big[u\Big(\cR^{\xi^k}_{\tau_k}(\om) +\cR^{\widetilde{\xi}^{\om,\tau_k,\e}}_{\tau_k(\om),T}\Big)\Big]+\e\\
&\leq \E\Big[u\Big(\cR^{\xi}_{\tau}(\om) +\cR^{\widetilde{\xi}^{\om,\tau,\e}}_{\tau(\om),T}\Big)\Big]+\e, 
\end{align*}
where the first equality is due to the continuity of $V$ in its arguments. This shows \eqref{emi}.
\end{proof}
We can now turn to proving Proposition \ref{eiv}
\begin{proof}[Proof of Proposition \ref{eiv}]
 Lemma \ref{eN} and Theorem \ref{em} imply for $\xi\in\dot{\X}^1_{2A_2}(T,X_0)$:
\begin{eqnarray*}
\lefteqn{\E[V(T-\tau, X^\xi_\tau, \cR^\xi_\tau)]}\\
&&=\E\Big[V(T-\tau, X^\xi_\tau, \cR^\xi_\tau)\b1_{\big\{\left|X^\xi_\tau\right|\vee\left|\cR^\xi_\tau\right|> N\big\}}\Big]+\E\Big[V(T-\tau, X^\xi_\tau, \cR^\xi_\tau)\b1_{\big\{\left|X^\xi_\tau\right|\wedge\left|\cR^\xi_\tau\right|\leq N\big\}}\Big]\\
&&\leq \e +  \int_\Om \E\Big[u\Big(\cR^\xi_\tau +\cR^{\widetilde{\xi}^{\om,\tau,\e}}_{\tau,T}\Big)\Big|\F_{\tau}\Big](\om)\P(d\om)+\e\\
&&=2\e+ \int_\Om\E\Big[u\big(\cR_{\tau}^{\xi}(\om)+\cR_{\tau(\om),T}^{\widetilde{\xi}^{\om,\tau,\e}}\big)\Big]\P(d\om)\\
&&= 2\e + \E\Big[u\Big(\cR^{\xi^{\tau,\e}}_T\Big)\Big]\\
&&\leq 2\e +V(T,X_0,R_0),
\end{eqnarray*}
due to Lemma \ref{ce}, whereby
the process $\xi^{\tau,\e}$ is defined as
\begin{equation*}
\xi^{\tau,\e}_t(\om)=\begin{cases}
   \xi_t(\om) & \text{ for } t\in [0,\tau(\om)]\\
 \widetilde{\xi}^{\om,\tau,\e}_t(\om) & \text{ for } t\in [\tau(\om),T],
 \end{cases}
\end{equation*}
 and the definition of $V(T,X_0,R_0)$.
   \end{proof}
In Proposition \ref{eiv} we have proved the inequality  $"\geq"$ of equation \eqref{ebp}. Now it remains to prove the reverse inequality. To this end, we need the following proposition, which uses the notion of the essential supremum of a set $\Phi$ of random variables, denoted by $\ess_\Phi$.
\begin{Prop}\label{esp}
With the  notations of Lemma \ref{ce}, we have 
\begin{equation}
   V\Big(T-\tau(\om), X_{\tau}^\xi(\om), \cR^{\xi}_{\tau}(\om)\Big)=\ess_{\xi^{\om}\in\dot{\X}^1_{2A_2}(T-\tau(\om), X_{\tau}^\xi(\om))}\E\left[u(\cR_{\tau}^\xi +\cR_{\tau, T}^{\xi^{\om}})|\F_{\tau}\right] (\om) \label{es}
\end{equation}
for $\pae\;\om\text{ on }\big\{\big|X^\xi_\tau\big|\wedge\big|\cR^\xi_\tau\big|\leq N\big\}$.
 \end{Prop}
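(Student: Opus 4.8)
The plan is to prove the identity \eqref{es} by establishing the two inequalities separately, on the set $\big\{\big|X^\xi_\tau\big|\wedge\big|\cR^\xi_\tau\big|\leq N\big\}$. The inequality "$\geq$" amounts to showing that $V\big(T-\tau(\om),X^\xi_\tau(\om),\cR^\xi_\tau(\om)\big)$ is a $\P$-a.s.\ upper bound for the whole family of conditional expectations appearing on the right-hand side, so that it dominates their essential supremum; the inequality "$\leq$" will be extracted from the existence of $\e$-maximizers (Theorem \ref{em}) by letting $\e\downarrow0$. In both directions the bridge between conditional and unconditional expectations is the identity \eqref{ma} of Lemma \ref{ce}.

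For the inequality "$\geq$", I would fix an admissible restarted control $\xi^\om\in\dot{\X}^1_{2A_2}(T-\tau(\om),X^\xi_\tau(\om))$ and use \eqref{ma} to write
\[
\E\big[u\big(\cR_\tau^\xi+\cR_{\tau,T}^{\xi^\om}\big)\big|\F_\tau\big](\om)=\E\Big[u\big(\cR_\tau^\xi(\om)+\cR_{\tau(\om),T}^{\xi^\om}\big)\Big].
\]
Here $\cR_\tau^\xi(\om)$ is a fixed real number for fixed $\om$. Since $(\Om,\F,\P)$ is the canonical Wiener space, $\widetilde B_s:=B_{s+\tau(\om)}-B_{\tau(\om)}$ is again a Brownian motion, so the reparametrization $s\mapsto s-\tau(\om)$ turns $\cR_{\tau(\om),T}^{\xi^\om}$ into the revenues over $[0,T-\tau(\om)]$ of the time-shifted strategy starting from $X^\xi_\tau(\om)$ with initial capital $\cR_\tau^\xi(\om)$. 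This shifted strategy lies in $\dot{\X}^1(T-\tau(\om),X^\xi_\tau(\om))$ (its boundary and integrability conditions \eqref{ias} being inherited from $\xi^\om$ over $[\tau(\om),T]$), so by the very definition \eqref{omp} of $V$ as a supremum,
\[
\E\Big[u\big(\cR_\tau^\xi(\om)+\cR_{\tau(\om),T}^{\xi^\om}\big)\Big]\leq V\big(T-\tau(\om),X^\xi_\tau(\om),\cR^\xi_\tau(\om)\big).
\]
As this holds for every admissible $\xi^\om$, the value function dominates the whole family $\P$-a.s., whence it dominates its essential supremum.

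For the reverse inequality "$\leq$", I would invoke Theorem \ref{em}: for each $\e>0$ there is a progressively measurable process $\widetilde{\xi}^{\cdot,\tau,\e}\in\dot{\X}^1_{2A_2}(T-\tau(\cdot),X^\xi_\tau(\cdot))$ for which \eqref{emi} holds for $\P$-a.e.\ $\om$ in the bounded region. Rewriting the right-hand side of \eqref{emi} by means of \eqref{ma} yields, $\P$-a.s.\ on $\big\{\big|X^\xi_\tau\big|\wedge\big|\cR^\xi_\tau\big|\leq N\big\}$,
\[
V\big(T-\tau(\om),X^\xi_\tau(\om),\cR^\xi_\tau(\om)\big)\leq \E\big[u\big(\cR_\tau^\xi+\cR_{\tau,T}^{\widetilde{\xi}^{\cdot,\tau,\e}}\big)\big|\F_\tau\big](\om)+\e.
\]
Since $\widetilde{\xi}^{\cdot,\tau,\e}$ belongs to the admissible family over which the essential supremum in \eqref{es} is taken, its conditional expected utility is $\P$-a.s.\ at most that essential supremum, so
\[
V\big(T-\tau(\om),X^\xi_\tau(\om),\cR^\xi_\tau(\om)\big)\leq \ess_{\xi^\om}\E\big[u\big(\cR_\tau^\xi+\cR_{\tau,T}^{\xi^\om}\big)\big|\F_\tau\big](\om)+\e.
\]
Applying this with $\e=1/n$ and intersecting the corresponding full-measure sets over $n\in\N$ lets me send $\e\downarrow0$ and conclude "$\leq$"; combined with the first part this gives \eqref{es}.

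The main obstacle is not any single estimate but the careful bookkeeping of measurability: one must ensure that the restarted strategies are genuine progressively measurable admissible processes and that \eqref{ma} may legitimately be applied to each of them. This is precisely what forces the restriction to the canonical Wiener space and what Lemma \ref{ce} prepares; once \eqref{ma} and the $\e$-maximizers of Theorem \ref{em} are available, the remaining work is the routine two-sided comparison sketched above. A minor point still to check is that the time-shift argument in the "$\geq$" direction indeed maps $\dot{\X}^1_{2A_2}$-strategies over $[\tau(\om),T]$ to admissible strategies for the shorter horizon $T-\tau(\om)$, which follows from the stationarity of the increments of $B$ together with the form of the integrability requirement \eqref{ias}.
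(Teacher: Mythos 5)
Your proposal is correct and follows essentially the same route as the paper: the inequality $"\geq"$ by showing $V$ is a $\P$-a.s.\ upper bound for each conditional expectation via \eqref{ma} and the definition of the essential supremum, and the inequality $"\leq"$ by feeding the $\e$-maximizers of Theorem \ref{em} into the essential supremum and letting $\e\downarrow0$. The only differences are cosmetic: you spell out the time-shift argument behind the supremum representation of $V(T-\tau(\om),X^\xi_\tau(\om),\cR^\xi_\tau(\om))$, which the paper simply recalls, and you make the $\e=1/n$ intersection of full-measure sets explicit where the paper just says ``letting $\e$ go to $0$.''
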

\begin{proof}
We recall the $\pas$ equality fulfilled by $V(T-\tau, X^\xi_{\tau}, \cR_\tau^\xi)$,
\begin{equation*}
          V\left(T-\tau(\omega),X^\xi_{\tau}(\omega), \cR_\tau^\xi(\omega)\right)=\sup_{\xi^{\om}\in\dot{\X}^1_{2A_2}(T-\tau(\om),X^\xi_{\tau}(\omega))}\E\left[u\left(\cR^{\xi}_{\tau}(\omega)+\cR_{\tau,T}^{\xi^{\om}}(\omega)\right)\right]\q,
 \end{equation*}                                                                        
where $\xi^\om$ is defined as in Lemma \ref{ce}. Hence, this permits us to write
\begin{equation*}
V(T-\tau(\om), X^\xi_{\tau}(\om),\cR^{\xi}_{\tau}(\om))\geq\E\left[u\left(\cR_\tau^\xi+\cR_{\tau, T}^{\xi^{\om}}\right)\Big|\F_{\tau}\right](\om) \quad\text{P-a.s.~}
\end{equation*}
for all $\xi^{\om}\in\dot{\X}^1_{2A_2}(T-\tau(\om), X_{\tau}^\xi(\om))$. Using  the definition of the essential supremum (see, e.g., \citet{SF11},   Definition A.34), it follows then
\begin{equation}
 V(T-\tau(\om), X_{\tau}(\om),R^{X}_{\tau}(\om))\geq\ess_{\xi^{\om}\in\dot{\X}^1{2A_2}(T-\tau(\om), X_\tau^\xi(\om))}\E\left[u\left(\cR_\tau^\xi +\cR_{\tau, T}^{\xi^{\om}}\right)|\F_{\tau}\right](\om),
\label{famess}
\end{equation}
which proves the inequality $"\geq"$ of \eqref{es}. For the converse inequality, let $\widetilde{\xi}^{\om,\tau,\e}$ be as in Theorem \ref{em}. We have on $\{\big|X^\xi_\tau\big|\wedge\big|\cR^\xi_\tau\big|\leq N\big\}:$
\begin{equation*}
\E\left[u(\cR_\tau^\xi+\cR_{\tau, T}^{\widetilde{\xi}^{\om,\tau,\e}})|\F_{\tau}\right](\om)\geq V(T-\tau(\om), X^\xi_{\tau}(\om),\cR^{\xi}_{\tau}(\om))-\e\q
\end{equation*}
  And therefore
 \begin{equation*}
\ess_{\xi^\om\in\dot{\X}^1_{2A_2}(T-\tau(\om), X_\tau^\xi(\om))}\E\left[u(\cR_\tau^\xi+\cR_{\tau, T}^{\xi^\om})|\F_{\tau}\right](\om)\geq V(T-\tau(\om), X^\xi_{\tau}(\om),\cR^{\xi}_{\tau}(\om))-\e \q
\end{equation*}
 Letting $\e$ go to 0 gives us the required inequality.
\end{proof}
We can now prove Theorem~\ref{bp}.
\begin{proof}[Proof of Theorem \ref{bp}]
Thanks to Proposition \ref{eiv}, it remains to show only the inequality $"\leq"$ in  \eqref{ebp}. Let $\xi\in\dot{\X}^1_{2A_2}(T,X_{0})$ and set $\widetilde{\xi}_{s}=\xi_{\tau+t} \in\dot{\X}^1_{2A_2}(T-\tau, X^\xi_{\tau}) $ for $s\geq\tau$ and $t\geq0$. The definition of the essential supremum, in conjunction with  Proposition~\ref{esp} and Lemma \ref{eN}, yields
\begin{align*}
\E\Big[u\big(\cR^\xi_T\big)\Big]&=\E\Big[u\Big(\cR^\xi_\tau +\cR^{\widetilde{\xi}}_{\tau,T}\Big)\Big]\\
&=\E\left[\E\left[u(\cR_\tau^\xi+\cR_{\tau, T}^{\widetilde{\xi}})|\F_{\tau}\right]\right]\\
&=\E\left[\E\left[u(\cR_\tau^\xi+\cR_{\tau, T}^{\widetilde{\xi}})|\F_{\tau}\right]\Big(\b1_{\big\{\left|X^\xi_\tau\right|\vee\left|\cR^\xi_\tau\right|> N\big\}}+\b1_{\big\{\left|X^\xi_\tau\right|\wedge\left|\cR^\xi_\tau\right|\leq N\big\}}\Big)\right]\\
&\leq \e+\E\Big[V(T-\tau, X^\xi_\tau, \cR^\xi_\tau)\b1_{\big\{\left|X^\xi_\tau\right|\wedge\left|\cR^\xi_\tau\right|\leq N\big\}}\Big].
\end{align*}
Taking the supremum over $\xi$ 
and then sending $\e$ to zero (which implies sending $N$ to infinity), 
shows the assertion. 
\end{proof}

\bibliographystyle{plainnat}

\bibliography{A03n,AC01n,AL07,B11,B13,BB04,BJ02,BK04,BL98n,BN12n,BS91n,BT11n,CC14,CIL92n,CJ04,CST07,D74,DS88n,FShm06,FT11,HK04,HS04,J94,KK11,KL11,K70n,K08,K85,M66,O06n,P09n,PF03n,Pr04,R97n,RU78,S08,SB78,SSB08,SF11n,SS09,SST09n,SST10n,SS07,T04n,T12,Ta11,W13,W41,W80,W91,YR91,YZ99n,ZB04,ZCB12}

\end{document}